\documentclass[11pt,a4paper]{article}
\usepackage[a4paper,margin=1.04in,includefoot]{geometry}
\usepackage[T1]{fontenc}
\usepackage{lmodern}
\usepackage{iftex}
\ifPDFTeX
  \input{glyphtounicode}
\fi
\usepackage{amsmath,amssymb,amsfonts,amsthm,mathtools}
\usepackage{xcolor,booktabs,array,longtable,tabularx}
\usepackage[numbers,sort&compress]{natbib}
\usepackage[title]{appendix}
\usepackage{etoolbox}
\usepackage[hidelinks]{hyperref}
\theoremstyle{plain}
\newtheorem{theorem}{Theorem}
\newtheorem{proposition}[theorem]{Proposition}
\newtheorem{lemma}[theorem]{Lemma}
\newtheorem{corollary}[theorem]{Corollary}
\theoremstyle{definition}
\newtheorem{definition}[theorem]{Definition}
\theoremstyle{remark}
\newtheorem{remark}[theorem]{Remark}
\AtBeginEnvironment{appendices}{
  \setcounter{figure}{0}
  \setcounter{table}{0}
  \setcounter{equation}{0}

}

\newcommand{\F}{\mathbb F}
\newcommand{\cC}{\mathcal C}
\newcommand{\cG}{\mathcal G}
\newcommand{\cV}{\mathcal V}
\newcommand{\cB}{\mathcal B}
\newcommand{\cA}{\mathcal A}
\newcommand{\Span}{\operatorname{span}}
\newcommand{\rank}{\operatorname{rank}}
\newcommand{\supp}{\operatorname{supp}}
\newcommand{\wtN}{\operatorname{wt}_{\rm N}}
\newcommand{\wtCl}{\operatorname{wt}_{\rm cl}}
\newcommand{\St}{\operatorname{St}}
\newcommand{\Stedge}{\operatorname{St}_{\rm edge}}
\newcommand{\Cl}{\operatorname{Cl}}

\hypersetup{
 hidelinks,
 pdflang={en-US},
 pdftitle={Binary Multiple-Node-Erasure-Correcting Codes over Complete Graphs:
 Constructions, q-Ary Metric Balls, and Duality},
 pdfauthor={Aryeh Lev Zabokritskiy (Yohananov)}
}
\title{Binary Multiple-Node-Erasure-Correcting Codes over Complete Graphs:
Constructions, \(q\)-Ary Metric Balls, and Duality}
\author{Aryeh~Lev Zabokritskiy (Yohananov)\\[0.45em]
\small Department of Computer Science,\\
\small Tel-Hai University of Kiryat Shmona in the Galilee,\\
\small Kiryat Shmona, Israel\\[0.3em]
\small MIGAL--Galilee Research Institute, Kiryat Shmona, Israel\\[0.45em]
\small \href{mailto:yuhanalev@telhai.ac.il}{yuhanalev@telhai.ac.il}}
\date{}
\begin{document}
\maketitle
\begin{abstract}
We study linear codes whose coordinates are the ordinary edges and self-loops
of complete undirected graphs; a node erasure removes all coordinates incident
with a failed vertex.  The construction results are binary.  For triple-node
erasures, we extend the published cyclic construction by allowing a suitable
cyclic check slope to depend on the prime graph length.  An explicit
determinant test proves that one of three fixed slope choices works at
infinitely many prime lengths, unconditionally, and gives redundancy
\(3n-2\), one bit above the graph Singleton bound.  We also give
Singleton-optimal triple-node codes at \(n=6,8,10,12\), together with a
general ordinary-edge framework that isolates the remaining loop-completion
problem.  When \(2\) is primitive modulo an odd prime \(n\), a binary
multi-slope construction corrects every \(\rho\)-node erasure for
\(2\leq\rho<n\), with redundancy \(\rho n-(\rho-1)\) in the range
\(2\leq\rho\leq(n+1)/2\).  Returning to arbitrary prime powers, we derive
exact generating transforms and inclusion--exclusion formulas for node-metric
ball volumes, fixed-radius asymptotics, and packing, existence, and covering
bounds.  Finally, for the complementary clique-erasure metric, we obtain an
exact weight enumerator and a Singleton-optimal node--clique duality.
\end{abstract}

\medskip
\noindent\textbf{Keywords.} Graph codes, node erasures, binary constructions, metric balls,
covering bounds, duality

\section{Introduction}

Codes over graphs store symbols on the edges of a complete graph equipped
with one self-loop at each vertex, rather than on a linearly ordered sequence.
In the node-erasure model, failure of a vertex erases every edge incident with
it, including its self-loop.  The resulting distance between two graph words
is the minimum number of vertex neighborhoods whose modification transforms
one word into the other, equivalently the vertex-cover number of their support
difference.  This model was introduced in
\cite{YohananovYaakobi2019}; explicit optimal binary codes for two failures
and a nearly optimal construction for three failures were developed in
\cite{YohananovEfronYaakobi2020}.
Related graph-based coding models include codes over trees under the tree
distance \cite{YohananovYaakobi2021Trees} and storage codes on coset graphs
\cite{BargSchwartzYohananov2024}.
The original graph-erasure work \cite{YohananovYaakobi2019} also gives
Singleton-optimal \(\rho\)-node-erasure codes for
\(1\leq\rho<n\) over fields of size \(q\ge n-1\); the binary, small-alphabet
regime is the harder setting addressed here.

The paper studies one organizing question from three sides: what structure is
imposed by the coordinates erased by a fixed set of failed vertices?  In the
construction part, this becomes a local column-independence problem for a
parity-check matrix.  In the metric part, the same failed-star coordinate
sets become a subspace arrangement whose union is a node-metric ball.  Taking
their complementary clique coordinate sets then leads to the duality problem.
The construction core---the cyclic, multi-slope, and optimal-completion
sections together with their algebraic proofs---is binary.  Only after that
core do we return to arbitrary prime powers for the enumerators, ball volumes,
and general coding bounds.

The metric has recently reappeared in the study of asymptotically good
error-correcting graph codes.  Kopparty, Potukuchi, and Sha
\cite{KoppartyPotukuchiSha2025} proved, by a random-linear argument, the
optimal nonconstructive asymptotic tradeoff
\(R=(1-\delta)^2-o(1)\), where \(R\)
is the code rate and \(\delta\) is the relative node distance, and developed
symmetric tensor, concatenated, and dual-BCH-type constructions.  Their
binary simple-graph model omits self-loops and their principal regime has
distance growing linearly with \(n\).  The present paper instead focuses on
the low-redundancy regime of a fixed number of failures, retains the loop
coordinates, and studies the exact finite geometry of the metric balls.

Chen, Cheraghchi, and Shagrithaya \cite{ChenCheraghchiShagrithaya2025}
have subsequently developed strongly explicit asymptotic erasure codes on
non-bipartite graphs, equivalently symmetric zero-diagonal matrices.  Their
concatenation-based construction attains rate at least
\((1-\sqrt\delta)^4-o(1)\) for sufficiently large block length, with strong
explicitness and quasi-linear encoding and erasure decoding.  Together,
these two recent directions
show renewed interest in graph-supported codes, while leaving the exact
redundancy questions for a fixed number of failures largely open.

There is also a precise connection with the cover metric for crisscross
array errors.  In a general matrix that metric permits separate row and
column covers.  A node failure in a symmetric matrix couples row \(i\) to
column \(i\) and counts the pair once, so the present metric is a
diagonally coupled symmetric cover metric.  Recent work on multi-cover
codes develops Singleton bounds, duality, and constructions for lists of
matrices \cite{MartinezPenas2026}.  The coupled symmetric geometry studied
here has different ball intersections and a different complementary
duality.

For \(\rho=3\), the construction of
\cite{YohananovEfronYaakobi2020} has redundancy \(3n-2\), whereas the graph
Singleton bound is \(3n-3\).  Schmidt's binary symmetric rank-metric
construction, formulated through symmetric bilinear and quadratic forms
\cite{Schmidt2010,Schmidt2020}, has redundancy \(3n\) in the same comparison.
Thus the missing-bit problem
is genuinely graph-specific: the rank-metric construction excludes every
nonzero symmetric difference of rank at most six, while a triple-node code
only has to exclude the subfamily supported on three coordinate stars.

The published triple-node construction uses the fixed cyclic slope \(2\)
and assumes that \(2\) is primitive modulo the prime length.  Infinitude of
the admissible lengths under this fixed-slope hypothesis would follow from
the fixed-base \(g=2\) case of Artin's conjecture.  We separate the local
cyclic-minor condition from the primitive-root condition and allow the slope
to be chosen from \(\{2,-3,-15\}\).  This also gives concrete lengths not
covered by the slope-two construction: for example, \(2\) has order \(8\)
modulo \(17\), whereas \(-3\) has order \(16\).  A prime-subgroup arc lemma,
combined with Heath-Brown's three-integer primitive-root theorem, gives an
unconditional infinite set of admissible prime lengths.  The resulting codes
still have redundancy \(3n-2\), so this argument does not produce an infinite
Singleton-optimal family.  On the other hand, exact loop completions at
\(n=6,8,10,12\) attain the Singleton redundancy \(3n-3\).  Thus the extra bit
is not inherent in the model; what remains open is an infinite optimal family.

\subsection{Contributions}

\begin{enumerate}
\item \emph{Triple-node cyclic codes.}
For prime \(n\geq5\), an explicit determinant criterion characterizes
triple-node recovery for primitive cyclic slopes; allowing the slope to vary
gives binary codes of redundancy \(3n-2\) at unconditionally infinitely many
prime lengths.

\item \emph{Multi-slope codes.}
When \(2\) is primitive modulo an odd prime \(n\), our binary multi-slope
codes correct every \(\rho\)-node erasure for \(2\leq\rho<n\), with exact
redundancy \(\rho n-(\rho-1)\) for \(2\leq\rho\leq(n+1)/2\).

\item \emph{Optimal triple-node codes and loop completion.}
We construct explicit Singleton-optimal binary triple-node codes at
\(n=6,8,10,12\) and give a uniform ordinary-edge framework with an exact
criterion for completing the loop coordinates.

\item \emph{Node-metric balls and coding bounds.}
For arbitrary prime powers \(q\), we obtain exact node-weight and ball-volume
formulas; for fixed \(q\) and \(t\), the common radius-\(t\) ball volume is
\[
 \binom nt\,q^{\,tn-\binom t2}
 \left(1+O_{q,t}(nq^{-n})\right).
\]
The resulting packing, existence, and covering bounds include a
random-linear redundancy at most
\(\rho\log_q n+O_{q,\rho}(1)\) above Singleton for fixed \(\rho\).

\item \emph{Clique enumeration and duality.}
Over every finite field, we determine the ambient clique-weight enumerator
and show that Singleton-optimal \(\rho\)-node codes are dual to
Singleton-optimal \((n-\rho)\)-clique-erasure codes.
\end{enumerate}

The arithmetic hypotheses of the two cyclic results are different.  For
the fixed slope \(2\), and hence for the general multi-slope construction,
infinitude of admissible prime lengths is still predicted by Artin's
conjecture but is not known unconditionally.  For three erasures, however,
allowing a length-dependent choice from the fixed set \(\{2,-3,-15\}\)
lets us invoke Heath-Brown's theorem and proves infinitude
unconditionally.  The theorem does not identify which of the three slopes
is primitive for infinitely many primes, and it does not prove this for
each slope; nor does it extend the unconditional infinitude claim to
\(\rho>3\).

The remainder of the paper is organized as follows.
Section~\ref{sec:preliminaries} fixes the looped graph model, the node metric,
and the local rank criterion for correcting node erasures.
Section~\ref{sec:cyclic-slopes} specializes to the binary field and develops
the arbitrary-slope triple-node construction and its multi-slope
generalization.  Section~\ref{sec:optimal-triples}, also over the binary field,
constructs the uniform Moore edge skeleton,
reduces optimal loop completion to a common-quotient problem, and gives the
Singleton-optimal codes at \(n=6,8,10,12\).
Section~\ref{sec:enumerators} returns to arbitrary prime powers and derives the
node-weight enumerators and the exact and asymptotic ball-volume formulas, and
Section~\ref{sec:bounds} turns these volumes into packing, existence, and
covering bounds.
Section~\ref{sec:clique} treats the complementary clique metric, its exact
enumerator, and the node--clique duality, while
Section~\ref{sec:discussion} discusses the remaining construction and
enumeration problems.

\section{Definitions and Preliminaries}
\label{sec:preliminaries}

For a positive integer \(n\), let
\[
 [n]\triangleq\{0,1,\ldots,n-1\},
 \qquad
 V_n\triangleq\{v_0,v_1,\ldots,v_{n-1}\}.
\]
Let
\[
 E_n\triangleq\{\langle v_i,v_j\rangle:0\leq i\leq j<n\}
\]
be the set of all self-loops and ordinary unordered edges of the complete
graph on \(V_n\), and put
\[
 N\triangleq|E_n|=\binom{n+1}{2}.
\]
Throughout, \(q\) is a prime power.  The space of \(q\)-ary graph words is
\[
 \cG_q(n)\triangleq\F_q^{E_n}.
\]
For a graph word \(G\in\cG_q(n)\), let
\(e_{i,j}=e_{j,i}\in\F_q\) denote the label on
\(\langle v_i,v_j\rangle\).  Thus \(G\) can equivalently be displayed as the
symmetric labeling matrix
\[
 A_G\triangleq(e_{i,j})_{i,j\in[n]}.
\]
This is the labeling-matrix notation of
\cite{YohananovEfronYaakobi2020}, but the labeling function itself will not be
needed here.  Define
\[
\supp(G)
\triangleq
\{\langle v_i,v_j\rangle\in E_n:e_{i,j}\ne0\}.
\]
This set is the coordinate support of \(G\).  The associated support graph is
the looped undirected graph
\[
 \bigl(V_n,\supp(G)\bigr).
\]
Thus its vertex set is always all of \(V_n\), including isolated vertices,
and its edges are exactly the coordinates carrying nonzero labels.  Whenever
we speak of a vertex cover of a graph word, we mean a vertex cover of this
support graph.

\begin{definition}[Node weight and distance]
A vertex cover of \(G\) is an index set \(S\subseteq[n]\) such that
\(\{v_i:i\in S\}\) meets every edge of its support graph; if the loop at
\(v_i\) is nonzero, then \(i\in S\).  The node weight is
\[
 \wtN(G)\triangleq
 \min\{|S|:S\text{ is a vertex cover of }G\}.
\]
The node distance is
\[
 d_{\rm N}(G,G')\triangleq\wtN(G-G').
\]
\end{definition}

The function \(d_{\rm N}\) is a translation-invariant metric
\cite{YohananovYaakobi2019}.  For a code
\(\cC\subseteq\cG_q(n)\), write \(d_{\rm N}(\cC)\) for its minimum distance.
Throughout, the minimum distance of a code containing at most one word is
taken to be \(+\infty\).

For \(S\subseteq[n]\), define its star-coordinate set and induced
clique-coordinate set by
\[
 \St(S)
 \triangleq
 \{\langle v_i,v_j\rangle\in E_n:\{i,j\}\cap S\neq\varnothing\}
\]
and
\[
 \Cl(S)
 \triangleq
 \{\langle v_i,v_j\rangle\in E_n:i,j\in S\},
\]
respectively.  Loops are included in both definitions.  The star coordinates
describe node failures; the complementary clique coordinates will be used in
Section~\ref{sec:clique}.  To match the notation of the preceding graph-code
papers, for \(i\in[n]\) write
\[
 N_i\triangleq\St(\{i\}).
\]
Thus
\[
 \St(S)=\bigcup_{i\in S}N_i,
\]
and the erasure pattern generated by each failed vertex includes its loop
coordinate.  The node metric is also the combinatorial metric induced by the
coordinate cover
\(\{N_i:i\in[n]\}\) in the terminology of
\cite{PinheiroMachadoFirer2019}.  Let
\[
 \cV_S
 \triangleq
 \{G\in\cG_q(n):\supp(G)\subseteq\St(S)\}.
\]
Then
\[
 \dim_{\F_q}\cV_S
 =
 |\St(S)|
 =
 |S|n-\binom{|S|}{2}.
\]

The distance characterization is \cite[Thm.~6]{YohananovYaakobi2019}.
We restate it in the present coordinate notation, together with its equivalent
star-subspace form.

\begin{lemma}[Distance and node erasures]
Let \(0\leq\rho\leq n\) be an integer.
A linear code \(\cC\subseteq\cG_q(n)\) corrects every set of at most
\(\rho\) node erasures if and only if
\[
 d_{\rm N}(\cC)\geq \rho+1.
\]
Equivalently,
\[
 \cC\cap\cV_S=\{0\}
 \qquad
 \text{for every }S\in\binom{[n]}{\rho}.
\]
\end{lemma}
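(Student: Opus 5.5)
I will prove the lemma by chaining three equivalent conditions: erasure correction, the subspace condition \(\cC\cap\cV_S=\{0\}\), and the distance bound. Throughout I use that \(\cC\) is linear and that \(d_{\rm N}\) is translation invariant. Hence \(d_{\rm N}(\cC)\) equals the minimum node weight of a nonzero codeword, or \(+\infty\) if \(\cC=\{0\}\).

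\emph{Step 1: distance versus the subspace condition.} By the definition of node weight, a nonzero \(G\) satisfies \(\wtN(G)\leq\rho\) if and only if some \(S\subseteq[n]\) with \(|S|\leq\rho\) covers every support coordinate of \(G\). That is, \(\supp(G)\subseteq\St(S)\), or equivalently \(G\in\cV_S\). The loop clause in the definition of vertex cover matches the inclusion of loops in \(\St(S)\). Enlarging \(S\) only enlarges \(\St(S)\), so when \(\rho\leq n\) we may take \(|S|=\rho\) exactly. Therefore \(d_{\rm N}(\cC)\geq\rho+1\) if and only if no nonzero codeword lies in \(\cV_S\) for any \(S\in\binom{[n]}{\rho}\). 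The degenerate case \(\rho=0\) is immediate: \(\cV_\varnothing=\{0\}\), and any nonzero word has node weight at least \(1\).

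\emph{Step 2: erasure correction versus the subspace condition.} Erasing the nodes in \(S\) means the decoder sees exactly the coordinates outside \(\St(S)\). Two codewords \(c\neq c'\) are indistinguishable under this erasure exactly when they agree outside \(\St(S)\). By linearity this happens exactly when the nonzero word \(c-c'\) lies in \(\cC\cap\cV_S\). So unique recovery from every erasure of a set \(S\) with \(|S|\leq\rho\) is equivalent to \(\cC\cap\cV_S=\{0\}\) for all such \(S\). Since \(\cV_S\subseteq\cV_{S'}\) whenever \(S\subseteq S'\), it suffices to check sets of size exactly \(\rho\).

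\emph{Main obstacle.} The only delicate point is conventions, not mathematics. I must state precisely what "correcting" means: the erased coordinates form the set \(\St(S)\), including loops. I must also handle the edge cases \(\rho=0\), \(\rho=n\), and \(\cC=\{0\}\) (where \(d_{\rm N}=+\infty\)). Finally, the monotonicity reduction from "at most \(\rho\)" to "exactly \(\rho\)" is valid only because \(\rho\leq n\).
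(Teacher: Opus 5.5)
Your proof is correct. A vertex cover of size at most $\rho$ is the same as an inclusion $\supp(G)\subseteq\St(S)$ with loops included, and two codewords are indistinguishable after erasing the nodes in $S$ exactly when their difference is a nonzero element of $\cC\cap\cV_S$; monotonicity of $\St$ then reduces everything to $|S|=\rho$. The paper gives no separate proof and simply cites Theorem~6 of Yohananov--Yaakobi, but your argument is the standard one, and it uses the same mechanism (a codeword difference supported on the failed stars) that the paper uses for its erased-column criterion and the graph Singleton bound.
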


The preceding condition has a direct parity-check formulation.  We state it
explicitly because the columns are indexed by graph coordinates, including
the loops.

\begin{definition}[Parity-check matrix]
Let \(\cC\subseteq\cG_q(n)\) be a linear graph code.  A parity-check matrix
for \(\cC\) is a matrix
\[
 H=(h_e)_{e\in E_n}\in\F_q^{s\times N}
\]
such that \(\cC=\ker H\).  Thus each ordinary edge and each loop indexes one
column of \(H\).  If the rows of \(H\) are linearly independent, then
\(s=N-\dim\cC\).  For \(X\subseteq E_n\), write \(H|_X\) for the submatrix
formed by the columns indexed by \(X\).
\end{definition}

\begin{lemma}[Erased-column criterion]
\label{lem:erased-column-criterion}
Let \(H\) be a parity-check matrix for a linear graph code \(\cC\), and let
\(X\subseteq E_n\) be any set of erased coordinates.  The erasure of \(X\) is
uniquely correctable if and only if the columns of \(H|_X\) are linearly
independent.  In particular, the node-erasure pattern generated by
\(S\subseteq[n]\) is uniquely correctable if and only if the columns of
\(H|_{\St(S)}\) are linearly independent.  Consequently, \(\cC\) corrects
every set of at most \(\rho\) node erasures if and only if this holds for every
\(S\in\binom{[n]}{\rho}\).
\end{lemma}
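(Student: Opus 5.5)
The plan is to reduce unique correctability of an erasure pattern \(X\) to a statement about the kernel of \(H|_X\), and then specialize \(X\) to star-coordinate sets, using the preceding lemma on distance and node erasures. For a fixed \(X\subseteq E_n\), decoding means the following: given the unerased coordinates of a codeword \(c\in\cC\), recover \(c\). The erasure of \(X\) is uniquely correctable exactly when no two distinct codewords agree on \(E_n\setminus X\). By linearity, this is equivalent to the condition that the only codeword supported inside \(X\) is zero. Indeed, if \(c\neq c'\) agree outside \(X\), then \(c-c'\) is a nonzero codeword with \(\supp(c-c')\subseteq X\). Conversely, a nonzero codeword \(c\) supported in \(X\) agrees with \(0\) outside \(X\), so \(c\) and \(0\) cannot be distinguished.

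Next I translate this into the column condition. A vector \(x\in\F_q^{E_n}\) with \(\supp(x)\subseteq X\) lies in \(\cC=\ker H\) if and only if
\[
 \sum_{e\in X}x_e h_e=0 .
\]
Therefore a nonzero codeword supported in \(X\) exists precisely when there is a nontrivial linear dependence among the columns \(\{h_e:e\in X\}\). This holds if and only if the columns of \(H|_X\) are linearly dependent. Combined with the first paragraph, this proves the first sentence of the lemma.

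For the remaining two sentences I take \(X=\St(S)\). Since \(\St(S)\) includes the loops at the vertices of \(S\), the set of vectors supported in \(\St(S)\) is exactly \(\cV_S\). The criterion then reads: the pattern generated by \(S\) is correctable if and only if \(\cC\cap\cV_S=\{0\}\), if and only if the columns of \(H|_{\St(S)}\) are independent. The last claim quantifies over all \(|S|\leq\rho\). Because \(S\subseteq S'\) implies \(\St(S)\subseteq\St(S')\), independence of the columns indexed by \(\St(S')\) implies independence for \(\St(S)\). So it suffices to check \(|S|=\rho\), provided \(\rho\leq n\) so that such supersets exist. This matches the star-subspace form of the preceding lemma. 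The case \(\rho=0\) is trivial, since then \(\St(\varnothing)=\varnothing\).

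I see no real obstacle in this argument. The only point needing care is the monotonicity step that reduces ``at most \(\rho\)'' to ``exactly \(\rho\)'' failed vertices, and being explicit that loop coordinates belong to \(\St(S)\), so that \(\cV_S\) is exactly the set of vectors supported on the erased columns.
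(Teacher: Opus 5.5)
Your proof is correct and follows the paper's argument. A labeling of the erased coordinates, extended by zero, lies in \(\ker H\) exactly when \(H|_X x=0\), so a nonzero undetectable erased difference is the same as a column dependence in \(H|_X\). Your monotonicity step (\(\St(S)\subseteq\St(S')\)), which reduces ``at most \(\rho\)'' to exactly \(\rho\) failed vertices, spells out a detail the paper leaves implicit.
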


\begin{proof}
Let \(x=(x_e)_{e\in X}\) be a labeling of the erased coordinates, and
extend it by zero on all surviving coordinates.  The resulting graph word
belongs to \(\cC=\ker H\) exactly when \(H|_Xx=0\).  Hence a nonzero
erased difference is invisible to all checks exactly when the displayed
submatrix has a nonzero kernel, equivalently, when its columns are linearly
dependent.
\end{proof}

For an integer \(0\leq\rho\leq n\), if a linear code
\(\cC\subseteq\cG_q(n)\) corrects every set of at most \(\rho\) node erasures,
then puncturing to the surviving looped clique gives the graph Singleton bound
\[
 \dim\cC\leq\binom{n-\rho+1}{2}.
\]
Indeed, if two codewords have the same restriction to that surviving clique,
then their difference is supported on the stars of the failed vertices.
Unique correction forces that difference to be zero, so the puncturing map is
injective and its target has the displayed number of coordinates.
Equivalently, define the redundancy of a code \(\cC\) by
\[
 r(\cC)\triangleq N-\dim\cC.
\]
Define the graph Singleton redundancy by
\[
 r_{\rm Sing}(n,\rho)
 \triangleq
 \rho n-\binom{\rho}{2}.
\]
The graph Singleton bound is
\begin{equation}
 r(\cC)\geq r_{\rm Sing}(n,\rho).
\label{eq:singleton}
\end{equation}
A code correcting every set of at most \(\rho\) node erasures and meeting
\eqref{eq:singleton} with equality is called optimal.

The following useful closure property is the graph-coordinate specialization
of the usual shortening, or contraction, argument for a represented matroid.

\begin{proposition}[Vertex descent for optimal codes]
\label{prop:vertex-descent}
If there is a Singleton-optimal linear \(\rho\)-node-erasure code on \(n\)
vertices, where \(2\leq\rho\leq n\), then there is a Singleton-optimal
\((\rho-1)\)-node-erasure code on \(n-1\) vertices over the same field.
\end{proposition}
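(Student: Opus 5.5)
The plan is to shorten the code at the star of one vertex, here \(v_{n-1}\), and identify the shortened code with a graph code on the remaining \(n-1\) vertices. Let \(\cC\subseteq\cG_q(n)\) be a Singleton-optimal linear \(\rho\)-node-erasure code, so \(\dim\cC=\binom{n-\rho+1}{2}\). Put \(N_{n-1}=\St(\{n-1\})\), which has \(n\) coordinates, and define
\[
 \cC'\triangleq\{G|_{E_n\setminus N_{n-1}}:G\in\cC,\ \supp(G)\cap N_{n-1}=\varnothing\}.
\]
The coordinate set \(E_n\setminus N_{n-1}\) is exactly the looped complete graph on \(v_0,\ldots,v_{n-2}\), that is \(E_{n-1}\). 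Hence \(\cC'\subseteq\cG_q(n-1)\) is a linear graph code.

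\textbf{Step 1 (erasure correction).} I will show that \(\cC'\) corrects every set of at most \(\rho-1\) node erasures. Let \(G'\in\cC'\) be nonzero and suppose \(\supp(G')\subseteq\St'(S')\) for some \(S'\subseteq[n-1]\) with \(|S'|=\rho-1\). Its zero extension \(G\) is a nonzero word of \(\cC\), and \(\supp(G)\subseteq\St(S'\cup\{n-1\})\), a set of \(\rho\) vertices. This contradicts the erasure lemma, which forces \(\cC\cap\cV_S=\{0\}\). Equivalently, a vertex cover of \(G'\) of size \(\rho-1\) gives a vertex cover of \(G\) of size \(\rho\), so \(d_{\rm N}(\cC')\geq\rho\).

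\textbf{Step 2 (dimension).} I need \(\dim\cC'\geq\binom{n-\rho+1}{2}-n\). The shortening map \(G\mapsto G|_{E_n\setminus N_{n-1}}\) is injective on \(\{G\in\cC: G|_{N_{n-1}}=0\}\), and that subspace is the kernel of the restriction \(\cC\to\F_q^{N_{n-1}}\). The target has dimension \(n\), so the kernel has dimension at least \(\dim\cC-n\). The Singleton redundancy of the new code is
\[
 r_{\rm Sing}(n-1,\rho-1)=(\rho-1)(n-1)-\binom{\rho-1}{2},
\]
and
\[
 N'=\binom{n}{2},
 \qquad
 N'-r_{\rm Sing}(n-1,\rho-1)=\binom{n-\rho+1}{2}.
\]
So the target dimension for \(\cC'\) is also \(\binom{n-\rho+1}{2}\), and the naive bound \(\dim\cC-n\) is too weak. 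This is the main obstacle: I must show the restriction \(\cC\to\F_q^{N_{n-1}}\) is zero, which is impossible in general. So I change the shortening.

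\textbf{Revised plan (puncture and shorten).} The count \(\binom{n-\rho+1}{2}\) suggests puncturing rather than shortening. I delete the coordinates \(N_{n-1}\) from \(\cC\) and keep all codewords. The punctured code \(\cC''\) consists of the restrictions to \(E_{n-1}\).

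This map is injective. A codeword vanishing on \(E_{n-1}\) is supported on \(N_{n-1}\), and since \(\rho\geq1\) such a codeword is zero. Hence \(\dim\cC''=\binom{n-\rho+1}{2}\), which is exactly the Singleton dimension for parameters \((n-1,\rho-1)\).

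For distance, take a nonzero \(G''=G|_{E_{n-1}}\) with a vertex cover \(S''\subseteq[n-1]\) of size at most \(\rho-2\). Then \(G\) itself has vertex cover \(S''\cup\{n-1\}\), of size at most \(\rho-1\). This contradicts \(d_{\rm N}(\cC)\geq\rho+1\). So \(d_{\rm N}(\cC'')\geq\rho\), which means \(\cC''\) corrects \(\rho-1\) node erasures.

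The bound \(\rho\leq n\) ensures the new parameters are legitimate, since \(\rho-1\leq n-1\). The case \(\rho=n\) is degenerate: there \(\dim\cC=1\), and the argument still goes through. I expect the only delicate point to be this choice of puncturing over shortening, and after that both steps are routine.
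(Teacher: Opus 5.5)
Your final puncturing argument is correct and uses the same construction as the paper. The paper's matrix \(\widetilde H\), whose columns are those of \(H\) taken modulo the span \(L_v\) of the star columns of \(v\), has kernel exactly \(\{G|_{E_{n-1}}:G\in\cC\}\): indeed \(\widetilde Hc'=0\) iff \(\sum_e c'_eh_e\in L_v\), iff \(c'\) extends to a codeword by labels on the star of \(v\). So your injectivity step is the codeword-side form of the paper's quotient-dimension count, and your cover-lifting step is the codeword-side form of its equivalence between the erased columns of \(\widetilde H\) on \(S\) and those of \(H\) on \(S\cup\{v\}\).

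Fix an off-by-one slip in the distance step. You must rule out covers \(S''\) of size at most \(\rho-1\); these lift to covers of \(G\) of size at most \(\rho\) and so contradict \(d_{\rm N}(\cC)\geq\rho+1\). As written, excluding only covers of size \(\leq\rho-2\) yields merely \(d_{\rm N}(\cC'')\geq\rho-1\). Also, for \(\rho=n\) one has \(\dim\cC=\binom12=0\), not \(1\).
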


\begin{proof}
Let \(H\) be a full-row-rank parity-check matrix for the original code, fix a
vertex \(v\), and let \(L_v\) be the span of the \(n\) columns on its loop
and incident ordinary edges.  Those columns are independent by
Lemma~\ref{lem:erased-column-criterion}, since they are contained in the
erased-column set of any \(\rho\)-set containing \(v\).
Let
\[
 \pi:\operatorname{im}H\longrightarrow\operatorname{im}H/L_v
\]
be the quotient map.  Delete the columns incident with \(v\), including its
loop.  Choose any basis of the quotient space
\(\operatorname{im}H/L_v\), express each remaining vector
\(\pi(h_e)\) in that basis, and form the ordinary matrix
\[
 \widetilde H\triangleq
 \bigl(\pi(h_e)\bigr)_{e\text{ a coordinate on the remaining vertices}}.
\]
Here the notation in the display means the coordinate column of
\(\pi(h_e)\) in the chosen quotient basis; changing that basis only applies
an invertible row operation.
For every \((\rho-1)\)-set \(S\) of remaining vertices, independence of the
erased columns of \(\widetilde H\) is equivalent to independence of the
original columns erased by \(S\cup\{v\}\).  Hence the descended code
\(\widetilde\cC\triangleq\ker\widetilde H\) corrects \(\rho-1\) node
erasures.  The quotient
dimension is
\[
 r_{\rm Sing}(n,\rho)-n
 =r_{\rm Sing}(n-1,\rho-1),
\]
so the descended code is optimal.
\end{proof}

\section{Cyclic Slope Constructions}
\label{sec:cyclic-slopes}

This section and Section~\ref{sec:optimal-triples} specialize the \(q\)-ary
model of Section~\ref{sec:preliminaries} to the binary field.  We return to
arbitrary prime powers in Section~\ref{sec:enumerators}.

The coding question throughout this section is the erased-column criterion in
Lemma~\ref{lem:erased-column-criterion}: for every failed vertex set, the parity-check
columns on all erased coordinates must be independent.  For three failures we
retain the neighborhood and diagonal checks of the published code and vary
only the cyclic slope checks.  We will convert full rank on a failed triple
into the nonvanishing of explicit determinants, and then recognize those
determinants as the minors of a cyclic projective orbit.  For more failures,
additional Frobenius slopes remove the successive local ambiguities left by
the same neighborhood and diagonal checks.  Thus the sets defined next are
introduced to solve a single erased-column rank problem, first for triples and
then for general \(\rho\).

We first retain the notation of
\cite{YohananovEfronYaakobi2020}.  Let \(n\) be an odd prime.  For
congruence notation, write
\(\mathbb Z_n\triangleq\mathbb Z/n\mathbb Z\), with unit group
\(\mathbb Z_n^\times\), and identify \([n]\) with its canonical residue
representatives in \(\mathbb Z_n\).  For \(u\in\mathbb Z_n^\times\), write
\(\operatorname{ord}_n(u)\) for its multiplicative order modulo \(n\).  For
\(h,m\in[n]\), define
\begin{align}
 S_h
 &\triangleq
 \{\langle v_h,v_\ell\rangle\in E_n:\ell\in[n]\setminus\{h\}\},
\label{eq:Sh}\\
 D_m
 &\triangleq
 \{\langle v_u,v_v\rangle\in E_n:u+v\equiv m\pmod n\}.
\label{eq:Dm}
\end{align}
For \(\lambda\in\mathbb Z_n^\times\setminus\{1\}\) and \(s\in[n]\), put
\begin{equation}
 T_s(\lambda)
 \triangleq
 \left\{
 \langle v_u,v_v\rangle:
 \begin{array}{c}
 0\le u<v<n,\\[-2pt]
 u+\lambda v\equiv s\pmod n\ \text{or}\\[-2pt]
 v+\lambda u\equiv s\pmod n
 \end{array}
 \right\}.
\label{eq:Tlambda}
\end{equation}
The order \(u<v\) only selects a canonical description of each undirected
edge; both displayed orientations are part of the check family.
Thus \(S_h\) and every \(T_s(\lambda)\) contain only ordinary edges, whereas
each \(D_m\) contains the unique loop indexed by \(2^{-1}m\pmod n\).
In particular, \(T_s(2)\) is exactly the slope-two set \(T_s\) in the
published triple-node construction.

\subsection{A primitive-slope triple-node construction}

For a prime \(n\geq5\) and
\(\lambda\in\mathbb Z_n^\times\setminus\{1\}\), let
\(\cC_3(\lambda)\subseteq\cG_2(n)\) be the code of graph words whose labels
satisfy
\begin{align}
 \sum_{\langle v_i,v_j\rangle\in S_h}e_{i,j}&=0
 &&(h\in[n]),\label{eq:lambda-S}\\
 \sum_{\langle v_i,v_j\rangle\in D_m}e_{i,j}&=0
 &&(m\in[n]),\label{eq:lambda-D}\\
 \sum_{\langle v_i,v_j\rangle\in T_s(\lambda)}e_{i,j}&=0
 &&(s\in[n]).
\label{eq:lambda-T}
\end{align}
Let
\[
 H_{n,\lambda}\in\F_2^{3n\times N}
\]
be the parity-check matrix with columns indexed by \(E_n\) and with rows
equal to the incidence vectors of the \(3n\) checks in
\eqref{eq:lambda-S}--\eqref{eq:lambda-T}.  Thus
\(\cC_3(\lambda)=\ker H_{n,\lambda}\).
In particular, set
\begin{equation}
 \cC_3\triangleq\cC_3(2).
\label{eq:literal-specialization}
\end{equation}
This is exactly the construction of \cite{YohananovEfronYaakobi2020}.

We now ask when these checks recover a fixed erased triple.  For
\(c\in\mathbb Z_n^\times\) and \(d\in\mathbb Z_n\), the affine permutation
\[
 \varphi_{c,d}:\mathbb Z_n\longrightarrow\mathbb Z_n,
 \qquad
 \varphi_{c,d}(j)\triangleq cj+d,
\]
induces a permutation of graph coordinates and sends the three check families
to
\[
 S_h\mapsto S_{ch+d},\qquad
 D_m\mapsto D_{cm+2d},\qquad
 T_s(\lambda)\mapsto T_{cs+(1+\lambda)d}(\lambda).
\]
It therefore preserves recoverability.  We may normalize the failed-index
set to
\[
 I\triangleq\{0,a,b\},
 \qquad
 a,b\in\mathbb Z_n^\times,\quad a\ne b.
\]
By Lemma~\ref{lem:erased-column-criterion}, this triple is recoverable exactly
when \(H_{n,\lambda}|_{\St(I)}\) has full column rank.

When \(\lambda\) is primitive modulo \(n\), we can translate this local rank
question into one determinant at each nonzero cyclic frequency.  The
neighborhood and diagonal checks first
eliminate all but one residual slope obstruction.  Evaluating its cyclic
syndrome at a nonzero frequency \(k\) produces the three auxiliary columns
defined below, one for each failed-vertex index; the obstruction vanishes
uniquely at that frequency exactly when their determinant is nonzero.
Appendix~\ref{app:primitive-slope} proves that these reductions are reversible,
so imposing the determinant condition at every nonzero frequency is equivalent
to the full-column-rank condition above.

To define these frequency columns, let \(\widetilde K\) be a splitting field
of \(x^n-1\) over \(\F_2\), and fix a primitive \(n\)-th root
\(\zeta\in\widetilde K\).
For \(k\in\mathbb Z_n^\times\) and \(j\in\mathbb Z_n\), define
\[
 p_k(j)\triangleq
 \bigl(1,\zeta^{kj},\zeta^{k\lambda j}\bigr)^{\mathsf T}
 \in\widetilde K^3.
\]
For the normalized failed set \(I=\{0,a,b\}\), assemble the three vectors
into
\[
 M_k(a,b)\triangleq
 \bigl(p_k(0)\ \ p_k(a)\ \ p_k(b)\bigr)\in\widetilde K^{3\times3},
\]
and define the local determinant
\[
 D_k(a,b)
 \triangleq
 \det M_k(a,b).
\]
The columns of \(M_k(a,b)\) are the auxiliary frequency columns just described;
they are not columns of \(H_{n,\lambda}\).  More precisely, the appendix shows
that the remaining slope multiplier is \(D_k(a,b)\) up to a nonzero factor and
constructs a nonzero local kernel word when one determinant vanishes.  Hence
the promised equivalence is the following criterion.

\begin{lemma}[Local determinant criterion]
\label{lem:local-determinant-criterion}
Let \(n\geq5\) be prime, let \(\lambda\) be primitive modulo \(n\), and let
\(a,b\in\mathbb Z_n^\times\) be distinct.  The erased vertices
\(v_0,v_a,v_b\) are uniquely recoverable if and only if
\[
 D_k(a,b)\ne0
 \qquad\text{for every }k\in\mathbb Z_n^\times.
\]
Equivalently, the submatrix of \(H_{n,\lambda}\) formed by the columns in
\(\St(\{0,a,b\})\) has full column rank.  If one of the determinants
vanishes, the nullity of this submatrix is at least
\(\operatorname{ord}_n(2)-1\).
\end{lemma}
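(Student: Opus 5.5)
The plan is to reduce the column-rank question for \(H_{n,\lambda}|_{\St(I)}\), with \(I=\{0,a,b\}\), to a family of \(3\times3\) linear systems over \(\widetilde K\), one for each nonzero cyclic frequency. For each failed index \(i\in I\) I encode the erased labels on the star \(N_i\) by a polynomial \(f_i(x)=\sum_{\ell}e_{i,\ell}x^{\ell}\in\F_2[x]/(x^n-1)\). Edges between two failed vertices appear in two of these polynomials, so I will fix a convention for them, for example assigning each such edge to one endpoint and treating it as a correction term. I then rewrite the three check families as polynomial identities. For each surviving \(h\), the check \(S_h\) says that the coefficient of \(x^h\) in \(f_0+f_a+f_b\) vanishes. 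The diagonal family \(D_m\) becomes the vanishing of \(\sum_{i\in I}x^{i}f_i(x)\), up to the loop and failed-pair corrections. The slope family \(T_s(\lambda)\) becomes the vanishing of \(\sum_{i\in I}\bigl(x^{i}f_i(x^{\lambda})+x^{\lambda i}f_i(x)\bigr)\) in the appropriate twisted sense, since both orientations occur. The first step, which is bookkeeping, is to use the \(S_h\) checks for surviving \(h\), together with the three checks \(S_0,S_a,S_b\), to eliminate one of the three star polynomials, say \(f_b\), in terms of the other two. After this step only a \(2\)-dimensional family of unknown "frequency amplitudes" remains, up to a bounded number of low-degree corrections from loops and failed-pair edges.

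Next I evaluate the surviving identities at \(x=\zeta^{k}\) for \(k\in\mathbb Z_n^\times\). The \(S\)-, \(D\)- and \(T\)-constraints then contribute the three rows \(1\), \(\zeta^{kj}\) and \(\zeta^{k\lambda j}\) at the column indexed by the failed vertex \(j\), which is exactly the column \(p_k(j)\). The \(T\)-check couples the value at frequency \(k\) with the value at frequency \(k\lambda\). Because \(\lambda\) is primitive, multiplication by \(\lambda\) permutes \(\mathbb Z_n^\times\) in a single cycle. I will therefore walk along this cycle, showing inductively that all amplitudes vanish at every nonzero frequency provided \(M_k(a,b)\) is invertible for every \(k\). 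The frequency \(k=0\), together with the loop and failed-pair coordinates, is then handled separately: it contributes the extra linear conditions that account for the "\(-2\)" and uses the \(D_m\) checks on the loops. Once all Fourier values vanish, the \(f_i\) are identically zero, which proves full column rank; the stated equivalence then follows from Lemma~\ref{lem:erased-column-criterion}.

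For the converse and the nullity bound, suppose \(D_{k_0}(a,b)=0\). I pick a nonzero vector \(c\in\widetilde K^3\) in the kernel of \(M_{k_0}(a,b)\) and use it as the frequency-\(k_0\) amplitudes, then propagate along the \(\lambda\)-cycle to define the amplitudes at the other frequencies. Binary labels require Galois compatibility. Since the entries of \(M_{2k}\) are the squares of those of \(M_k\), we have \(D_{2k}=D_k^{2}\), so vanishing at \(k_0\) forces vanishing on the whole cyclotomic coset \(\{k_0,2k_0,4k_0,\dots\}\). Taking the trace from \(\F_2(\zeta)\) of \(\theta\cdot(\text{the resulting word})\) as \(\theta\) ranges over the field of size \(2^{\operatorname{ord}_n(2)}\) produces a family of binary kernel words. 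This gives a space of dimension \(\operatorname{ord}_n(2)\), from which I expect to lose at most one dimension to the single global constraint at frequency \(0\) or from the loops. That yields nullity at least \(\operatorname{ord}_n(2)-1\).

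I expect the main obstacle to be the reversibility bookkeeping. The loops and the three failed-pair edges \(\langle v_0,v_a\rangle\), \(\langle v_0,v_b\rangle\), \(\langle v_a,v_b\rangle\) enter the \(D\)- and \(T\)-identities as inhomogeneous terms, and they must not spoil the clean factorization through \(D_k(a,b)\). My first approach is to keep these six coordinates as separate unknowns and show that they are determined by the frequency-\(0\) equations and the \(S_0,S_a,S_b\) checks. Only afterwards would I pass to the nonzero frequencies, so that those frequencies see a homogeneous system whose multiplier is \(D_k(a,b)\) times a nonzero factor. The same care is needed in the converse, to ensure that the trace-constructed words can be completed consistently on these six coordinates.
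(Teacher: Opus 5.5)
Your frequency-by-frequency picture is essentially the Fourier transform of the paper's argument: your amplitude at frequency $k$ corresponds to $y(\zeta^k)$. The relation $D_{2k}=D_k^2$ and the single $\lambda$-cycle are the right ingredients. However, the forward induction as stated fails, because it has no base case.

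In the homogeneous part of the system, the $S$ and $D$ rows at frequency $k$ put the amplitude vector on the line spanned by $v_k=(\zeta^{ka}+\zeta^{kb},\,1+\zeta^{kb},\,1+\zeta^{ka})$, say $F^{(k)}=Y_kv_k$. The $T$ row then becomes
\[
 D_k(a,b)\,Y_k=D_k(a,b)\,Y_{k\lambda},
\]
since the third row of $M_k$ applied to $v_k$ and the coupling term $\sum_j\zeta^{kj}F^{(k\lambda)}_j$ both expand to the same determinant. Invertibility of $M_k$ therefore propagates \emph{equality}, not vanishing. The composite of the propagation maps around the cycle is the identity on the constant mode $Y_k\equiv c$, and primitivity together with Galois compatibility only gives $c\in\F_2$.

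You still must show that this surviving mode is the zero word. In the paper this is exactly $R_n^{\sigma_\lambda}=\F_2$ applied to \eqref{eq:lambda-factorization}, combined with the observation that $y\in\F_2$ gives $W=\delta_1\wedge\delta_2y+\delta_2\wedge\delta_1y=0$, hence $B=0$. That step is available only because the unknown is the exterior coordinate rather than the raw label polynomials.

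Second, your plan for the six inhomogeneous coordinates cannot work. The sum of all $S$ rows and the sum of all $T$ rows vanish identically, and the sum of all $D$ rows is one total-parity bit, so frequency $0$ yields a single equation. The checks $S_0,S_a,S_b$ contain no loops and also involve the parities of the external edges. Each loop $e_{c,c}$ lies only in $D_{2c}$, so it contributes $e_{c,c}\zeta^{2kc}$ at \emph{every} nonzero frequency. The per-frequency system is therefore genuinely inhomogeneous, and the loops cannot be fixed beforehand.

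What is missing is a change of variables that absorbs these terms:
in $W=\Omega(B)=\delta_1\wedge\beta_1+\delta_2\wedge\beta_2$ the internal edges are already part of $\Span\{\delta_1,\delta_2\}\wedge R_n$;
the diagonal checks give $\delta_1\beta_1+\delta_2\beta_2=\varepsilon_1\delta_1^2+\varepsilon_2\delta_2^2$, with the loops entering only through the bits $\varepsilon_i$;
the substitution $\beta_i\mapsto\beta_i+\varepsilon_i\delta_i$ removes them without changing $W$ or the slope syndrome.
The loops are recovered only at the end, from the $D$ checks once $B=0$.

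The converse depends on the same reduction. The zeros of $D_k$ on the cyclotomic coset of $k_0$ cut the $\lambda$-cycle into at least $\operatorname{ord}_n(2)$ arcs on which $Y$ may be chosen constant. The dimension lost is the global constant mode (the codimension of $\operatorname{im}L_\lambda$), not a frequency-$0$ or loop constraint. The internal edges and loops of the kernel word must then be built explicitly. The paper does this with $C(u_i,\mathbf w_i)$ and $\ell_c=d_c$, followed by a separate check of the $x=1$ components. Your trace step presupposes such a $\widetilde K$-valued kernel word rather than supplying it.
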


Primitivity is essential: it eliminates residual edge patterns that the
determinant condition alone need not detect.  The algebraic explanation
follows the global characterization below.

The determinant condition has a compact geometric formulation.  The
projective plane \(\operatorname{PG}(2,\widetilde K)\) is the set of
one-dimensional subspaces of \(\widetilde K^3\); a point is represented by a
nonzero triple, up to
multiplication by a nonzero scalar.  For every
\(\lambda\in\mathbb Z_n\), repeatedly applying
\((X,Y,Z)\mapsto(X,\zeta Y,\zeta^\lambda Z)\) to \((1,1,1)\) gives the cyclic
orbit
\begin{equation}
 \Gamma_{n,\lambda}
 \triangleq
 \{(1,\zeta^j,\zeta^{\lambda j}):j\in\mathbb Z_n\}
 \subseteq\operatorname{PG}(2,\widetilde K).
\label{eq:cyclic-arc}
\end{equation}
We call this set an \(n\)-arc if no three of its points are collinear.  In
matrix language, every \(3\)-by-\(3\) minor of the displayed orbit columns is
then nonzero.  The determinant \(D_k(a,b)\) is precisely the minor on the
three columns indexed by \(0,ka,kb\).  Lemma~\ref{lem:local-determinant-criterion}
therefore gives the following coding characterization.

\begin{theorem}[Primitive-slope characterization]
\label{thm:primitive-slope}
Let \(n\geq5\) be prime and let \(\lambda\) be primitive modulo \(n\).
Then \(\cC_3(\lambda)\) corrects every three-node erasure if and only if
\(\Gamma_{n,\lambda}\) is an \(n\)-arc.
\end{theorem}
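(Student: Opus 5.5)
The plan is to reduce the global statement to Lemma~\ref{lem:local-determinant-criterion} via the affine normalization, and then to identify the family of local determinants with the family of all \(3\)-by-\(3\) minors of the orbit \(\Gamma_{n,\lambda}\).

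First, by Lemma~\ref{lem:erased-column-criterion}, \(\cC_3(\lambda)\) corrects every three-node erasure if and only if \(H_{n,\lambda}|_{\St(I)}\) has full column rank for every \(3\)-set \(I\subseteq[n]\). The affine maps \(\varphi_{1,d}\) permute coordinates and send each check family to itself, so they preserve recoverability. Hence it suffices to treat \(I=\{0,a,b\}\) with \(a,b\in\mathbb Z_n^\times\) distinct. By Lemma~\ref{lem:local-determinant-criterion}, which applies since \(n\ge5\) is prime and \(\lambda\) is primitive, the code corrects all triples if and only if
\[
D_k(a,b)\neq0 \quad\text{for all } k\in\mathbb Z_n^\times \text{ and all distinct } a,b\in\mathbb Z_n^\times .
\]

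Next, I will show that this doubly indexed family of determinants is exactly the family of minors of \(\Gamma_{n,\lambda}\) on column triples \(\{i,j,l\}\) of distinct indices, up to nonzero factors.

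One direction is immediate. The columns of \(M_k(a,b)\) are \((1,\zeta^{kj},\zeta^{\lambda kj})^{\mathsf T}\) for \(j\in\{0,a,b\}\), which are the orbit points indexed by \(0,ka,kb\). These indices are distinct because \(k\) is a unit. So \(D_k(a,b)\) is a minor of the orbit matrix, and the arc property forces every \(D_k(a,b)\) to be nonzero.

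For the converse, take any three distinct indices \(i,j,l\). Multiplying the three columns by the diagonal matrix \(\operatorname{diag}(1,\zeta^{-i},\zeta^{-\lambda i})\) scales the determinant by a nonzero factor and translates the indices to \(0,\,j-i,\,l-i\). These two nonzero distinct residues can be written as \(1\cdot a,\,1\cdot b\), so with \(k=1\) the minor equals a nonzero multiple of \(D_1(j-i,l-i)\). Therefore all \(D_k(a,b)\) are nonzero if and only if every minor is nonzero, that is, if and only if no three orbit points are collinear.

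Finally, I must check that \(\Gamma_{n,\lambda}\) has \(n\) distinct points, so that the term "\(n\)-arc" is meaningful and collinearity of three distinct points corresponds to vanishing of a minor on three distinct indices. Distinctness holds because the second coordinates \(\zeta^j\) are distinct for \(j\in\mathbb Z_n\), while the first coordinate is fixed at \(1\), so no two points are proportional.

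The real content lies in Lemma~\ref{lem:local-determinant-criterion}, which is taken as given. The only point needing care here is the translation step, namely the diagonal-scaling identity and the observation that translations of \(\mathbb Z_n\) act on \(\Gamma_{n,\lambda}\) by projective collineations. The only other point is that the role of the frequency \(k\) is absorbed, since \(k=1\) together with arbitrary distinct \(a,b\) already covers all triples.
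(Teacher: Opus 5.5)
Your proposal is correct and follows the same route as the paper. You normalize each failed triple to \(\{0,a,b\}\) by an affine relabeling, invoke Lemma~\ref{lem:local-determinant-criterion}, and identify the determinants \(D_k(a,b)\) with the \(3\)-by-\(3\) minors of \(\Gamma_{n,\lambda}\). The paper leaves implicit the diagonal-scaling step showing that every minor on distinct indices \(i,j,l\) is a nonzero multiple of \(D_1(j-i,l-i)\); you simply write it out.
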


\begin{proof}
The point set \(\Gamma_{n,\lambda}\) is an \(n\)-arc exactly when
\(D_k(a,b)\ne0\) for all distinct \(a,b\in\mathbb Z_n^\times\) and all
\(k\in\mathbb Z_n^\times\).  Apply
Lemma~\ref{lem:local-determinant-criterion} to every erased triple after an
affine relabeling.
\end{proof}

For orientation, we now identify the graph object and algebraic parameter
used in the proof.  Let \(X=(e_{u,v})\in\cC_3(\lambda)\) satisfy
\(\supp(X)\subseteq\St(I)\) for \(I=\{0,a,b\}\).  Define its ordinary-edge
label matrix
\[
 B=(B_{uv})_{u,v\in\mathbb Z_n}\in\F_2^{n\times n},
 \qquad
 B_{uu}=0,\quad B_{uv}=B_{vu}=e_{u,v}\ (u\ne v),
\]
so \(B\) records the labels of the original graph word on ordinary edges; it
is not a separate support graph.  The neighborhood checks make every row sum
of \(B\) zero.  Let
\[
 K_n\triangleq\left(\mathbb Z_n,\binom{\mathbb Z_n}{2}\right)
\]
be the loopless complete graph on the same vertex set, and let
\(\mathcal Z(K_n)\subseteq\F_2^{\binom n2}\) be its binary cycle space:
the ordinary-edge patterns having even degree at every vertex.  Thus \(B\)
lies in \(\mathcal Z(K_n)\).
The proof temporarily packages the cyclic syndromes in
\[
 R_n\triangleq
 \F_2[x]/(1+x+\cdots+x^{n-1})
\]
where \(z_j\) is the residue class of \(x^j\).  Unlike the full group algebra
\(\mathcal R_n=\F_2[x]/(x^n-1)\) used in the 2020 construction, \(R_n\)
denotes throughout this paper its nontrivial reduced factor.  The algebra
automorphism \(\sigma_\lambda:R_n\to R_n\) is defined by
\(\sigma_\lambda(z_j)=z_{\lambda j}\).  The cycle-space coordinate is the
following linear map.  For any \(\F_2\)-vector space \(V\), write
\(\bigwedge_{\F_2}^2V\) for the quotient of
\(V\otimes_{\F_2}V\) by the subspace generated by
\(\{v\otimes v:v\in V\}\), and write \(u\wedge v\) for the image of
\(u\otimes v\).  Now define
\[
 \Omega:\mathcal Z(K_n)\longrightarrow\bigwedge_{\F_2}^2R_n,
 \qquad
 \Omega(B)\triangleq
 \sum_{0\leq u<v<n}B_{uv}z_u\wedge z_v.
\]
This is a lossless change of coordinates, not a relaxation of the graph
problem: Appendix~\ref{app:primitive-slope} proves that \(\Omega\) is an
isomorphism, so \(\Omega(B)=0\) exactly when every ordinary-edge label is
zero.  Put
\[
 \delta_1\triangleq1+z_a,
 \qquad
 \delta_2\triangleq1+z_b.
\]
Both elements are units of \(R_n\).  The support and neighborhood conditions
give
\[
 \Omega(B)=\delta_1\wedge\beta_1+\delta_2\wedge\beta_2
\]
for some \(\beta_1,\beta_2\in R_n\).  The diagonal checks allow these
representatives to be chosen so that
\[
 \delta_1\beta_1+\delta_2\beta_2=0.
\]
Consequently, there is \(y\in R_n\) with
\[
 \beta_1=\delta_2y,
 \qquad
 \beta_2=\delta_1y.
\]
The resulting linear map
\[
 y\longmapsto
 \delta_1\wedge(\delta_2y)+\delta_2\wedge(\delta_1y)
 =\Omega(B)
\]
from \(R_n\) to \(\bigwedge_{\F_2}^2R_n\) has kernel exactly \(\F_2\).
Thus the residual ordinary-edge pattern is represented without loss by the
coset \(y+\F_2\in R_n/\F_2\).

Since \(\delta_2\) is a unit, define
\(\gamma\triangleq\delta_1\delta_2^{-1}\).  The slope checks become
\begin{equation}
 \delta_2\sigma_\lambda(\delta_2)
 \bigl(\gamma+\sigma_\lambda(\gamma)\bigr)
 \bigl(y+\sigma_\lambda(y)\bigr)=0.
\label{eq:lambda-factorization}
\end{equation}
Evaluation at \(x=\zeta^k\) identifies \(D_k(a,b)\) with the numerator of
the middle factor.  If all determinants are nonzero, that factor is a unit,
and primitivity leaves only \(y\in\F_2\).  Hence the coset is zero and the
isomorphism gives \(B=0\); the diagonal checks then force the three loop
symbols to vanish, so \(X=0\).  If
one determinant vanishes, the proof reverses these reductions and constructs
nonzero locally supported codewords; their dimension is at least
\(\operatorname{ord}_n(2)-1\).  Appendix~\ref{app:primitive-slope} gives both
directions in full.

\begin{proposition}[Rank of the primitive-slope checks]
\label{prop:primitive-slope-rank}
Let \(n\geq5\) be prime and let \(\lambda\) be primitive modulo \(n\).  Then
\begin{equation}
 \rank H_{n,\lambda}=3n-2.
\label{eq:lambda-rank}
\end{equation}
\end{proposition}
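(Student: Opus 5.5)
We compute the left kernel of \(H_{n,\lambda}\) directly. We will exhibit two independent linear relations among the \(3n\) rows and then show that every relation is a combination of these two. This gives \(\rank H_{n,\lambda}=3n-2\).

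\emph{Step 1: two relations.} Each ordinary edge \(\langle v_u,v_v\rangle\) lies in exactly two neighborhood sets, \(S_u\) and \(S_v\). No loop lies in any \(S_h\). Hence \(\sum_h S_h=0\) over \(\F_2\).

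For the slope sets, an ordinary edge \(\{u,v\}\) lies in \(T_{u+\lambda v}(\lambda)\) and in \(T_{v+\lambda u}(\lambda)\). These two indices are distinct, because \((u+\lambda v)-(v+\lambda u)=(1-\lambda)(u-v)\ne0\). So each edge appears in exactly two slope checks, and \(\sum_s T_s(\lambda)=0\).

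These two relations have disjoint supports in the row index set, so they are independent. Hence \(\rank H_{n,\lambda}\le 3n-2\).

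\emph{Step 2: every relation comes from Step 1.} Suppose
\[
 \sum_h\alpha_hS_h+\sum_m\beta_mD_m+\sum_s\gamma_sT_s(\lambda)=0 .
\]

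First look at the loop coordinates. The loop at \(v_i\) lies only in \(D_{2i}\), so \(\beta_m=0\) for every \(m\).

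Next look at the ordinary edges. For all \(u\ne v\) we obtain
\[
 \alpha_u+\alpha_v+\gamma_{u+\lambda v}+\gamma_{v+\lambda u}=0 .
\]
The key observation is that this identity also holds for \(u=v\). In that case the left side is \(2\alpha_u+2\gamma_{(1+\lambda)u}=0\) in characteristic \(2\). So the function
\[
 F(u,v)=\alpha_u+\alpha_v+\gamma_{u+\lambda v}+\gamma_{v+\lambda u}
\]
vanishes on all of \(\mathbb Z_n^2\).

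\emph{Step 3: Fourier argument.} Since \(n\) is odd, we can expand \(\alpha\) and \(\gamma\) over \(\widetilde K\) as \(\alpha_u=\sum_k\hat\alpha_k\zeta^{ku}\) and \(\gamma_s=\sum_k\hat\gamma_k\zeta^{ks}\). Then \(F\) is a combination of the two-dimensional characters \(\zeta^{k_1u+k_2v}\). For \(k\ne0\) the four terms of \(F\) contribute at the following frequencies:
\begin{itemize}
\item \(\hat\alpha_k\) at \((k,0)\) and at \((0,k)\);
\item \(\hat\gamma_k\) at \((k,\lambda k)\) and at \((\lambda k,k)\).
\end{itemize}

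We must check that for distinct nonzero \(k\) these frequencies never collide.
\begin{itemize}
\item The frequencies \((k,0)\) and \((0,k)\) cannot equal a \(\gamma\)-frequency, because \(\lambda\) is a unit.
\item A coincidence \((k,\lambda k)=(\lambda k',k')\) forces \(\lambda^2=1\), so \(\lambda=-1\), since \(\lambda\ne1\).
\item \(\lambda=-1\) is impossible: \(\lambda\) is primitive of order \(n-1\ge4\), so \(\lambda\ne-1\).
\end{itemize}

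Uniqueness of the two-dimensional Fourier expansion then gives \(\hat\alpha_k=\hat\gamma_k=0\) for all \(k\ne0\). So \(\alpha\) and \(\gamma\) are constant. A relation with constant \(\alpha\), constant \(\gamma\) and \(\beta=0\) is a combination of the two relations from Step 1. Therefore the left kernel has dimension exactly \(2\), and \(\rank H_{n,\lambda}=3n-2\).

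The only delicate point is the frequency-separation check in Step 3, which is where the hypotheses \(\lambda\ne\pm1\) and \(n\ge5\) enter. Full primitivity of \(\lambda\) is not needed here beyond excluding \(\lambda=-1\).
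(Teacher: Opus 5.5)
Your proposal is correct and follows essentially the same route as the paper's proof. The paper likewise uses the loop coordinates to kill the diagonal coefficients, extends the edge equation formally to \(u=v\), and applies a two-dimensional Fourier transform, with \(\lambda^2\ne1\) (from primitivity and \(n\ge5\)) separating the slope frequencies, to show that the neighborhood and slope coefficient functions are constant. Your explicit check that each edge lies in two distinct slope checks, and your remark that only \(\lambda\ne\pm1\) is actually used, are accurate refinements of that same argument.
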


Thus the code redundancy is
\[
 r(\cC_3(\lambda))=\rank H_{n,\lambda}=3n-2.
\]
The matrix has \(3n\) displayed check rows; Appendix~\ref{app:primitive-slope-rank}
shows that the sum of all neighborhood rows and the sum of all slope rows are
its only two independent row dependencies.

The local criterion is proved in Appendix~\ref{app:primitive-slope}, where
\eqref{eq:arc-determinant} identifies \(D_k(a,b)\) with the factor in
\eqref{eq:lambda-factorization}.  Proposition~\ref{prop:primitive-slope-rank}
is proved in Appendix~\ref{app:primitive-slope-rank}.  The local proof is
formulated in the reduced algebra
\(\F_2[x]/(1+x+\cdots+x^{n-1})\), which may be a product of fields.  This
generality is essential at new lengths such as \(n=17\).  Each field factor
has degree \(\operatorname{ord}_n(2)\).  A vanishing determinant in one
factor supplies the kernel dimension in the local criterion; the bound
depends on this degree rather than on the number of vanishing determinants.

\begin{remark}[Decoding and the role of the new proof]
Since \(\cC_3(2)=\cC_3\) literally, the sparse syndrome formulation of
\cite[Sec.~VI]{YohananovEfronYaakobi2020} applies unchanged at slope \(2\).
For every slope covered by Theorem~\ref{thm:primitive-slope}, the local
system has a \(3n\times(3n-3)\) coefficient matrix of full column rank
with \(O(n)\) nonzero entries.  Adjoining three independent uniform binary
columns makes this matrix square and nonsingular with probability
\((1-2^{-1})(1-2^{-2})(1-2^{-3})=21/64\), while preserving linear
sparsity.  The randomized square-system solver of
\cite[Secs.~II and VI]{Wiedemann1986} has a fixed positive success probability
in each bounded \(O(n^2)\)-operation trial.  We accept a candidate only when
its erased coordinates satisfy the original syndrome equations, and
otherwise resample the padding and repeat.  This gives a randomized decoder
using an expected \(O(n^2)\) operations over \(\F_2\), including syndrome
formation.

Appendix~\ref{app:primitive-slope} supplies a structural proof of uniqueness.
It replaces the
slope-two coefficient analysis by a uniform cycle-space and
exterior-square argument that isolates the arc and primitivity conditions,
thereby allowing the slope to vary.
\end{remark}

For the published slope-two code, Appendix~\ref{app:one-bit} characterizes
when its dimension can be increased by one while retaining triple-node
correction, and gives finite obstructions to such extensions.

We next give three slopes for which the arc hypothesis is automatic.  We first
record the elementary coordinate involution that will transfer two of them.

\begin{lemma}[Exponent involution]
\label{lem:exponent-involution}
For every \(\lambda\in\mathbb Z_n\), the point sets
\(\Gamma_{n,\lambda}\) and \(\Gamma_{n,1-\lambda}\) are projectively
equivalent.
\end{lemma}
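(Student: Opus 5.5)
We exhibit an explicit projective transformation of \(\operatorname{PG}(2,\widetilde K)\), combined with a reindexing of the orbit parameter \(j\), that carries \(\Gamma_{n,\lambda}\) onto \(\Gamma_{n,1-\lambda}\).

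Take a point \((1,\zeta^j,\zeta^{\lambda j})\) of \(\Gamma_{n,\lambda}\). Since it is a projective point, we may rescale by the nonzero factor \(\zeta^{-\lambda j}\) and represent it as
\[
 (\zeta^{-\lambda j},\ \zeta^{(1-\lambda)j},\ 1).
\]
Now apply the coordinate permutation \((X,Y,Z)\mapsto(Z,X,Y)\), which is an invertible linear map of \(\widetilde K^3\) and hence a projectivity. The image point is
\[
 (1,\ \zeta^{-\lambda j},\ \zeta^{(1-\lambda)j}).
\]
Put \(j'\triangleq-\lambda j\). The third coordinate should then be \(\zeta^{(1-\lambda)j'}\), but it is \(\zeta^{(1-\lambda)j}\), so this reindexing does not close up. We need a different normalization.

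The fix is to divide by the middle coordinate instead. Rescaling by \(\zeta^{-j}\) gives the representative
\[
 (\zeta^{-j},\ 1,\ \zeta^{(\lambda-1)j}).
\]
Swap the first two coordinates and set \(j'\triangleq-j\). The point becomes
\[
 (1,\ \zeta^{j'},\ \zeta^{(1-\lambda)j'}),
\]
because \((\lambda-1)j=(1-\lambda)j'\).

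As \(j\) runs over \(\mathbb Z_n\), so does \(j'=-j\), since negation is a bijection of \(\mathbb Z_n\). The map \([X:Y:Z]\mapsto[Y:X:Z]\) is therefore a projectivity of \(\operatorname{PG}(2,\widetilde K)\) that sends \(\Gamma_{n,\lambda}\) bijectively onto \(\Gamma_{n,1-\lambda}\), with the point of index \(j\) going to the point of index \(-j\). This proves the claim. No step is a genuine obstacle; the only care needed is choosing the rescaling coordinate so that the exponents line up, as the failed first attempt shows.

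For the later application, note that a projective equivalence preserves collinearity of triples. Hence \(\Gamma_{n,\lambda}\) is an \(n\)-arc if and only if \(\Gamma_{n,1-\lambda}\) is. Explicitly, each \(3\times3\) minor of one orbit equals, up to a sign and a nonzero scalar, the corresponding minor of the other on the negated indices.
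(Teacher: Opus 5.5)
Your corrected argument is right and is essentially the paper's proof: interchange the first two coordinates, rescale by \(\zeta^{-j}\), and reindex by \(j\mapsto -j\) (the paper writes \(w=z^{-1}\)), which gives \((1,\zeta^{-j},\zeta^{(\lambda-1)j})=(1,\zeta^{j'},\zeta^{(1-\lambda)j'})\). You can simply delete the abandoned first attempt with the cyclic coordinate permutation from the write-up.
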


\begin{proof}
Interchanging the first two coordinates of \((1,z,z^\lambda)\), scaling by
\(z^{-1}\), and writing \(w=z^{-1}\) gives
\[
 (1,z,z^\lambda)
 \longmapsto
 (z,1,z^\lambda)
 \sim(1,z^{-1},z^{\lambda-1})
 =(1,w,w^{1-\lambda}).
\]
As \(z\) runs through the cyclic subgroup, so does \(w\).  Hence
\(\Gamma_{n,\lambda}\) and \(\Gamma_{n,1-\lambda}\) are projectively
equivalent.
\end{proof}

\begin{lemma}[Prime-subgroup Frobenius arcs]
\label{lem:prime-subgroup-arcs}
Let \(k\) itself be a power of two, and let \(n\) be an odd prime.  If
\[
 n\nmid 2^k-1,
\]
then \(\Gamma_{n,2^k}\) is an \(n\)-arc.
\end{lemma}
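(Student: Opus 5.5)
The plan is to turn collinearity into a linearized-polynomial condition and then use the subfield structure forced by the hypothesis on \(k\). Write \(q\triangleq2^k\), so that \(\Gamma_{n,q}=\{(1,z,z^q):z\in\mu_n\}\), where \(\mu_n\subseteq\widetilde K^\times\) is the group generated by \(\zeta\). Suppose three distinct points \(z_1,z_2,z_3\in\mu_n\) are collinear. Then there is a nonzero \((c_0,c_1,c_2)\in\widetilde K^3\) with
\[
c_0+c_1z_i+c_2z_i^{q}=0\qquad(i=1,2,3).
\]
Since \(z\mapsto z^q\) is additive in characteristic two, subtracting the first equation gives \(c_1w+c_2w^q=0\) for \(w=w_2\triangleq z_2+z_1\) and \(w=w_3\triangleq z_3+z_1\). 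Both \(w_2,w_3\) are nonzero and \(w_2\neq w_3\).

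We first dispose of the degenerate case. If \(c_2=0\), then \(c_1w_2=0\) forces \(c_1=0\), and then \(c_0=0\), a contradiction. So \(c_2\neq0\), and \(w_2^{q-1}=w_3^{q-1}=c_1/c_2\). Hence \(t\triangleq w_3/w_2\) satisfies \(t^{q-1}=1\), that is, \(t\in\F_q^\times\setminus\{1\}\). Dividing by \(z_1\), we may take \(z_1=1\) and write \(u=z_2\), \(v=z_3\in\mu_n\setminus\{1\}\). The collinearity then reads
\[
v=1+t+tu .
\]
Here \(u,v\) lie in \(\F_{2^m}\) with \(m=\operatorname{ord}_n(2)\), so \(t\in\F_{2^k}\cap\F_{2^m}=\F_{2^g}\) with \(g=\gcd(k,m)\).

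Next we use that \(k\) is a power of two. Then \(g\) is also a power of two dividing \(k\), and \(2^g-1\mid 2^k-1\). Thus \(n\nmid 2^g-1\), so \(m\nmid g\) and \(\F_{2^g}\subsetneq\F_{2^m}\). So the problem reduces to the following claim: no affine \(\F_{2^g}\)-line \(\{1+s(u+1):s\in\F_{2^g}\}\) through \(1\) and another point \(u\in\mu_n\) can contain a third point of \(\mu_n\), provided the order-\(n\) group \(\mu_n\) is not contained in \(\F_{2^g}\).

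To attack this claim, I would apply the Frobenius \(\tau:x\mapsto x^{2^g}\). It fixes \(t\) and acts on \(\mu_n\) as \(z\mapsto z^{2^g}\), which is nontrivial because \(m\nmid g\). Iterating \(\tau\) gives the identities \(v^{2^{gj}}=1+t+tu^{2^{gj}}\) for all \(j\). Multiplying \(v=1+t+tu\) by the conjugate relations and comparing norms \(N_{\F_{2^m}/\F_{2^g}}\) (all norms of elements of \(\mu_n\) lie in \(\mu_n\cap\F_{2^g}=\{1\}\), since \(n\nmid 2^g-1\)) should force \(N(1+t+tu)=1\) for \(u\) ranging over a nontrivial Galois orbit. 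The aim is to show this polynomial identity in \(u\) has too many roots unless \(t\in\{0,1\}\).

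Alternatively, exploiting that \(g\) is a power of two, I would descend: reduce first to \(g=1\), where \(t\in\F_2^\times=\{1\}\) is immediately excluded. Then induct by passing from \(\F_{2^{g}}\) to the quadratic subextension \(\F_{2^{g/2}}\), using that \(\tau\) generates a \(2\)-group action compatible with the halving. This descent is the step I expect to be the main obstacle, and it is where the power-of-two hypothesis must be genuinely used. If it stalls, I would fall back to Lemma~\ref{lem:exponent-involution} together with an explicit Moore-determinant factorization of \(\det(1,z_i,z_i^{q})\) into a product over \(\F_q\)-lines.
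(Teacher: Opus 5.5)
\textbf{Verdict: genuine gap.} The case \(g\ge2\) (which forces \(m\) even) is left unproved, and neither of your sketched routes closes it.

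Your reduction is sound and matches the paper's. Collinearity gives \(t=(1+v)/(1+u)\) with \(t^{2^k-1}=1\), so \(t\in\F_{2^g}\) with \(g=\gcd(k,m)\) and \(m=\operatorname{ord}_n(2)\). When \(m\) is odd, the power-of-two hypothesis forces \(g=1\), so \(t=1\) and \(u=v\).

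Your first route, iterating \(\tau\) and comparing norms, cannot produce a contradiction. Since \(\tau\colon x\mapsto x^{2^g}\) fixes \(t\), applying \(\tau^j\) to \(1+v=t(1+u)\) only yields the conjugate collinear triple with parameters \(1,u^{2^{gj}},v^{2^{gj}}\). So every consequence you extract is automatically consistent. The relation \(N(1+t+tu)=N(v)=1\) is a single Galois-invariant equation that holds on the whole orbit of \(u\), so there is nothing for a root count to contradict. Your second route, descent to \(\F_{2^{g/2}}\), is not specified, and you flag it yourself as the main obstacle.

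The missing idea is to use a Frobenius power that acts on \(\mu_n\) by a rational map, even though it need not fix \(t\).
\begin{itemize}
\item For \(m\) even, put \(Q\triangleq2^{m/2}\). Since \(\mathbb Z_n^\times\) is cyclic and \(2^{m/2}\) has order two in it, \(Q\equiv-1\pmod n\), so \(z^Q=z^{-1}\) on \(\mu_n\).
\item Raising \(1+v=t(1+u)\) to the \(Q\)-th power gives \((1+v)/v=t^Q(1+u)/u\).
\item Dividing by the original relation gives \(u/v=t^{Q-1}\).
\item The left side lies in \(\mu_n\), while the order of the right side divides \(2^k-1\). Since \(n\) is prime and \(n\nmid 2^k-1\), these cyclic groups meet trivially, so \(u=v\).
\end{itemize}
This is the paper's even case, and it does not use the power-of-two hypothesis at all. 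That hypothesis is needed only to force \(g=1\) when \(m\) is odd, not in any descent.

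Note also that \(x\mapsto x^Q\) is a power of your \(\tau\) only when \(g\mid m/2\). In that case your route could have worked if you had combined it with the inversion property. When \(g\) is the full \(2\)-part of \(m\), however, \(x\mapsto x^Q\) acts on \(\F_{2^g}\) as the nontrivial automorphism over \(\F_{2^{g/2}}\). So the inversion you need is not available among the powers of \(\tau\).
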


\begin{proof}
Write \(\mu_n\triangleq\langle\zeta\rangle\) and
\(d\triangleq\operatorname{ord}_n(2)\), so
\(\F_2(\mu_n)=\F_{2^d}\).  If three orbit points were collinear, after
normalizing one parameter to \(1\) there would be distinct
\(x,y\in\mu_n\setminus\{1\}\) such that
\[
 (1+x)(1+y^{2^k})+(1+y)(1+x^{2^k})=0.
\]
Consequently
\[
 \eta\triangleq\frac{1+x}{1+y}
 \qquad\text{satisfies}\qquad
 \eta^{2^k-1}=1,
\]
and hence
\[
 \eta\in\F_{2^k}\cap\F_{2^d}
 =\F_{2^{\gcd(d,k)}}.
\]
If \(d\) is odd, then \(\gcd(d,k)=1\), so \(\eta=1\) and \(x=y\), a
contradiction.  If \(d\) is even, put \(Q\triangleq2^{d/2}\).  Since
\(Q\equiv-1\pmod n\), raising \(1+x=\eta(1+y)\) to the \(Q\)-th power
and comparing with the original equality gives
\[
 \frac yx=\eta^{Q-1}.
\]
The left side lies in \(\mu_n\), while the order of the right side divides
\(2^k-1\).  The hypothesis makes the intersection of these two cyclic
groups trivial, again forcing \(x=y\).
\end{proof}

\begin{corollary}[Three fixed arc-safe slopes]
\label{cor:arc-safe-slopes}
For every prime \(n>5\), each member of
\[
 \{2,-3,-15\}
\]
gives an \(n\)-arc.
\end{corollary}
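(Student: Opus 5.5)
The plan is to use Lemma~\ref{lem:prime-subgroup-arcs} with \(k\in\{1,2,4\}\) and to move the resulting slopes through Lemma~\ref{lem:exponent-involution}. Taking \(k=1\) gives slope \(2\). The hypothesis \(n\nmid 2^1-1=1\) holds for every odd prime, so \(\Gamma_{n,2}\) is an \(n\)-arc. Taking \(k=2\) gives slope \(4\), provided \(n\nmid 2^2-1=3\), which holds for every prime \(n>3\). Taking \(k=4\) gives slope \(16\), provided \(n\nmid 2^4-1=15\), which holds for every prime \(n\notin\{3,5\}\), in particular for every \(n>5\). In each case the arc property is read modulo \(n\) in the exponent, because \(\zeta\) has order \(n\).

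Next I apply the exponent involution \(\lambda\mapsto 1-\lambda\) of Lemma~\ref{lem:exponent-involution}. It sends \(4\) to \(1-4=-3\) and \(16\) to \(1-16=-15\). Being an \(n\)-arc (no three points collinear) is a projective invariant, and a projective equivalence is a bijection of point sets. Hence \(\Gamma_{n,-3}\) and \(\Gamma_{n,-15}\) are \(n\)-arcs whenever \(\Gamma_{n,4}\) and \(\Gamma_{n,16}\) are.

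Two small checks remain. First, the orbit \(\Gamma_{n,\lambda}\) must have exactly \(n\) distinct points, so that "\(n\)-arc" is meaningful. The second coordinate \(\zeta^j\) already separates the points: if \((1,\zeta^j,\zeta^{\lambda j})\) and \((1,\zeta^{j'},\zeta^{\lambda j'})\) are proportional, the scalar is \(1\) because the first coordinates agree, and then \(\zeta^j=\zeta^{j'}\) forces \(j=j'\). Second, I should confirm that the transformation in Lemma~\ref{lem:exponent-involution} maps the orbit onto the orbit bijectively. This holds because \(j\mapsto -j\) is a bijection of \(\mathbb Z_n\).

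The only real obstacle is bookkeeping, namely identifying \(-3\) and \(-15\) as the involution images of \(2^2\) and \(2^4\). The condition \(n>5\) enters only through \(n\nmid15\) in the case \(k=4\). If the paper also wants these slopes to be admissible in the sense \(\lambda\ne1\) and \(\lambda\in\mathbb Z_n^\times\), one further check is needed: \(-3\) and \(-15\) are nonzero modulo \(n\), and \(-3\not\equiv1\), \(-15\not\equiv1\), because \(n\nmid4\) and \(n\nmid16\).
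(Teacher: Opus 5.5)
Your proposal is correct and matches the paper's proof: apply Lemma~\ref{lem:prime-subgroup-arcs} with \(k=1,2,4\), using \(n\nmid 1,3,15\) for primes \(n>5\), and then transfer the slopes \(4\) and \(16\) to \(1-4=-3\) and \(1-16=-15\) via Lemma~\ref{lem:exponent-involution}. Your additional checks (the \(n\) orbit points are distinct, the reparametrization is bijective, and \(-3,-15\not\equiv 0,1\pmod n\)) are correct and make explicit what the paper leaves implicit.
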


\begin{proof}
Apply Lemma~\ref{lem:prime-subgroup-arcs} with \(k=1,2,4\).  The three
Frobenius exponents are \(2,4,16\), and the corresponding integers
\(2^k-1\) are \(1,3,15\), none divisible by a prime greater than five.
Lemma~\ref{lem:exponent-involution} transfers the conclusions for \(4\)
and \(16\) to \(1-4=-3\) and \(1-16=-15\).
\end{proof}

\begin{theorem}[Unconditional infinite triple-node family]
\label{thm:unconditional-triple}
The set
\[
 \mathcal P\triangleq
 \left\{
 n>5:\begin{array}{l}
 n\text{ is prime and at least one }\\[-2pt]
 \lambda\in\{2,-3,-15\}\text{ is primitive modulo }n
 \end{array}
 \right\}
\]
is infinite.  For each \(n\in\mathcal P\), choose such a primitive
\(\lambda_n\).  Then \(\cC_3(\lambda_n)\) corrects three node erasures and
\begin{equation}
 r(\cC_3(\lambda_n))=3n-2,
 \qquad
 \dim\cC_3(\lambda_n)=\frac{(n-1)(n-4)}2.
\label{eq:unconditional-parameters}
\end{equation}
\end{theorem}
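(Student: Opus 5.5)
The proof has two independent parts: an arithmetic statement, that \(\mathcal P\) is infinite, and a coding statement for each \(n\in\mathcal P\).

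\emph{Infinitude of \(\mathcal P\).}
We invoke Heath-Brown's theorem. It states the following. Let \(a_1,a_2,a_3\) be nonzero integers, none equal to \(\pm1\) and none a perfect square. Suppose they are multiplicatively independent, meaning no nontrivial product \(a_1^{e_1}a_2^{e_2}a_3^{e_3}\) equals \(1\) or a square in a degenerate way. Then at least one \(a_i\) is a primitive root modulo infinitely many primes.

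So the task is to check that \(2,-3,-15\) satisfy these hypotheses.
\begin{itemize}
\item None is \(\pm1\).
\item None is a square: \(2\) is not, and \(-3\) and \(-15\) are negative.
\item For multiplicative independence, suppose \(2^{e_1}(-3)^{e_2}(-15)^{e_3}=\pm1\). Then the exponent of \(5\) gives \(e_3=0\), the exponent of \(3\) gives \(e_2=0\), and the exponent of \(2\) gives \(e_1=0\).
\end{itemize}
Heath-Brown's precise formulation requires multiplicative independence only, with square-freeness handled by the hypotheses above, so the conditions hold. It follows that infinitely many primes have some \(\lambda\in\{2,-3,-15\}\) as a primitive root, and discarding the primes \(n\le5\) leaves \(\mathcal P\) infinite.

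This is the step I expect to need the most care: the hypotheses must be quoted exactly as Heath-Brown states them, including any conditions imposed on products such as \(-3\cdot-15=45\) or \(-3\cdot 2\).

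\emph{Coding properties.}
Fix \(n\in\mathcal P\) and a primitive \(\lambda_n\). Corollary~\ref{cor:arc-safe-slopes} shows that \(\Gamma_{n,\lambda_n}\) is an \(n\)-arc, since \(n>5\). Theorem~\ref{thm:primitive-slope} then shows that \(\cC_3(\lambda_n)\) corrects every three-node erasure.

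Proposition~\ref{prop:primitive-slope-rank} gives \(\rank H_{n,\lambda_n}=3n-2\). Since \(\cC_3(\lambda_n)=\ker H_{n,\lambda_n}\), the redundancy is \(r=3n-2\).

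For the dimension, we have
\[
 \dim\cC_3(\lambda_n)=\binom{n+1}2-(3n-2)=\frac{n^2+n-6n+4}{2}=\frac{n^2-5n+4}{2}=\frac{(n-1)(n-4)}2,
\]
which is \eqref{eq:unconditional-parameters}.
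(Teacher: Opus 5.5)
Your coding half is exactly the paper's argument. Corollary~\ref{cor:arc-safe-slopes} and Theorem~\ref{thm:primitive-slope} give triple-node correction, Proposition~\ref{prop:primitive-slope-rank} gives redundancy $3n-2$, and your subtraction from $\binom{n+1}{2}$ is correct.

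The gap is in the arithmetic half: you quote Heath-Brown's theorem with the wrong hypotheses. You then dismiss your own (correct) worry about products with the unsupported assertion that ``the precise formulation requires multiplicative independence only.'' Heath-Brown's Theorem~1 actually applies to nonzero integers $g_1,g_2,g_3$ under two conditions:
\begin{itemize}
\item They are multiplicatively independent: $g_1^{e_1}g_2^{e_2}g_3^{e_3}=1$ with $e_i\in\mathbb Z$ forces $e_1=e_2=e_3=0$. There is no ``or a square'' clause.
\item \emph{None of the seven integers} $g_1,g_2,g_3,-3g_1g_2,-3g_1g_3,-3g_2g_3,g_1g_2g_3$ is a square.
\end{itemize}
The last four non-square conditions do not follow from what you checked. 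The triple $(2,-6,5)$ passes every one of your tests, yet $-3\cdot2\cdot(-6)=36$, so Heath-Brown's theorem does not cover it. As written, your invocation of the theorem is therefore unjustified.

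The repair is a single line, after which your proof coincides with the paper's. For $(g_1,g_2,g_3)=(2,-3,-15)$ the four missing values are $-3g_1g_2=18$, $-3g_1g_3=90$, $-3g_2g_3=-135$, and $g_1g_2g_3=90$. None is a square, since $16<18<25$, $81<90<100$, and $-135<0$. Your valuation argument at $5$, $3$, $2$ for multiplicative independence is the same as the paper's.
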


\begin{proof}
Heath-Brown's three-integer theorem \cite[Thm.~1]{HeathBrown1986} applies to
nonzero multiplicatively independent integers \(g_1,g_2,g_3\) when none of
\[
 g_1,g_2,g_3,-3g_1g_2,-3g_1g_3,-3g_2g_3,g_1g_2g_3
\]
is an integer square.  Take
\[
 (g_1,g_2,g_3)=(2,-3,-15).
\]
Multiplicative independence follows successively from the valuations at
\(2,5,3\).  The seven required integers are
\[
 2,-3,-15,18,90,-135,90,
\]
where \(90\) occurs twice because two of the seven expressions have the same
value in this specialization; none is a square.  Heath-Brown's theorem
therefore proves that, for
infinitely many primes, at least one of the three selected integers is a
primitive root.  Corollary~\ref{cor:arc-safe-slopes} and
Theorem~\ref{thm:primitive-slope} give correction, while
Proposition~\ref{prop:primitive-slope-rank} gives rank \(3n-2\).
Subtracting this rank from \(\binom{n+1}{2}\) gives the dimension in
\eqref{eq:unconditional-parameters}.
\end{proof}

The new family has the same redundancy and erasure capability as the 2020
construction.  Its contribution is arithmetic: after allowing
the slope to be selected from three fixed integers, it proves infinitely
many admissible prime lengths unconditionally.  At each such length its
redundancy is one above the graph Singleton value \(3n-3\), and two below
the Schmidt benchmark value \(3n\).  For example,
\[
 \operatorname{ord}_{17}(2)=8,
 \qquad
 \operatorname{ord}_{17}(-3)=16,
\]
so \(n=17\) is covered by the new slope \(-3\), but not by the published
fixed-slope hypothesis.

\subsection{The multi-slope construction}

Throughout this subsection, let \(n\) be an odd prime for which \(2\) is
primitive modulo \(n\).  For each \(2\leq\rho<n\), this hypothesis ensures that
\(2^k\not\equiv1\pmod n\) for every \(1\leq k\leq\rho-2<n-1\).  Define
\begin{equation}
 T_s^{(k)}\triangleq T_s(2^k),
 \qquad
 s\in\mathbb Z_n,\quad 1\leq k\leq\rho-2.
\label{eq:Tks}
\end{equation}

The following definition includes exactly these additional slope families.
After stating the code and its theorem, we identify the precise residual edge
object on which they act.

\begin{definition}[Multi-slope code]
For \(2\leq\rho<n\), the binary code
\(\cC_\rho^{\rm MS}\subseteq\cG_2(n)\) consists of all graph words
whose labels satisfy
\begin{align}
 \sum_{\langle v_i,v_j\rangle\in S_h}e_{i,j}&=0
 &&(h\in[n]),\label{eq:MS-S}\\
 \sum_{\langle v_i,v_j\rangle\in D_m}e_{i,j}&=0
 &&(m\in[n]),\label{eq:MS-D}\\
 \sum_{\langle v_i,v_j\rangle\in T_s^{(k)}}e_{i,j}&=0
 &&(s\in[n],\ 1\leq k\leq\rho-2).
\label{eq:MS-T}
\end{align}
\end{definition}

For a failed set of \(\rho\) vertices, the neighborhood and diagonal checks
reduce the ordinary-edge ambiguity to an exterior coordinate with
\(\rho-1\) field parameters.  The \(\rho-2\) displayed slope families supply
the successive Frobenius relations needed to force that coordinate to zero.
The next theorem states the resulting correction claim; the paragraph after
it identifies the parameters and the appendix gives the reversible
calculation.

Let \(H_\rho^{\rm MS}\in\F_2^{\rho n\times N}\) be the matrix whose columns
are indexed by \(E_n\) and whose rows, in the order displayed above, are the
incidence vectors of the checks in
\eqref{eq:MS-S}--\eqref{eq:MS-T}.  Thus
\[
 \cC_\rho^{\rm MS}=\ker H_\rho^{\rm MS}.
\]
The matrix has \(\rho n\) listed rows, although some of them are dependent;
the rank assertion below determines the resulting redundancy.

At \(\rho=3\), this definition gives
\[
 \cC_3^{\rm MS}=\cC_3(2)=\cC_3.
\]

\begin{theorem}[Multi-slope correction]
\label{thm:multislope}
Suppose that \(n\) is an odd prime and \(2\) is primitive modulo \(n\).
For every \(2\leq\rho<n\), the code \(\cC_\rho^{\rm MS}\) corrects every set
of at most \(\rho\) node erasures.  If
\[
 2\leq\rho\leq\frac{n+1}{2},
\]
then
\begin{equation}
 r(\cC_\rho^{\rm MS})=\rho n-(\rho-1).
\label{eq:MS-rank}
\end{equation}
\end{theorem}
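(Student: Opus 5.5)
The proof has two parts: a local correction argument extending the algebra sketched for \(\cC_3(\lambda)\), and a rank count.

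\emph{Correction.} By Lemma~\ref{lem:erased-column-criterion} it suffices to fix a failed set \(I\) with \(|I|=\rho\) and show that any \(X\in\cC_\rho^{\rm MS}\) with \(\supp(X)\subseteq\St(I)\) is zero. Smaller sets are contained in a \(\rho\)-set, so this covers all patterns of at most \(\rho\) erasures. Affine maps \(\varphi_{c,d}\) permute each check family \(S_h,D_m,T_s^{(k)}\) as displayed before, so we normalize \(I=\{0,a_1,\dots,a_{\rho-1}\}\).

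Let \(B\) be the ordinary-edge label matrix. The neighborhood checks put \(B\in\mathcal Z(K_n)\), so \(\Omega(B)\in\bigwedge^2_{\F_2}R_n\) is available. Put \(\delta_i=1+z_{a_i}\), which are units. As in the triple case, support on \(\St(I)\) together with the neighborhood checks gives \(\Omega(B)=\sum_{i=1}^{\rho-1}\delta_i\wedge\beta_i\). The diagonal checks allow us to arrange \(\sum_i\delta_i\beta_i=0\). This is one linear relation among the \(\beta_i\), leaving the ``exterior coordinate with \(\rho-1\) field parameters'' that the text mentions.

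Each slope family \(T^{(k)}\) evaluates the syndrome through \(\sigma_{2^k}\), which is the \(2^k\)-th Frobenius power in every field factor of \(R_n\). The \(\rho-2\) slope families, together with the diagonal relation, should therefore yield a square system
\[
\sum_i \delta_i^{2^k}\,\beta_i' = 0,\qquad 0\le k\le\rho-2,
\]
for suitably normalized unknowns \(\beta_i'\). The coefficient matrix is a Moore matrix in \(\delta_1,\dots,\delta_{\rho-1}\) (or in quotients such as \(\delta_i\delta_{\rho-1}^{-1}\)). Its determinant is nonzero in a field factor exactly when these elements are \(\F_2\)-linearly independent there.

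When \(2\) is primitive, \(R_n\cong\F_{2^{n-1}}\) is a field and the powers \(z^{a_i}\) with distinct \(a_i\in\mathbb Z_n^\times\) are linearly independent over \(\F_2\) modulo \(1+\dots+x^{n-1}\). So \(1+z_{a_i}\), together with \(1\), are independent provided \(\rho<n\). Hence the Moore determinant is a unit, and the unknowns are forced into the residual subspace annihilated by \(y\mapsto y+\sigma_2(y)\). Primitivity makes that subspace \(\F_2\), which the wedge map kills, as in the triple case. Then \(\Omega(B)=0\), so \(B=0\) because \(\Omega\) is an isomorphism, and the diagonal checks, each containing one loop, zero the \(\rho\) loops.

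\emph{Main obstacle.} The hardest step is the exact bookkeeping that turns the slope syndromes into the Moore system. Specifically, one must show that the contribution of \(\delta_i\wedge\beta_i\) to \(T_s(2^k)\) factors as \(\delta_i\,\sigma_{2^k}(\delta_i)\) times a Frobenius difference, mirroring \eqref{eq:lambda-factorization}. I would first do this with \(\rho=3\) as a template, reusing the appendix computation for a general slope \(\lambda\). I would then set \(\lambda=2^k\) and use \(\sigma_{2^k}(u)=u^{2^k}\) so that all parameters transform uniformly.

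\emph{Rank.} The \(\rho n\) rows have two obvious dependencies. The sum of all \(S_h\) rows counts each edge twice and so vanishes. For \(k\ge1\), each edge lies in exactly two \(T\)-sets (the two orientations), or in a single one counted twice, so the sum of every slope family \(T^{(k)}\) also vanishes. That gives \(\rho-1\) dependencies when \(\rho\ge2\), so \(\rank H_\rho^{\rm MS}\le\rho n-(\rho-1)\).

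For the lower bound, we show these are the only dependencies when \(\rho\le(n+1)/2\). Take a dependency \(\sum_h\alpha_hS_h+\sum_m\beta_mD_m+\sum_{k,s}\gamma_{k,s}T_s^{(k)}=0\) and pass to cyclic generating functions in \(\F_2[x]/(x^n-1)\):
\begin{itemize}
\item Loop coordinates meet only the \(D\) rows, which forces \(\beta=0\).
\item On an ordinary edge \(\{u,v\}\), the condition reads \(\alpha_u+\alpha_v+\sum_k(\gamma_{k,u+2^kv}+\gamma_{k,v+2^ku})=0\).
\item Taking Fourier coefficients at each \(\zeta^j\) gives a polynomial identity in two variables. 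Its monomials \(X^{j}Y^{2^kj}\) are distinct for distinct \(k\) when \(\rho-2<(n-1)/2\), since then \(2^k\not\equiv\pm2^{k'}\) and the monomial for one \(k\) is not the swap of another's.
\end{itemize}
This forces the \(\gamma_k\) and \(\alpha\) to be constant vectors, yielding exactly the \(\rho-1\) listed dependencies. The redundancy is then \(\rho n-(\rho-1)\), which matches \eqref{eq:MS-rank}.
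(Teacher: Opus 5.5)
Your rank count matches the paper's Fourier argument, and your set-up of the correction proof matches as well: $w=\Omega(B)=\sum_{i=1}^{s}\delta_i\wedge\beta_i$, the diagonal normalization $\sum_i\delta_i\beta_i=0$, and $\sigma_{2^k}(u)=u^{2^k}$ in the field $K$. The gap is the step you yourself flag, and it does not go through as proposed. The slope checks give
\[
 \sum_{i=1}^{s}\bigl(\delta_i\beta_i^{2^k}+\delta_i^{2^k}\beta_i\bigr)=0,
 \qquad 1\le k\le s-1.
\]
These equations are only semilinear in the $\beta_i$. No single renormalization $\beta\mapsto\beta'$ turns them, together with the product relation, into the $K$-linear Moore system $\sum_i\delta_i^{2^k}\beta_i'=0$.

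The triple-case factorization \eqref{eq:lambda-factorization} works only because the product relation collapses two unknowns to one parameter, $\beta_1=\delta_2y$ and $\beta_2=\delta_1y$. At slope $2$ the middle factor is then indeed the Moore determinant $\delta_1\delta_2^2+\delta_1^2\delta_2$ up to a unit. For $s\ge3$ this collapse fails. The tuples with $\sum_i\delta_i\beta_i=0$ and $w=0$ are exactly $\beta_i=\sum_j c_{ij}\delta_j$ with $C\in\F_2^{s\times s}$ symmetric and zero-diagonal. This is an $\F_2$-space of dimension $\binom s2$; for instance $\beta_1=\delta_2$, $\beta_2=\delta_1$, and all other $\beta_i=0$. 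So the residual is not one Frobenius-fixed scalar $y\in\F_2$, and an invertible Moore matrix forcing $\beta'=0$ cannot by itself encode the conclusion.

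Even the natural symmetric parametrization $\beta_i=\sum_{j\ne i}y_{ij}\delta_j$ only turns the slope equations into
\[
 \sum_{i<j}\bigl(\delta_i\delta_j^{2^k}+\delta_i^{2^k}\delta_j\bigr)\bigl(y_{ij}^{2^k}+y_{ij}\bigr)=0,
 \qquad 1\le k\le s-1.
\]
That is $s-1$ equations in $\binom s2$ unknowns. Its kernel is not contained in $\F_2^{\binom s2}$: for $s=3$ it contains $(y_{12},y_{13},y_{23})=t(\delta_3,\delta_2,\delta_1)$ for every $t\in K$. So no square invertible system appears.

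The missing ingredient is the Frobenius descent lemma (Lemma~\ref{lem:frobenius-descent}). It says that if $\xi_1,\dots,\xi_s$ are $\F_2$-independent and satisfy $\sum_i\xi_i\eta_i=0$ together with the $s-1$ Moore equations, then $\sum_i\xi_i\wedge\eta_i=0$. It is proved by induction on $s$, not by a determinant:
\begin{enumerate}
\item Scale to $\xi_s=1$ and eliminate $\eta_s=\sum_{i<s}\xi_i\eta_i$.
\item Rewrite the Moore equations as $\sum_{i<s}L_k(\xi_i)L_k(\eta_i)=0$, where $L_k(x)=x^{2^k}+x$.
\item Use $L_k=\sum_{r<k}L_1^{2^r}$ to unwind a triangular system. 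This shows that $A_i=\xi_i^2+\xi_i$ (still independent) and $B_i=\eta_i^2+\eta_i$ satisfy the same hypotheses with $s-1$ pairs.
\item Induction gives $\sum_{i<s}A_i\wedge B_i=0$. Hence $B_i=\sum_jc_{ij}A_j$ with $C$ symmetric, and zero-diagonal because $\sum_{i<s}A_iB_i=0$.
\item Since $\ker(x\mapsto x^2+x)=\F_2$, this lifts to $\eta_i=\sum_jc_{ij}\xi_j+\varepsilon_i$ and $\eta_s=\sum_{i<s}\varepsilon_i\xi_i$. Both make the wedge sum vanish.
\end{enumerate}
Applying the lemma with $\xi_i=\delta_i$ and $\eta_i=\beta_i'$ gives $w=0$. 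The rest of your argument, injectivity of $\Omega$ and then the diagonal checks for the loops, finishes the correction claim.
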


We now connect the theorem to its algebraic proof.  Let
\(X=(e_{u,v})\in\cC_\rho^{\rm MS}\) be supported on the stars of the
failed vertex-index set
\[
 A\triangleq\{a_0,\ldots,a_s\}\subseteq\mathbb Z_n,
 \qquad
 |A|=\rho,\quad s\triangleq\rho-1,
\]
where the displayed indices are distinct,
and let
\[
 B=(B_{uv})_{u,v\in\mathbb Z_n}\in\F_2^{n\times n},
 \qquad
 B_{uu}=0,\quad B_{uv}=B_{vu}=e_{u,v}\ (u\ne v),
\]
be the symmetric labeling matrix of the ordinary-edge part of \(X\).
Thus \(B=0\) means that all ordinary-edge labels of the codeword vanish.  In
\[
 K\triangleq\F_2[x]/(1+x+\cdots+x^{n-1})\cong\F_{2^{n-1}},
\]
the primitive-root hypothesis makes the cyclotomic polynomial irreducible.
Write \(z_j\) for the residue class of \(x^j\) in \(K\), and set
\[
\delta_i\triangleq z_{a_i}+z_{a_0},
\qquad 1\leq i\leq s.
\]
The neighborhood checks make the ordinary-edge support graph of \(B\)
Eulerian, so \(B\in\mathcal Z(K_n)\) in the loopless cycle-space notation
introduced above.  Define the linear coordinate map
\[
 \Omega:\mathcal Z(K_n)\longrightarrow\bigwedge_{\F_2}^2K,
 \qquad
 \Omega(B)\triangleq\sum_{0\leq u<v<n}B_{uv}z_u\wedge z_v.
\]
For every field \(L\) of characteristic two and every integer \(k\geq1\),
the expression
\[
 (a,b)\longmapsto ab^{2^k}+a^{2^k}b
\]
is alternating and \(\F_2\)-bilinear.  It therefore induces a unique
\(\F_2\)-linear map
\[
 \Phi_{k,L}:\bigwedge_{\F_2}^2L\longrightarrow L,
\qquad
 \Phi_{k,L}(a\wedge b)\triangleq ab^{2^k}+a^{2^k}b.
\]
This map records the nontrivial Frobenius blocks of the
parity-check columns.  Here it will be used with \(L=K\).
The cycle-space argument in Appendix~\ref{app:primitive-slope}, applied here
with \(R_n=K\), proves that the preceding map \(\Omega\) is an isomorphism.
Assign
to the ordinary-edge matrix the exterior coordinate
\[
 w\triangleq\Omega(B)\in\bigwedge_{\F_2}^2K.
\]
Because every nonzero ordinary-edge label meets \(A\), there exist
\(\beta_1,\ldots,\beta_s\in K\) such that
\[
 w=\sum_{i=1}^s\delta_i\wedge\beta_i.
\]
The diagonal checks allow the \(\beta_i\)'s to be chosen so that
\[
 \sum_{i=1}^s\delta_i\beta_i=0,
\]
while the \(k\)-th slope family gives
\[
 \sum_{i=1}^s
 \left(\delta_i\beta_i^{2^k}+\delta_i^{2^k}\beta_i\right)=0,
 \qquad 1\leq k\leq s-1.
\]
Lemma~\ref{lem:frobenius-descent} proves that these equations force \(w=0\).
The same cycle-space isomorphism is injective, so \(w=0\) gives \(B=0\),
and the diagonal checks force
all loop symbols of \(X\) to vanish.  Thus \(X=0\), which is exactly the local
condition for correction.  After closing that subproblem, the appendix uses a
separate Fourier decomposition to count row dependencies and obtain
\eqref{eq:MS-rank}.

\begin{corollary}[Redundancy comparison]
\label{cor:comparison}
Suppose that \(n\) is an odd prime, \(2\) is primitive modulo \(n\), and
\[
 2\leq\rho\leq\frac{n+1}{2}.
\]
Then
\[
 r(\cC_\rho^{\rm MS})-r_{\rm Sing}(n,\rho)
 =
 \binom{\rho-1}{2}.
\]
The Schmidt benchmark has redundancy
\[
 r_{\rm Schmidt}(n,\rho)\triangleq\rho n,
\]
so the multi-slope construction saves \(\rho-1\) binary checks.
\end{corollary}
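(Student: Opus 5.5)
The corollary is a direct arithmetic consequence of Theorem~\ref{thm:multislope} and the definition of \(r_{\rm Sing}\), so the plan is to substitute and simplify. Under the stated hypotheses (\(n\) an odd prime, \(2\) primitive modulo \(n\), and \(2\leq\rho\leq(n+1)/2\)), equation~\eqref{eq:MS-rank} of Theorem~\ref{thm:multislope} gives
\[
 r(\cC_\rho^{\rm MS})=\rho n-(\rho-1).
\]
The graph Singleton redundancy is \(r_{\rm Sing}(n,\rho)=\rho n-\binom{\rho}{2}\) by definition.

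Subtracting the two expressions and using Pascal's identity \(\binom{\rho}{2}=\binom{\rho-1}{2}+(\rho-1)\) gives
\[
 r(\cC_\rho^{\rm MS})-r_{\rm Sing}(n,\rho)
 =\binom{\rho}{2}-(\rho-1)
 =\binom{\rho-1}{2}.
\]
This is the first assertion. As a sanity check, \(\rho=2\) gives excess \(0\), so the two-node codes are optimal. The value \(\rho=3\) gives excess \(1\), which matches Proposition~\ref{prop:primitive-slope-rank} and the identity \(\cC_3^{\rm MS}=\cC_3\).

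For the Schmidt comparison, I would compute
\[
 r_{\rm Schmidt}(n,\rho)-r(\cC_\rho^{\rm MS})=\rho n-\bigl(\rho n-(\rho-1)\bigr)=\rho-1,
\]
so the multi-slope construction saves exactly \(\rho-1\) binary checks.

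There is no genuine obstacle here. The only point needing care is that the rank statement \eqref{eq:MS-rank} is available only in the range \(\rho\leq(n+1)/2\), which is why the corollary carries that hypothesis. For larger \(\rho\) the theorem still guarantees correction, but the redundancy is not pinned down, so the comparison would not follow.
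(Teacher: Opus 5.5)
\paragraph{Verdict.} Both differences are computed correctly, and your subtraction is exactly the paper's closing step (``subtraction gives the two asserted gaps''). The Singleton gap $\binom{\rho}{2}-(\rho-1)=\binom{\rho-1}{2}$ is also what the appendix records after its rank count.

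\paragraph{What is missing.} The paper spends its proof on the Schmidt benchmark, and your proposal skips this. You read $r_{\rm Schmidt}(n,\rho)\triangleq\rho n$ as a bare definition. The statement, however, asserts that the Schmidt benchmark \emph{has} this redundancy as a code that corrects $\rho$ node erasures. ``Saves $\rho-1$ checks'' only means something once that is established; otherwise you are comparing against a number that is not yet attached to any $\rho$-node-erasure code.

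\paragraph{The three facts the paper supplies.}
First, identify a binary graph word with its symmetric labeling matrix $A_G$. Loops sit on the diagonal, so the ambient space is all symmetric matrices, of dimension $N=n(n+1)/2$. Order the failed vertices first. A word supported on $\rho$ stars then has block form
\[
 A_G=\begin{pmatrix}A_{\rm ff}&B_{\rm fh}\\ B_{\rm fh}^{\mathsf T}&0\end{pmatrix},
\]
so its rank is at most $\rho+\operatorname{rank}B_{\rm fh}\le 2\rho$. Hence any symmetric rank-metric code of minimum rank distance $2\rho+1$ meets every $\cV_S$ trivially and corrects $\rho$ node erasures. It actually solves the stronger problem of excluding all nonzero symmetric differences of rank at most $2\rho$.

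Second, for odd $n$ and $2\le\rho<(n+1)/2$, Schmidt's even-characteristic construction with that rank distance has dimension $n(n-2\rho+1)/2$. Its redundancy is therefore $n(n+1)/2-n(n-2\rho+1)/2=\rho n$.

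Third, at the endpoint $\rho=(n+1)/2$ the required rank distance $n+2$ exceeds $n$. The benchmark degenerates to the zero code, whose redundancy $N=n(n+1)/2$ again equals $\rho n$.

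With these three facts in place, your arithmetic finishes the proof.
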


\begin{proof}
Under the labeling-matrix identification \(G\mapsto A_G\) from
Section~\ref{sec:preliminaries}, a graph word supported on the union of
\(\rho\) coordinate stars is a symmetric matrix.  After ordering the failed
vertices first, it has block form
\[
 \begin{pmatrix}A_{\rm ff}&B_{\rm fh}\\
 B_{\rm fh}^{\mathsf T}&0\end{pmatrix},
 \qquad
 A_{\rm ff}\in\F_2^{\rho\times\rho},
 \quad
 B_{\rm fh}\in\F_2^{\rho\times(n-\rho)},
\]
where \(A_{\rm ff}\) contains labels between two failed vertices and
\(B_{\rm fh}\) contains labels between a failed and a healthy vertex.  The
zero block contains the healthy--healthy coordinates, which are outside the
erasure support.  The full matrix therefore has rank at most \(2\rho\).
Thus a symmetric-bilinear-form code of rank distance \(2\rho+1\)
corrects the node erasures, although it solves the stronger problem of
excluding every nonzero symmetric difference of rank at most \(2\rho\).
For odd \(n\), the even-characteristic construction of
\cite{Schmidt2010}, together with the quadratic-form formulation in
\cite{Schmidt2020}, has dimension
\[
 k_{\rm Schmidt}=\frac{n(n-2\rho+1)}2
\]
for \(2\leq\rho<(n+1)/2\), and hence redundancy \(\rho n\).  At
\(\rho=(n+1)/2\), the same benchmark is simply the zero code of redundancy
\(N=\rho n\).  Subtraction gives the two asserted gaps.
\end{proof}

For \(\rho=2\), the code is the published optimal double-node code of
\cite{YohananovEfronYaakobi2020}, with redundancy \(2n-1\).  For
\(\rho=3\), it is \(\cC_3\), with redundancy
\(3n-2\), one bit above the Singleton value \(3n-3\).
Although the correction assertion remains valid for
\((n+1)/2<\rho<n\), no competitive redundancy claim is made there.  In that
range the selected powers of two can contain both a slope \(2^k\) and its
inverse \(2^{n-1-k}\).  Their Fourier supports overlap, so additional row
dependencies may occur and the displayed rank formula is no longer asserted.

\section{A Uniform Frobenius--Moore Edge Skeleton and Finite Triple Completions}
\label{sec:optimal-triples}

All constructions in this section are binary and remain in the original
looped space \(\cG_2(n)\).  Fix an integer \(2\leq\rho\leq n\) and a failed
vertex-index set \(Q\in\binom{[n]}{\rho}\).  By
Lemma~\ref{lem:erased-column-criterion}, a full-row-rank parity-check matrix
of a Singleton-optimal code has
\[
 r_{\rm Sing}(n,\rho)=\rho n-\binom\rho2
\]
rows, and all columns indexed by \(\St(Q)\) must be independent.  Exactly
\(\rho\) of these erased columns are loops; the remaining
\[
 \rho(n-1)-\binom\rho2
\]
are ordinary edges meeting \(Q\).  Our construction follows this count.  We
first set the loop columns aside and build one ordinary-edge skeleton that has
full local rank for every \(Q\).  We then return to the complete coding
problem: a common quotient reduces the check space to the Singleton dimension
without destroying those edge ranks, and the \(\rho\) loop columns must fill
the \(\rho\) remaining directions.  Theorem~\ref{thm:uniform-moore-skeleton}
solves the first stage, Proposition~\ref{prop:loop-completion} gives the exact
completion criterion, and Proposition~\ref{prop:finite-completions} supplies
optimal triple-node codes at four finite lengths.
After relabeling the vertices by an affine frame below, we reuse \(Q\) for
the corresponding \(\rho\)-subset of field labels.

We now construct the ordinary-edge part of the desired parity-check matrix.
The term Moore refers here to the consecutive Frobenius powers in the column
formula below.  The Frobenius alternating-form
layers used below are classical.  Over finite
fields they are the trace-form layers of the Delsarte--Goethals construction
\cite[Thm.~9]{DelsarteGoethals1975}; in the notation closest to ours, see
\cite[Eq.~(29) and Thm.~14]{Schmidt2010}.  Gow and Quinlan give the related
Galois-theoretic decomposition and rank formulation for odd extensions
\cite[Thms.~2 and~6]{GowQuinlan2009}, with an even-degree analogue in
\cite[Thm.~7]{GowQuinlan2009}.  Our task is different but built from
those layers: we realize their evaluations as columns indexed by the ordinary
edges of an affine frame and prove the exact simultaneous local-rank condition
required by node erasures.  The loop coordinates are then handled by the
completion criterion later in this section.

Put
\[
 m\triangleq n-1,
 \qquad
 F\triangleq\F_{2^m},
\]
and choose an affine frame
\[
 \cA\triangleq\{a_0,a_1,\ldots,a_m\}\subset F.
\]
Thus, from any fixed frame point, the \(m\) differences to the other frame
points form an \(\F_2\)-basis of \(F\).  We use these points as algebraic
labels for the vertices of the original graph, identifying
\[
 v_i\longleftrightarrow a_i,
 \qquad 0\leq i\leq m.
\]
Accordingly, the ordinary edge \(\langle v_i,v_j\rangle\) is denoted by
\(\{a_i,a_j\}\), and a subset of \(\cA\) denotes the corresponding vertex
set.  This is only a relabeling of the original graph coordinates; the field
structure will be used to assign a check column to each ordinary edge.

Fix an ordered \(\F_2\)-basis of \(F\).  Expanding each field element in
this basis and concatenating the resulting blocks identifies
\[
 F^\rho\cong\F_2^{\rho m}.
\]
Thus an element of \(F^\rho\) below represents one binary parity-check
column of length \(\rho m\), written compactly as \(\rho\) consecutive
\(m\)-bit blocks.  For every ordinary edge
\(\{a,b\}\in\binom{\cA}{2}\), assign the single column
\begin{equation}
 h_{a,b}^{(\rho)}
 \triangleq
 \begin{pmatrix}
  a+b\\
  a^2b+ab^2\\
  \vdots\\
  a^{2^{\rho-1}}b+ab^{2^{\rho-1}}
 \end{pmatrix}
 \in F^\rho\cong\F_2^{\rho m}.
\label{eq:Moore-column}
\end{equation}
The formula is symmetric in \(a\) and \(b\), so it is well defined for an
unordered edge.  Assemble these columns into the single global ordinary-edge
matrix
\[
 H_{\rm edge}^{(\rho)}
 \triangleq
 \bigl(h_{a,b}^{(\rho)}\bigr)_{
   \{a,b\}\in\binom{\cA}{2}}
 \in\F_2^{\rho m\times\binom n2}.
\]
This is the ordinary-edge portion from which the full parity-check matrix
will be obtained.  Its columns will later be projected to a quotient of the
check space, when needed, and one loop column for every vertex will be added.

For \(Q\in\binom{\cA}{\rho}\), define the ordinary-edge part of the erased
star by
\[
 \Stedge(Q)
 \triangleq
 \left\{\{a,b\}\in\binom{\cA}{2}:\{a,b\}\cap Q\ne\varnothing\right\}.
\]
The corresponding erased-edge block is
\[
 \left.H_{\rm edge}^{(\rho)}\right|_{\Stedge(Q)},
\]
and it contains
\[
 |\Stedge(Q)|
 =\rho(n-1)-\binom\rho2
 =\rho m-\binom\rho2
\]
columns.  The purpose of the algebraic assignment
\eqref{eq:Moore-column} is to give this block full column rank
simultaneously for every failed set \(Q\).

For a fixed vertex label \(a\), write \(x=a+b\) for the direction from
\(a\) to the other endpoint \(b\).  Since \(b=a+x\), substituting in
\eqref{eq:Moore-column} gives the binary linear map
\begin{equation}
 L_a^{(\rho)}:F\longrightarrow F^\rho,
 \qquad
 x\longmapsto
 \begin{pmatrix}
 x\\
 a^2x+ax^2\\
 \vdots\\
 a^{2^{\rho-1}}x+ax^{2^{\rho-1}}
 \end{pmatrix}.
\label{eq:Ua}
\end{equation}
Define
\[
 U_a^{(\rho)}\triangleq\operatorname{im}L_a^{(\rho)}\leq F^\rho.
\]
Each ordinary-edge column of
\(H_{\rm edge}^{(\rho)}\) incident with \(a\) is the image of the corresponding
basis direction \(x=a+b\).  Its first block is \(x\), so
\(L_a^{(\rho)}\) is injective.  Since these \(n-1\) directions form a basis of
\(F\), the incident columns form a basis of \(U_a^{(\rho)}\).  In the local
rank proof, the first block \(a+b\) of a putative edge-column relation forces
the selected edges to form an Eulerian graph; the remaining Frobenius blocks
then eliminate that graph.

\begin{theorem}[Uniform Moore edge-skeleton theorem]
\label{thm:uniform-moore-skeleton}
For every \(2\leq\rho\leq n\) and every \(\rho\)-subset
\(Q\subseteq\cA\), the erased-edge block has full column rank:
\[
 \rank_{\F_2}
 \left(\left.H_{\rm edge}^{(\rho)}\right|_{\Stedge(Q)}\right)
 =|\Stedge(Q)|
 =\rho m-\binom\rho2.
\]
Equivalently,
\[
 \dim_{\F_2}\sum_{a\in Q}U_a^{(\rho)}
 =\rho m-\binom\rho2
 =\rho n-\binom{\rho+1}{2}.
\]
\end{theorem}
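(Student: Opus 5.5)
The plan is to argue by contradiction from a putative column relation and push it through the same exterior-square coordinates used in the cyclic sections. Suppose \(c\in\F_2^{\Stedge(Q)}\) satisfies \(\sum_{\{a,b\}}c_{ab}h^{(\rho)}_{a,b}=0\), and let \(E'\subseteq\Stedge(Q)\) be its support.

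\emph{Step 1: the first block forces an Eulerian graph.} The first block reads \(\sum_{\{a,b\}\in E'}(a+b)=0\), which is the same as \(\sum_{v\in\cA}\deg_{E'}(v)\,v=0\) in \(F\). The integer degree sum is even, so the coefficients \(\deg_{E'}(v)\bmod 2\) have even total. Because \(\cA\) is an affine frame, an \(\F_2\)-combination of frame points with even coefficient total vanishes only when every coefficient is zero. Hence every vertex of \(E'\) has even degree, so \(E'\in\mathcal Z(K_n)\), now with vertices labelled by \(\cA\).

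\emph{Step 2: pass to the exterior coordinate.} Set \(w\triangleq\sum_{\{a,b\}\in E'}a\wedge b\in\bigwedge^2_{\F_2}F\). Block \(k+1\) of the relation is exactly \(\Phi_{k,F}(w)=0\) for \(1\le k\le\rho-1\). The cycle-space argument of Appendix~\ref{app:primitive-slope}, transported to the frame labelling, should show that \(B\mapsto\sum B_{ab}\,a\wedge b\) is an isomorphism \(\mathcal Z(K_n)\to\bigwedge^2F\). Both spaces have dimension \(\binom{m}{2}\), and the frame differences form a basis of \(F\). It therefore suffices to prove \(w=0\).

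Next I normalize. For Eulerian \(E'\), replacing each label \(v\) by \(v+t\) changes \(w\) by \(t\wedge\sum_{E'}(a+b)=0\). So I may translate so that \(q_0=0\) for a chosen \(q_0\in Q\). Write \(Q=\{q_0,\dots,q_s\}\) with \(s=\rho-1\) and \(\delta_i=q_i+q_0\). Assigning each edge of \(E'\) to one endpoint in \(Q\) gives \(w=\sum_{i=1}^s\delta_i\wedge\beta_i\) for some \(\beta_i\in F\). The \(\delta_i\) are \(\F_2\)-independent because \(\cA\) is an affine frame.

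\emph{Step 3: the main obstacle.} It remains to show that \(w=\sum_{i=1}^s\delta_i\wedge\beta_i\) together with \(\Phi_k(w)=0\) for \(1\le k\le s\) forces \(w=0\). I would first try to reduce this to Lemma~\ref{lem:frobenius-descent}. That lemma uses the constraints \(\sum\delta_i\beta_i=0\) and \(k\le s-1\), whereas here I have one extra Frobenius layer and no linear constraint. I would therefore re-run its descent: eliminate the \(\beta_i\) using the invertible Moore matrix \((\delta_i^{2^k})\), then apply inverse Frobenius powers to obtain a nonzero linearized polynomial whose \(2\)-degree is too small for the number of independent roots it is forced to have.

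If that reduction does not close, I would fall back on the rank-metric view. Identify \(w\) with the alternating form \((x,y)\mapsto\operatorname{Tr}\bigl(\sum_i(\delta_i x)(\beta_i y)+(\delta_i y)(\beta_i x)\bigr)\), which has rank at most \(2s\). The conditions \(\Phi_k(w)=0\) for all \(k\le s\) say that \(w\) is orthogonal, under the trace pairing, to every Delsarte--Goethals layer form \(\operatorname{Tr}\bigl(c(xy^{2^k}+x^{2^k}y)\bigr)\) with \(c\in F\) and \(1\le k\le s\). The required input is then the dual-distance statement from \cite[Thm.~9]{DelsarteGoethals1975} and \cite[Thm.~14]{Schmidt2010}: a nonzero alternating form annihilated by these layers has rank at least \(2s+2\). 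This contradicts rank at most \(2s\), so \(w=0\).

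Either route gives \(w=0\), hence \(E'=\varnothing\) and \(c=0\). This is full column rank, and the stated dimension count follows from \(|\Stedge(Q)|=\rho m-\binom\rho2\).
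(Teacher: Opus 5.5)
Your Steps 1 and 2 reproduce the paper's reduction. The first block makes the relation graph Eulerian, and $\Omega$ identifies the cycle space with $\bigwedge^2_{\F_2}F$. After putting $q_0$ at $0$ (you use translation invariance of $w$ on Eulerian graphs instead of the paper's shear, which is equally valid), you reach $w=\sum_{i=1}^s\delta_i\wedge\beta_i$ with $\Phi_{k,F}(w)=0$ for $1\le k\le s$. \emph{The gap is Step~3.} It carries the whole content of the theorem, it is exactly Lemma~\ref{lem:shifted-descent}, and your proposal does not prove it. The paper proves that lemma by induction on $s$. It normalizes $\xi_s=1$ and puts $A_i=\xi_i^2+\xi_i$, $B_i=\eta_i^2+\eta_i$ for $i<s$. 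It then uses the identity $\sum_{i<s}(A_iB_i^{2^k}+A_i^{2^k}B_i)=E_{k+1}+E_k+E_k^2+E_{k-1}^2$, in which the $i=s$ summand cancels because $\xi_s=1$. So the extra layer $k=s$, the one you flag as not matching Lemma~\ref{lem:frobenius-descent}, is consumed in passing to $s-1$ pairs. Induction then gives $\sum_{i<s}A_i\wedge B_i=0$, hence $B_i=\sum_jc_{ij}A_j$ with $C$ symmetric off the diagonal. Lifting through $\ker(x\mapsto x^2+x)=\F_2$ and using $E_1=0$ pins down $\eta_s$, and the wedge vanishes.

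Neither of your routes closes as written. In (a), each equation contains both $\beta_i$ and $\beta_i^{2^k}$, so the system is only $\F_2$-linear. Inverting the Moore matrix $(\delta_i^{2^k})$ merely re-expresses $\beta$ through its own Frobenius twists, and no specific linearized polynomial or root count appears. In (b), your displayed form is identically zero in characteristic two, since $(\delta_ix)(\beta_iy)=(\delta_iy)(\beta_ix)$. The correct identification uses $\operatorname{Tr}(\delta_ix)\operatorname{Tr}(\beta_iy)+\operatorname{Tr}(\delta_iy)\operatorname{Tr}(\beta_ix)$. More seriously, the ``dual-distance statement'' you import is exactly what must be shown. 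The standard way to extract it from Delsarte--Goethals theory is duality for codes meeting the alternating-form bound, and the low-layer code meets that bound only for odd $m$, whereas the theorem is claimed for every $n$.

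Route (b) can be made rigorous for all $m$ as follows. Define $T_w$ by $T_{a\wedge b}(x)=\operatorname{Tr}(ax)b+\operatorname{Tr}(bx)a$. Expanding the trace gives $T_w(x)=\sum_{j=1}^{m-1}\Phi_{j,F}(w)x^{2^j}$. The map $w\mapsto T_w$ is injective by nondegeneracy of the trace form, and $\Phi_{m-j,F}(w)=\Phi_{j,F}(w)^{2^{m-j}}$. Hence $\Phi_{k,F}(w)=0$ for $k\le s$ leaves only exponents $2^j$ with $s<j<m-s$. Raising $T_w$ to the $2^{m-s-1}$-th power gives a linearized polynomial of $2$-degree at most $m-2s-2$ with the same kernel. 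If that polynomial is nonzero, its kernel has dimension at most $m-2s-2$. But $\operatorname{im}T_w\subseteq\Span_{\F_2}\{\delta_i,\beta_i\}$ forces $\dim\ker T_w\ge m-2s$. So $T_w=0$ and $w=0$. This finite-field argument is shorter than the paper's induction; the paper's lemma has the advantage of holding over every field of characteristic two. One of the two must be supplied for your proof to be complete.
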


We now make a temporary algebraic detour to prove this edge-rank statement.
Fix \(Q\in\binom{\cA}{\rho}\), and suppose that a binary relation among the
columns of its erased-edge block is
\begin{equation}
 \sum_{\{a,b\}\in\Stedge(Q)}
 c_{a,b}h_{a,b}^{(\rho)}=0,
 \qquad c_{a,b}\in\F_2.
\label{eq:Moore-proposed-relation}
\end{equation}
The coefficient \(c_{a,b}\) therefore records whether the column belonging
to the erased ordinary edge \(\{a,b\}\) participates in the proposed
relation.  Define the corresponding relation graph
\[
 G_c\triangleq(\cA,E_c),
 \qquad
 E_c\triangleq
 \{\{a,b\}\in\Stedge(Q):c_{a,b}=1\}.
\]
Thus the vertices of \(G_c\) are all \(n\) field labels in \(\cA\), including
any isolated labels, and its edges encode the nonzero coefficients in
\eqref{eq:Moore-proposed-relation}.  It is an auxiliary graph for a possible
column relation, not a graph word of the code.  The first block of the columns
makes \(G_c\) Eulerian.  We need a lossless linear coordinate for this
Eulerian relation graph and a way to express every remaining Frobenius block
through that same coordinate.  The exterior square supplies both.  We now
define it and prove the descent lemma that will force the coordinate to
vanish; immediately afterward we return to the edge columns and conclude that
every coefficient in the proposed relation is zero.

We will use the following concrete coordinate description of the exterior
square defined in Section~\ref{sec:cyclic-slopes}.  If \(e_1,\ldots,e_m\) is a
basis of an \(\F_2\)-vector space \(V\) and
\(u=\sum_i u_i e_i\), \(v=\sum_i v_i e_i\), then
\[
 u\wedge v
 =
 \sum_{i<j}(u_i v_j+u_j v_i)e_i\wedge e_j.
\]
Thus the wedge records all binary \(2\)-by-\(2\) minors of the pair
\((u,v)\).  It is bilinear and alternating; in characteristic two,
\[
 u\wedge u=0,
 \qquad
 u\wedge v=v\wedge u,
\]
and \(u\wedge v=0\) exactly when \(u\) and \(v\) are
\(\F_2\)-linearly dependent.
For subspaces \(U,W\leq V\), write
\[
 U\wedge W
 \triangleq
 \Span_{\F_2}\{u\wedge w:u\in U,\ w\in W\}.
\]
Whenever an \(\F_2\)-algebra occurs inside a wedge, it is viewed only as a
vector space over \(\F_2\); the wedge is not the multiplication of that
algebra.

Here is the precise interface with the edge-rank problem.  After translating
one failed label to \(0\), choose a basis of the span of the other failed
labels.  The Eulerian relation graph will be encoded by one exterior element
written using that basis and an equal number of field parameters.  Vanishing
of the remaining Frobenius blocks gives exactly the hypotheses of the next
lemma.  Its conclusion says that this exterior encoding is zero; after the
lemma we construct the encoding explicitly and use its injectivity to return
to the graph and annihilate the original column relation.

Over finite fields, the vanishing mechanism in the next lemma closely
parallels the restriction property just cited.  We retain a direct
exterior-square proof
because the formulation holds over every field of characteristic two and
returns exactly the exterior relation needed for the erased-edge columns.

\begin{lemma}[Shifted Frobenius descent]
\label{lem:shifted-descent}
Let \(F\) be a field of characteristic two, and let \(s\geq1\) be an integer.
Suppose that
\(\xi_1,\ldots,\xi_s\in F\) are linearly independent over \(\F_2\) and that
\(\eta_1,\ldots,\eta_s\in F\) satisfy
\[
 E_k\triangleq\sum_{i=1}^s
 \left(\xi_i\eta_i^{2^k}+\xi_i^{2^k}\eta_i\right)=0,
 \qquad 1\leq k\leq s.
\]
Then
\[
 \sum_{i=1}^s \xi_i\wedge\eta_i=0
 \qquad\text{in }\bigwedge_{\F_2}^2F.
\]
\end{lemma}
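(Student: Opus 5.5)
The plan is to rewrite \(\omega\triangleq\sum_{i=1}^s\xi_i\wedge\eta_i\) in a symplectic normal form and then use the slope relations, extended to negative indices by Frobenius roots, to produce a vanishing Moore system that forces \(\omega=0\). We use the well-defined \(\F_2\)-linear map \(\Phi_{k,F}:\bigwedge^2_{\F_2}F\to F\), \(a\wedge b\mapsto ab^{2^k}+a^{2^k}b\), from Section~\ref{sec:cyclic-slopes}. The hypotheses then say exactly that \(\Phi_{k,F}(\omega)=0\) for \(1\le k\le s\), and the goal is \(\omega=0\).

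\emph{Step 1: normal form.} Let \(W\triangleq\Span_{\F_2}\{\xi_i,\eta_i\}\), a finite-dimensional space with \(\omega\in\bigwedge^2W\subseteq\bigwedge^2F\). Elements of \(\bigwedge^2W\) correspond to alternating bilinear forms on \(W^*\). By the standard symplectic basis theorem over \(\F_2\), there are an integer \(r\ge0\) and \(\F_2\)-linearly independent \(x_1,\ldots,x_r,y_1,\ldots,y_r\in W\) with
\[
\omega=\sum_{i=1}^r x_i\wedge y_i .
\]
Here \(2r\) is the rank of the associated form. Since \(\omega\) is a sum of \(s\) decomposable wedges, \(2r\le 2s\). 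Because \(\Phi_{k,F}\) depends only on \(\omega\), the hypotheses become
\[
\sum_{i=1}^r\bigl(x_iy_i^{2^k}+x_i^{2^k}y_i\bigr)=0,\qquad 1\le k\le s .
\]
The \(\F_2\)-independence of the \(\xi_i\) is not needed here; only the count \(r\le s\) is used.

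\emph{Step 2: extension to negative and zero indices.} Relabel the \(2r\) elements as \(z_1,\ldots,z_{2r}\), with \(z_{2i-1}=x_i\) and \(z_{2i}=y_i\). Define partners \(z'_{2i-1}=y_i\) and \(z'_{2i}=x_i\). Then the relation for index \(k\) reads \(\sum_j z'_j z_j^{2^k}=0\). For \(k=0\) this holds trivially, since the sum is \(2\sum_i x_iy_i=0\) in characteristic two. Taking the \(2^k\)-th root of the relation for \(k\), which is legitimate because Frobenius is injective, gives \(\sum_j z_j' z_j^{2^{-k}}=0\), where this expression is interpreted in the perfect closure. To avoid roots, raise the whole family to the power \(2^s\). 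With \(c_j\triangleq (z'_j)^{2^s}\), the result is
\[
\sum_{j=1}^{2r} c_j\, z_j^{2^{\ell}}=0,\qquad 0\le \ell\le 2s .
\]
Here the index \(k\in\{-s,\ldots,s\}\) corresponds to \(\ell=s+k\).

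\emph{Step 3: Moore argument.} This is a system of \(2s+1\ge 2r\) homogeneous linear equations in \(c_1,\ldots,c_{2r}\). Its coefficient matrix contains the \(2r\times2r\) Moore matrix \((z_j^{2^\ell})_{0\le\ell<2r}\). That matrix is nonsingular because \(z_1,\ldots,z_{2r}\) are \(\F_2\)-linearly independent, which is the classical Moore determinant identity. Hence every \(c_j=0\), so every \(z'_j=0\). This contradicts the independence of the \(x_i,y_i\) unless \(r=0\). Therefore \(r=0\) and \(\omega=0\).

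The main point to get right is the bookkeeping in Steps 1--2. The normal form must be taken inside \(\bigwedge^2W\), and one must check that the rank bound \(2r\le 2s\) meshes with the \(2s+1\) consecutive Frobenius exponents obtained after symmetrizing. The Moore nonsingularity is classical and needs only \(\F_2\)-independence over an arbitrary field of characteristic two, so the argument does not use finiteness of \(F\).
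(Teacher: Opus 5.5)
Your proof is correct, and it takes a genuinely different route from the paper's.

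\emph{The paper's route.} It inducts on \(s\). It scales to \(\xi_s=1\) and applies the Artin--Schreier map \(x\mapsto x^2+x\) to obtain pairs \((A_i,B_i)\), \(i<s\). The identity \(\sum_{i<s}(A_iB_i^{2^k}+A_i^{2^k}B_i)=E_{k+1}+E_k+E_k^2+E_{k-1}^2\) shows that these pairs satisfy the same system one size down. It then lifts \(\sum_{i<s}A_i\wedge B_i=0\) back through \(\ker(x^2+x)=\F_2\), using \(E_1=0\) to control \(\eta_s\). The \(\F_2\)-independence of the \(\xi_i\) is what makes the \(A_i\) independent at each stage.

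\emph{Your route.} You never descend. You bound the rank of \(\omega\) by \(2s\) via the symplectic normal form. The symmetry of \(\Phi_{k,F}\) under Frobenius roots then turns the \(s\) hypotheses, together with the trivial \(k=0\) relation, into \(2s+1\) consecutive Moore equations, and the Moore determinant finishes. One small step in Step~2 is left implicit. The literal \(2^k\)-th root of the \(k\)-th relation is \(\sum_j (z_j')^{2^{-k}}z_j=0\), and your displayed form needs the partner involution \(z_j\leftrightarrow z_j'\). More cleanly, raise the \(k\)-th relation to the power \(2^{s-k}\) and apply that involution; this lands directly on the \(\ell=s-k\) equation inside \(F\), with no perfect closure.

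\emph{What each buys.} Your argument is shorter and more conceptual. It explains why \(s\) Frobenius layers suffice: the rank is at most \(2s\), while you have \(2s+1\) consecutive exponents. In fact, vanishing of \(\Phi_{k,F}(\omega)\) for \(1\le k\le r\) already forces a class of rank \(2r\) to be zero. It never uses the independence of the \(\xi_i\), and like the paper's proof it works over any field of characteristic two. The paper's argument is more elementary and coordinate-level. Its Artin--Schreier descent is also the engine reused for Lemma~\ref{lem:frobenius-descent}. There the product hypothesis \(\sum_i\xi_i\eta_i=0\) is not a function of the class \(\omega\) alone, so your normal-form reduction would not transfer to that companion lemma verbatim.
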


\begin{proof}
We induct on \(s\).  For \(s=1\), scale to \(\xi_1=1\).  The equation
\(E_1=0\) gives \(\eta_1^2+\eta_1=0\), so \(\eta_1\in\F_2\) and
\(\xi_1\wedge\eta_1=0\).

For the induction step, multiplying all \(\xi_i,\eta_i\) by \(\xi_s^{-1}\)
multiplies each \(E_k\) by a nonzero scalar and induces an invertible binary
map on the exterior square.  We may therefore assume \(\xi_s=1\).  For
\(i<s\), put
\[
 A_i\triangleq \xi_i^2+\xi_i,
 \qquad
 B_i\triangleq \eta_i^2+\eta_i.
\]
The \(A_i\)'s are linearly independent: a binary relation among them says
that \(x^2+x=0\) for a binary combination \(x\) of the \(\xi_i\)'s with
\(i<s\).  The alternatives \(x=0,1\) contradict the independence of
\(\xi_1,\ldots,\xi_{s-1},1\), unless the relation is trivial.

Set \(E_0\triangleq0\).  Direct expansion gives, for \(1\leq k<s\),
\[
 \sum_{i<s}
 \left(A_iB_i^{2^k}+A_i^{2^k}B_i\right)
 =E_{k+1}+E_k+E_k^2+E_{k-1}^2=0.
\]
The induction hypothesis implies
\[
 \sum_{i<s}A_i\wedge B_i=0.
\]
Extend the \(A_i\)'s to a binary basis of \(F\).  Comparing exterior
coefficients first shows that every \(B_i\) lies in their span and then that
there is a matrix
\[
 C=(c_{ij})_{1\leq i,j<s}\in\F_2^{(s-1)\times(s-1)},
\]
symmetric off the diagonal, such
that
\[
 B_i=\sum_{j<s}c_{ij}A_j.
\]
Since the kernel of \(x\mapsto x^2+x\) is \(\F_2\),
\[
 \eta_i=\sum_{j<s}c_{ij}\xi_j+\varepsilon_i,
 \qquad \varepsilon_i\in\F_2.
\]
In \(E_1=0\), the off-diagonal terms involving \(C\) cancel in symmetric
pairs and the diagonal terms vanish.  Hence
\[
 \eta_s^2+\eta_s
 =\sum_{i<s}\varepsilon_i(\xi_i^2+\xi_i),
\]
and therefore
\[
 \eta_s=\sum_{i<s}\varepsilon_i \xi_i+\varepsilon_s
 \qquad(\varepsilon_s\in\F_2).
\]
Substitution now gives
\[
 \sum_{i=1}^s \xi_i\wedge\eta_i=0:
\]
the terms involving \(C\) cancel by symmetry, and the terms involving the
\(\varepsilon_i\)'s cancel against \(1\wedge\eta_s\).
\end{proof}

\begin{proof}[Proof of Theorem~\ref{thm:uniform-moore-skeleton}]
Choose \(q_0\in Q\) and translate every field label by the affine permutation
\[
 \tau_{q_0}:F\longrightarrow F,
 \qquad
 \tau_{q_0}(a)\triangleq a+q_0.
\]
This sends \(q_0\) to \(0\).  Write a column of \(F^\rho\) in block form as
\[
 (x,y_1,\ldots,y_{\rho-1}).
\]
For a general translation \(\tau_t(a)=a+t\), the column formula
\eqref{eq:Moore-column} changes by the invertible shear
\[
 \begin{aligned}
 x&\longmapsto x,\\
 y_k&\longmapsto y_k+t^{2^k}x+tx^{2^k},
 &&1\leq k\leq\rho-1,
\end{aligned}
\]
to all columns, so it preserves ranks.  Apply this with \(t=q_0\).
The translated affine frame has \(0\) as one point, and its other \(m\)
points form an \(\F_2\)-basis of \(F\); label them
\(e_1,\ldots,e_m\).  The translation merely relabels
\eqref{eq:Moore-proposed-relation}, its failed set, and its relation graph.
For readability, retain the notation \(\cA,Q,c_{a,b},G_c\), and \(E_c\) for
the translated objects.  The first
coordinate of the relation is
\[
 \sum_{a\in\cA}\deg_{G_c}(a)\,a=0,
\]
where degrees are taken modulo two.  The basis property forces every
nonzero vertex to have even degree, and the handshake identity gives the
same conclusion at \(0\).  Thus \(G_c\) lies in the binary cycle space of the
complete graph.

Let \(K_{\cA}\triangleq(\cA,\binom{\cA}{2})\) be the complete graph on the
field-labeled vertex set, and let
\(\mathcal Z(K_{\cA})\subseteq\F_2^{\binom{\cA}{2}}\) be its binary cycle
space.  For a cycle-space graph \(G=(\cA,E(G))\), define
\[
 \Omega:\mathcal Z(K_{\cA})\longrightarrow\bigwedge_{\F_2}^2F,
 \qquad
 \Omega(G)\triangleq\sum_{\{a,b\}\in E(G)}a\wedge b.
\]
This map is an isomorphism: the cycles
on \(\{0,e_i,e_j\}\) form a basis and map to the basis elements
\(e_i\wedge e_j\).  Put
\[
 U\triangleq\Span_{\F_2}(Q\setminus\{0\}),
 \qquad \dim U=\rho-1.
\]
Every edge of \(G_c\) meets \(Q\), so \(\Omega(G_c)\in U\wedge F\).  Write
\[
 \Omega(G_c)=\sum_{i=1}^{\rho-1}u_i\wedge b_i
\]
for a basis \(u_1,\ldots,u_{\rho-1}\) of \(U\) and suitable
\(b_1,\ldots,b_{\rho-1}\in F\).  The remaining coordinates
are linked to this same exterior class by the maps
\(\Phi_{k,F}\) defined in the multi-slope section, now applied to \(F\).
Here \(1\leq k\leq\rho-1\).
Indeed, the sum of the \(k\)-th Frobenius blocks in the proposed column
relation is exactly \(\Phi_{k,F}(\Omega(G_c))\).  Thus their vanishing says
\[
 \sum_{i=1}^{\rho-1}
 \left(u_i b_i^{2^k}+u_i^{2^k}b_i\right)=0,
 \qquad 1\leq k\leq\rho-1.
\]
Lemma~\ref{lem:shifted-descent} gives \(\Omega(G_c)=0\).  Since \(\Omega\) is
injective on the cycle space, every coefficient \(c_{a,b}\) is zero.  Hence
the proposed relation is trivial, and the displayed rank equals the number of
incident ordinary edges.
\end{proof}

We have returned to the coding problem with the first stage complete: for
every failed \(Q\), the ordinary-edge columns span a space of dimension
\(\rho m-\binom\rho2\), exactly their number.  It remains to place this
skeleton in a check space of Singleton dimension and
to restore the loop columns.  Since
\[
 r_{\rm Sing}(n,\rho)
 =\rho m-\left(\binom\rho2-\rho\right),
\]
the ambient space \(F^\rho\) must lose
\(\binom\rho2-\rho\) binary dimensions.  For \(3\leq\rho\leq n\), set
\[
 g_\rho\triangleq\binom\rho2-\rho=\frac{\rho(\rho-3)}2,
\]
and, for \(Q\in\binom{\cA}{\rho}\), put
\[
 E_Q\triangleq\sum_{a\in Q}U_a^{(\rho)}\leq F^\rho.
\]
Equivalently, \(E_Q\) is the column space of
\(H_{\rm edge}^{(\rho)}|_{\Stedge(Q)}\).
Thus the remaining tasks are exact: find one \(g_\rho\)-dimensional quotient
kernel that misses every \(E_Q\), and then choose fixed loop columns that
complete each surviving edge space.  The next proposition characterizes
precisely when both tasks have succeeded.

\begin{proposition}[Common-quotient and loop-completion criterion]
\label{prop:loop-completion}
Let \(3\leq\rho\leq n\).  Let \(Z\leq F^\rho\) have dimension \(g_\rho\) and satisfy
\[
 Z\cap E_Q=\{0\}
 \qquad\text{for every }Q\in\binom{\cA}{\rho}.
\]
Let
\[
 \pi_Z:F^\rho\longrightarrow F^\rho/Z
\]
be the quotient map.  For each \(a\in\cA\), choose
\(\ell_a\in F^\rho\) as the parity-check column assigned to the self-loop at
the vertex labeled \(a\).  Under the vertex relabeling
\(V_n\leftrightarrow\cA\), define the binary linear parity-check map
\[
 H_{Z,\ell}:\cG_2(n)\longrightarrow F^\rho/Z
\]
by assigning \(\pi_Z(h_{a,b}^{(\rho)})\) to the ordinary edge
\(\{a,b\}\) and \(\pi_Z(\ell_a)\) to the loop at \(a\), and set
\[
 \cC_{Z,\ell}\triangleq\ker H_{Z,\ell}.
\]
Then \(\cC_{Z,\ell}\) is a Singleton-optimal binary
\(\rho\)-node-erasure-correcting code if and only if, for every \(Q\), the
\(\rho\) loop images indexed by \(Q\) form a basis of
\[
 \frac{F^\rho/Z}{(E_Q+Z)/Z}.
\]
\end{proposition}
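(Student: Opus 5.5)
We would prove the criterion by a dimension count combined with the erased-column criterion, Lemma~\ref{lem:erased-column-criterion}. First record the ambient sizes. Since \(\dim_{\F_2}F^\rho=\rho m\) and \(\dim Z=g_\rho\), the quotient has dimension
\[
 \dim F^\rho/Z=\rho m-\binom\rho2+\rho=\rho n-\binom\rho2=r_{\rm Sing}(n,\rho).
\]
Next, Theorem~\ref{thm:uniform-moore-skeleton} gives \(\dim E_Q=\rho m-\binom\rho2\). Because \(Z\cap E_Q=\{0\}\), the restriction of \(\pi_Z\) to \(E_Q\) is injective. Hence \((E_Q+Z)/Z\) has dimension \(\rho m-\binom\rho2\), and the projected erased-edge columns \(\pi_Z(h^{(\rho)}_{a,b})\), \(\{a,b\}\in\Stedge(Q)\), remain linearly independent. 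The quotient \(\frac{F^\rho/Z}{(E_Q+Z)/Z}\) then has dimension exactly \(\rho\), which equals the number of loops indexed by \(Q\).

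Now apply Lemma~\ref{lem:erased-column-criterion} to \(H_{Z,\ell}\), regarded as a matrix after choosing a basis of \(F^\rho/Z\). The erased columns for \(Q\) are the independent projected edge columns together with the \(\rho\) loop images \(\pi_Z(\ell_a)\), \(a\in Q\). A family consisting of an independent set spanning \(W\) plus extra vectors is independent if and only if the images of the extra vectors in the quotient by \(W\) are independent. Here \(W=(E_Q+Z)/Z\), and that quotient has dimension \(\rho\). So independence of the loop images there is the same as their forming a basis. Thus the basis condition for every \(Q\) is equivalent to \(\cC_{Z,\ell}\) correcting every \(\rho\)-node erasure. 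We use the lemma's formulation with \(\rho\)-sets, which covers smaller sets because \(\rho\le n\) and subsets of independent sets are independent.

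It remains to match optimality with correction. If \(\cC_{Z,\ell}\) corrects \(\rho\) erasures, then \(r(\cC_{Z,\ell})=\rank H_{Z,\ell}\le\dim F^\rho/Z=r_{\rm Sing}(n,\rho)\). Combined with \eqref{eq:singleton}, this gives equality, so the code is optimal. Conversely, an optimal code corrects \(\rho\) erasures by definition.

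The only delicate point is one we should state explicitly. When the basis condition holds for any single \(Q\), the erased columns already span \(F^\rho/Z\). So \(H_{Z,\ell}\) is surjective, its rank is the full quotient dimension, and the redundancy is exactly \(r_{\rm Sing}\). This surjectivity is where a careless argument could confuse listed rows with rank, so we would make it explicit. Otherwise we expect no real obstacle: everything reduces to Theorem~\ref{thm:uniform-moore-skeleton} and linear algebra.
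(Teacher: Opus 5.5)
Your proposal is correct and follows essentially the same route as the paper. It uses the dimension count $\dim F^\rho/Z=r_{\rm Sing}(n,\rho)$, then Theorem~\ref{thm:uniform-moore-skeleton} with $Z\cap E_Q=\{0\}$ to keep the projected edge columns independent with codimension $\rho$, and then Lemma~\ref{lem:erased-column-criterion}. Your explicit step that correction forces $r(\cC_{Z,\ell})=\rank H_{Z,\ell}\le r_{\rm Sing}(n,\rho)$, and hence equality with \eqref{eq:singleton}, simply spells out what the paper leaves implicit.
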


\begin{proof}
The projected check space has dimension
\[
 \rho m-g_\rho=\rho n-\binom\rho2,
\]
the Singleton redundancy.  Theorem~\ref{thm:uniform-moore-skeleton} gives
independent edge columns spanning \(E_Q\).  Because the kernel of
\(\pi_Z|_{E_Q}\) is \(Z\cap E_Q=\{0\}\), this projection preserves their full
rank for every \(Q\).  Its codimension in the projected check space is exactly
\(\rho\), so the erased columns are independent precisely under the stated
loop-basis condition.
\end{proof}

For \(\rho=3\), one has \(g_3=0\).  Thus the ambient binary dimension of
\(F^3\) is already the Singleton dimension \(3m=3n-3\), so no common
quotient is required.  For every failed triple, the edge skeleton has rank
\(3m-3\); its three loop columns must therefore form a basis of the remaining
three-dimensional quotient.  This is exactly the missing-bit completion
problem.  We now suppress the superscript \((3)\), writing \(h_{a,b}\) and
\(U_a\).

The loop \(\ell_a\) is tested only for failed triples \(Q\) containing \(a\),
and then \(U_a\subseteq E_Q\).  Adding an element of \(U_a\) to \(\ell_a\)
therefore does not change its quotient class.  Since the first coordinate of
the map defining \(U_a\) is \(x\), we may normalize the first coordinate of
every loop to zero.  For the finite completion search, we now specialize the
arbitrary affine frame used above to the orbit of a normal element
\(\theta\in F\), whose Frobenius conjugates form a basis, namely the frame
\[
 \cA\triangleq\{0,\theta,\theta^2,\theta^{2^2},\ldots,\theta^{2^{m-1}}\}.
\]
The Frobenius symmetry below determines all \(n-1\) nonzero-frame loop columns
from only two field elements, while the loop at zero requires two additional
bits.  Thus choose \(y,z\in F\) and \(\epsilon,\delta\in\F_2\), and use
\begin{align}
 \ell_0&\triangleq(0,\epsilon,\delta),\label{eq:loopzero}\\
 \ell_{\theta^{2^i}}&\triangleq(0,y^{2^i},z^{2^i}),
 \qquad 0\leq i<m.
\label{eq:Frob-loops}
\end{align}

\begin{proposition}[Exact finite completions]
\label{prop:finite-completions}
For each
\[
 m\in\{5,7,9,11\},
 \qquad
 n=m+1\in\{6,8,10,12\},
\]
there exist a normal element \(\theta\in F\), field elements \(y,z\in F\),
and bits \(\epsilon,\delta\in\F_2\) for which the loop assignment
\eqref{eq:loopzero}--\eqref{eq:Frob-loops} completes the edge skeleton.
Consequently, there are binary optimal triple-node-erasure-correcting codes
of redundancy \(3n-3\) at these four lengths.
\end{proposition}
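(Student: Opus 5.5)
This is an existence statement at four fixed lengths, so I would prove it by exhibiting explicit parameters and certifying them with a finite check. The check is reduced to a small, symmetry-reduced list of $3\times 3$ determinants. By Proposition~\ref{prop:loop-completion} with $\rho=3$ (where $g_3=0$, so $Z=\{0\}$), it suffices to show that for every triple $Q\in\binom{\cA}{3}$ the three loop classes $\ell_a+E_Q$, $a\in Q$, form a basis of $F^3/E_Q$. Theorem~\ref{thm:uniform-moore-skeleton} gives $\dim E_Q=3m-3$, so this quotient is $3$-dimensional.

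\emph{Step 1: make the test concrete.} For a fixed $Q$, compute three binary linear functionals $\psi_{Q,1},\psi_{Q,2},\psi_{Q,3}$ on $F^3$ whose common kernel is exactly $E_Q$. For instance, take a basis of the annihilator of $E_Q$ under the trace pairing $\langle u,v\rangle=\sum_j\operatorname{Tr}(u_jv_j)$. The completion condition is then that the $3\times3$ binary matrix $\bigl(\psi_{Q,r}(\ell_a)\bigr)_{r,\,a\in Q}$ is nonsingular. Because every loop has first coordinate $0$, only the restriction of the annihilator to the last two blocks matters.

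\emph{Step 2: cut down the number of triples.} The frame $\{0\}\cup\{\theta^{2^i}\}$ is stable under the Frobenius $\phi\colon x\mapsto x^2$. The column formula \eqref{eq:Moore-column} satisfies $h_{\phi a,\phi b}=\phi(h_{a,b})$ coordinatewise, and the loop assignment \eqref{eq:Frob-loops} satisfies $\ell_{\phi a}=\phi(\ell_a)$. The assignment \eqref{eq:loopzero} has entries in $\F_2$, so $\ell_0$ is fixed by $\phi$. Since $\phi$ acts as a binary linear automorphism of $F^3$, $E_{\phi Q}=\phi(E_Q)$, and the condition for $\phi Q$ is equivalent to the condition for $Q$. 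It therefore suffices to test one representative of each cyclic orbit: triples $\{0,\theta,\theta^{2^j}\}$ and triples $\{\theta,\theta^{2^i},\theta^{2^j}\}$. This leaves $O(m^2)$ cases, at most about $66$ for $m=11$.

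\emph{Step 3: choose parameters and certify.} For each $m\in\{5,7,9,11\}$, fix an irreducible polynomial defining $F=\F_{2^m}$ and a normal element $\theta$. Then search over $(y,z,\epsilon,\delta)\in F^2\times\F_2^2$, rejecting a candidate as soon as one representative triple fails. The search space has size at most $2^{24}$, so an exhaustive or randomized search is feasible. The proof then records the found witnesses: the defining polynomial, $\theta$, $y$, $z$, $\epsilon$ and $\delta$ in coordinates. It also records the rank certificate for each orbit representative, so the claim is reproducible. The concluding sentence follows directly from Proposition~\ref{prop:loop-completion}, since $\dim F^3=3m=3n-3=r_{\rm Sing}(n,3)$.

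\emph{Main obstacle.} The hard part is not the verification but the existence itself. Nothing structural guarantees that a Frobenius-equivariant loop assignment succeeds. I would first try random $(y,z)$, estimating a heuristic success probability of roughly $\prod_Q(\text{probability that a }3\times3\text{ binary matrix is invertible})$ over orbit representatives. This estimate is small, and the failures may be correlated for the triples through $0$, because $\ell_0$ has only two bits of freedom. If a naive search stalls, I would next vary the normal element $\theta$ and the field model. Failing that, I would fix $\epsilon,\delta$ first and solve the triples through $0$ as linear constraints on $(y,z)$ before checking the others.
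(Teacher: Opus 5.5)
\paragraph{Overall.} Your framework is sound and follows the same strategy as the paper: exhibit explicit parameters and certify them by exact rank computations. By Proposition~\ref{prop:loop-completion} with $Z=\{0\}$, everything reduces to checking, for each triple $Q$, that the three loop classes span the $3$-dimensional quotient $F^3/E_Q$.

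Your Frobenius reduction in Step~2 is a correct refinement that the paper does not use. The reasons are:
\begin{itemize}
\item $h_{\phi a,\phi b}=\phi(h_{a,b})$;
\item $\ell_{\phi a}=\phi(\ell_a)$, including the wrap-around $\theta^{2^{m-1}}\mapsto\theta$, because $y^{2^m}=y$ and $z^{2^m}=z$;
\item $\ell_0$ is fixed by $\phi$ because $\epsilon,\delta\in\F_2$.
\end{itemize}
So the tests for $Q$ and $\phi Q$ coincide. Counting orbits exactly, this cuts the paper's $416$ brute-force rank checks to $4+8+14+20=46$ orbit representatives. Your representative list is somewhat redundant, which is harmless. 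The paper instead row-reduces all erased columns for every triple. That costs more, but it also re-verifies the edge rank $3m-3$ numerically and needs no equivariance argument.

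\paragraph{The gap.} The proposal never establishes existence. The proposition is a pure existence statement at four specific lengths. As you note yourself, nothing structural forces a Frobenius-equivariant completion to exist, so the witness \emph{is} the proof. Step~3 only describes a search and what would be recorded. Your own heuristic leaves open whether that search succeeds.

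The paper supplies exactly this missing ingredient:
\begin{itemize}
\item Table~\ref{tab:certificates} gives, for each $m$, an explicit modulus $p_m$ and explicit $\theta,y,z,(\epsilon,\delta)$. For instance, $p_5=T^5+T^2+1$, $\theta=\alpha+1$, $y=z=\alpha^4+\alpha^2+\alpha+1$, and $(\epsilon,\delta)=(1,0)$.
\item It certifies irreducibility of $p_m$ by trial division and normality of $\theta$ by row reduction.
\item It checks all $416$ triples, obtaining rank $3m-3$ before adjoining the loops and rank $3m$ after.
\end{itemize}

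To turn your plan into a proof, you must actually produce such parameters and certify them. That includes irreducibility and normality, since you invoke Theorem~\ref{thm:uniform-moore-skeleton} for $\dim E_Q=3m-3$, and that theorem needs an affine frame. After that, your orbit-reduced determinant check would suffice.

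One small correction to your fallback plan: changing the field model gains nothing. Any isomorphism between two models of $\F_{2^m}$ commutes with Frobenius and transports the whole configuration (frame, edge columns, loops) while preserving all ranks. The genuine extra freedom is the choice of the normal element $\theta$ and the bits $(\epsilon,\delta)$.
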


One reproducible set of polynomial-basis certificates is shown in
Table~\ref{tab:certificates}.  In each row, let
\(p_m(T)\in\F_2[T]\) be the polynomial displayed in the second column, take
\[
 F=\F_2[T]/(p_m(T))
\]
and let \(\alpha\) be the residue class of \(T\).  The entries
\(\theta,y,z\) are written in the polynomial basis
\(1,\alpha,\ldots,\alpha^{m-1}\), and \((\epsilon,\delta)\) is binary.
A row specifies the displayed construction.  The exact-arithmetic verifier
reconstructs every failed triple from these parameters and checks the required
rank condition.

\begin{table*}[t]
\centering
\caption{Exact loop-completion certificates in polynomial form.}
\label{tab:certificates}
\small
\setlength{\tabcolsep}{3pt}
\renewcommand{\arraystretch}{1.08}
\begin{tabular}{clllllc}
\toprule
\(m\)&\(p_m(T)\)&\(\theta\)&\(y\)&\(z\)&\((\epsilon,\delta)\)\\
\midrule
5  &\(T^5+T^2+1\)&\(\alpha+1\)&\(\alpha^4+\alpha^2+\alpha+1\)&\(\alpha^4+\alpha^2+\alpha+1\)&\((1,0)\)\\
7  &\(T^7+T+1\)&\(\alpha^3+1\)&\(\alpha^5+\alpha^2+\alpha\)&\(\alpha^5+1\)&\((1,1)\)\\
9  &\(T^9+T+1\)&\(\alpha^5+\alpha^4+1\)&\(\alpha^7+\alpha^4+\alpha^3+\alpha^2+1\)&\(\alpha^6+\alpha^5+\alpha^3\)&\((1,1)\)\\
11 &\(T^{11}+T^2+1\)&\(\alpha^7+\alpha^4+\alpha^3+1\)&\(\alpha^8+\alpha^5+\alpha^3\)&\(\alpha^9+\alpha^8+\alpha^5+\alpha^3+1\)&\((1,1)\)\\
\bottomrule
\end{tabular}
\end{table*}

\begin{proof}
For each row of Table~\ref{tab:certificates}, exhaustive exact division by
every monic binary polynomial of degree between \(1\) and
\(\lfloor m/2\rfloor\) finds no divisor of the displayed degree-\(m\)
modulus; hence it is irreducible.  Binary row reduction verifies that
\[
 \theta,\theta^2,\ldots,\theta^{2^{m-1}}
\]
is an \(\F_2\)-basis of \(F\).  Hence the displayed labels form the required
affine frame.  For every \(Q\in\binom{\cA}{3}\), the exact verification
constructs the \(3m-3\) non-loop columns
incident with \(Q\) and the three corresponding loop columns from
\eqref{eq:loopzero}--\eqref{eq:Frob-loops}.  Binary row reduction gives rank
\(3m-3\) before the loops are adjoined and rank \(3m\) afterward.  This is
checked for all
\[
 \binom63+\binom83+\binom{10}3+\binom{12}3=416
\]
failed triples.  Thus every erased triple indexes \(3m\) independent columns
in a check space of dimension \(3m=3n-3\).  The resulting code corrects every
triple and meets the graph Singleton bound.  The archived verification package
\cite{ZabokritskiyYohananov2026Verification} records all \(416\) exact rank
checks.
\end{proof}

\begin{remark}
The four completions establish Singleton optimality at
\(n=6,8,10,12\).  The uniform edge skeleton and the exact three-dimensional
completion criterion apply beyond these four instances.  Finding a uniform
solution of that criterion is the remaining optimal-triple problem.
\end{remark}

We finally summarize the division between the classical algebraic input and
the graph-code contribution.  The alternating Frobenius layers in
\eqref{eq:Moore-column} come from the established theory of alternating
rank-metric spaces
\cite{DelsarteGoethals1975,GowQuinlan2009,Schmidt2010}.  For \(\rho=2\), the
spaces \(\{U_a^{(2)}:a\in\cA\}\) are also related to the Gold
APN/dimensional-dual-hyperoval spaces associated with \(x\mapsto x^3\)
\cite{Yoshiara2008}, and the same dimension pattern occurs in the
uncompressed quadratic-algebra star model of Kantor and Shult
\cite[Lem.~6.10]{KantorShult2013}.  Building on these classical ingredients, Theorem
\ref{thm:uniform-moore-skeleton} contributes their explicit realization as
ordinary-edge columns indexed by an affine frame, the simultaneous full-rank
conclusion for every failed vertex set, and the reduction of the remaining
looped graph-code problem to Proposition~\ref{prop:loop-completion}.

\section{Node-Weight Enumerators}
\label{sec:enumerators}

We now return to the \(q\)-ary model of Section~\ref{sec:preliminaries},
where \(q\) is an arbitrary prime power.

The coding question in the remainder of the paper is how large a code of a
prescribed node distance can be.  Packing, existence, and covering arguments
all require ambient radius-\(t\) ball volumes, or
equivalently the ambient distribution of node weights.  We develop two
complementary counting routes.  Conditioning on the nonzero loops expresses
the ambient node-weight enumerator through a simple-graph kernel.  Viewing a ball
as the union of the failed-star coordinate subspaces instead gives a direct
inclusion--exclusion formula and effective fixed-order estimates.  These
volumes become code bounds in Section~\ref{sec:bounds}.

Define the radius-\(t\) ball around zero by
\[
 \cB_t\triangleq
 \{G\in\cG_q(n):\wtN(G)\leq t\},
 \qquad 0\leq t\leq n,
\]
and put \(\cB_{-1}\triangleq\varnothing\).  Translation invariance implies
that every radius-\(t\) ball is a translate of this one and has the same
cardinality.

For a code \(\cC\subseteq\cG_q(n)\), define its node-weight distribution and
weight enumerator by
\begin{align}
 A_w^{\rm N}(\cC)
 &\triangleq
 |\{G\in\cC:\wtN(G)=w\}|,\label{eq:code-Aw}\\
 W_{\cC}^{\rm N}(z)
 &\triangleq
 \sum_{w=0}^n A_w^{\rm N}(\cC)z^w.
\label{eq:code-W}
\end{align}
For the entire ambient space, abbreviate
\[
 A_w^{\rm N}(n,q)\triangleq A_w^{\rm N}(\cG_q(n)),
 \qquad
 W_{n,q}^{\rm N}(z)\triangleq W_{\cG_q(n)}^{\rm N}(z).
\]
Because the metric is translation invariant, the ordered-pair distance
distribution is especially simple:
\begin{equation}
 |\{(G,G')\in\cG_q(n)^2:d_{\rm N}(G,G')=w\}|
 =
 q^N A_w^{\rm N}(n,q).
\label{eq:pair-enumerator}
\end{equation}
This is the graph-metric analogue of a fixed-distance pair enumerator.

\subsection{A loop-conditioning generating transform}

To count words in the looped graph space, we condition on the set of vertices
carrying nonzero loops.  Every vertex cover must contain these vertices.
After counting their loop labels and all ordinary-edge coordinates incident
with them, we are left with a simple support graph on the remaining vertices.
Its vertex-cover enumerator supplies the auxiliary counting kernel below.

For a labeled simple graph \(J\) on vertex set \([m]\), let \(E(J)\) denote
its edge set and let \(\tau(J)\) denote its vertex-cover number.  Write
\[
 K_m\triangleq\bigl([m],\binom{[m]}2\bigr)
\]
for the loopless complete graph on this vertex set.  Let
\begin{equation}
 s_{m,t}^{(q)}
 \triangleq
 \sum_{\substack{J\subseteq K_m\\\tau(J)=t}}
 (q-1)^{|E(J)|},
\label{eq:smt}
\end{equation}
where the sum is over all such labeled support graphs.  We define
\(s_{m,t}^{(q)}\triangleq0\) when \(t<0\) or \(t>m\).
Define
\[
 S_{m,q}(z)\triangleq\sum_{t=0}^m s_{m,t}^{(q)}z^t,
\qquad
 S_{0,q}(z)\triangleq1.
\]

\begin{theorem}[Loop-conditioning transform]
\label{thm:loop-transform}
Let \(A_{w,\ell}^{\rm N}(n,q)\) count the \(q\)-ary looped graph words of
node weight \(w\) having exactly \(\ell\) nonzero loops.  Then
\begin{equation}
 A_{w,\ell}^{\rm N}(n,q)
 =
 \binom n\ell
 (q-1)^\ell
 q^{\,\ell n-\binom{\ell+1}{2}}
 s_{n-\ell,w-\ell}^{(q)}.
\label{eq:Aw-loop}
\end{equation}
Define the corresponding bivariate node-weight enumerator by
\begin{equation}
 W_{n,q}^{\rm N}(u,z)
 \triangleq
 \sum_{w=0}^n\sum_{\ell=0}^n
 A_{w,\ell}^{\rm N}(n,q)u^\ell z^w.
\label{eq:bivariate-node-enumerator}
\end{equation}
Consequently, the bivariate polynomial that also records the number of
nonzero loops is
\begin{equation}
 W_{n,q}^{\rm N}(u,z)
 =
 \sum_{\ell=0}^n
 \binom n\ell
 (q-1)^\ell
 q^{\,\ell n-\binom{\ell+1}{2}}
 (uz)^\ell S_{n-\ell,q}(z).
\label{eq:loop-transform}
\end{equation}
In particular,
\[
 W_{n,q}^{\rm N}(z)=W_{n,q}^{\rm N}(1,z).
\]
\end{theorem}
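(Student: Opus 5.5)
The plan is to condition on the set \(L\subseteq[n]\) of vertices whose loop label is nonzero, with \(|L|=\ell\), and to count the words with that loop set and node weight \(w\) by a product decomposition. The central structural claim is the following. If every loop in \(L\) is nonzero and every loop outside \(L\) is zero, then
\[
 \wtN(G)=\ell+\tau(J),
\]
where \(J\) is the simple support graph that \(G\) induces on \([n]\setminus L\), that is, its ordinary edges with both endpoints outside \(L\).

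For the structural claim, first note that any vertex cover \(S\) must contain \(L\), because a nonzero loop at \(v_i\) forces \(i\in S\). Once \(L\subseteq S\), every coordinate in \(\St(L)\) is covered, so \(S\) covers \(G\) exactly when \(S\setminus L\) covers \(J\). Conversely, for any cover \(T\) of \(J\), the set \(L\cup T\) is a cover of \(G\): the loops outside \(L\) are zero and need no covering. Minimizing over \(T\) gives \(\wtN(G)=\ell+\tau(J)\).

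The count then factors into four independent choices:
\begin{itemize}
\item the set \(L\), in \(\binom n\ell\) ways;
\item the \(\ell\) nonzero loop labels, in \((q-1)^\ell\) ways;
\item the ordinary edges meeting \(L\), which are unrestricted;
\item the labels on the complete graph \(K_{n-\ell}\) outside \(L\).
\end{itemize}
The ordinary edges meeting \(L\) number \(|\St(L)|-\ell\), and each may take any label in \(\F_q\). From \(|\St(L)|=\ell n-\binom\ell2\) we get
\[
 |\St(L)|-\ell=\ell n-\binom\ell2-\ell=\ell n-\binom{\ell+1}2
\]
free coordinates, which accounts for the factor \(q^{\ell n-\binom{\ell+1}2}\). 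For the last choice, the labels on \(K_{n-\ell}\) must have node weight \(w-\ell\) on a support graph \(J\). Each support graph \(J\) contributes \((q-1)^{|E(J)|}\) labelings. Summing over \(J\) with \(\tau(J)=w-\ell\) gives \(s^{(q)}_{n-\ell,w-\ell}\), after relabeling \([n]\setminus L\) as \([n-\ell]\); the vertex-cover number is invariant under this relabeling. This proves \eqref{eq:Aw-loop}. The boundary cases \(w<\ell\) and \(\ell=n\) are handled by the conventions \(s^{(q)}_{m,t}=0\) for \(t<0\) and \(S_{0,q}=1\).

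Finally, substituting \eqref{eq:Aw-loop} into \eqref{eq:bivariate-node-enumerator} and summing over \(w\) with \(t=w-\ell\) turns \(z^w\) into \(z^\ell z^t\). The inner sum becomes \(S_{n-\ell,q}(z)\), which gives \eqref{eq:loop-transform}, and setting \(u=1\) recovers the univariate enumerator. The only step needing any care is the weight identity \(\wtN(G)=\ell+\tau(J)\); everything else is bookkeeping.
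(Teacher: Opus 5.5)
Your proposal is correct and follows the same route as the paper: condition on the loop set \(L\), observe that every vertex cover contains \(L\), so that \(\wtN(G)=\ell+\tau(J)\), and multiply the counts for \(L\), the nonzero loop labels, the \(\ell n-\binom{\ell+1}{2}\) free incident ordinary edges, and the residual simple graph on the remaining \(n-\ell\) vertices. Your explicit check of both inequalities in the weight identity and of the boundary conventions fills in details that the paper's short proof leaves implicit.
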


\begin{proof}
Let \(L\) be the set of vertices carrying nonzero loops, with
\(|L|=\ell\).  Every vertex cover contains \(L\).  Once those vertices are
chosen, every non-loop edge incident with \(L\) is already covered and its
label is arbitrary.  The only remaining contribution to the node weight is
the vertex-cover number of the simple support graph induced on
\([n]\setminus L\).  There are
\[
 \ell(n-\ell)+\binom\ell2
 =
 \ell n-\binom{\ell+1}{2}
\]
non-loop coordinates incident with \(L\).  Choosing \(L\), assigning its
nonzero loop labels, assigning these arbitrary incident labels, and then
choosing the residual graph gives \eqref{eq:Aw-loop}.  Summing over \(w\)
and \(\ell\) gives \eqref{eq:loop-transform}.
\end{proof}

The transform is a unique decomposition rather than inclusion--exclusion.
For example,
\[
 A_0^{\rm N}(n,q)=1,
\qquad
 A_n^{\rm N}(n,q)=(q-1)^nq^{\binom n2}.
\]
The residual polynomial \(S_{m,q}\) is the genuine hard core of the
enumeration problem: it counts simple graphs by vertex-cover number, or
equivalently by independence number.

\subsection{Balls as a subspace arrangement}

The loop-conditioning transform reduces the enumerator to the vertex-cover
distribution of simple graphs, which remains difficult to compute in full.
For ball volumes there is a second representation tied directly to node
erasures: a word has node weight at most \(t\) exactly when its support lies
in the coordinate subspace erased by some \(t\)-set of vertices.  This turns
the ball into the following finite union of linear subspaces.

Recall that \(\cV_S\) is the subspace of graph words supported on the erased
star coordinates \(\St(S)\).  Since every cover of size at most \(t\) extends to a
\(t\)-set,
\begin{equation}
\cB_t
=
\bigcup_{S\in\binom{[n]}t}\cV_S.
\label{eq:ball-union}
\end{equation}
The fixed-weight classes and balls determine each other by finite
difference.  In generating-function form,
\begin{equation}
 W_{n,q}^{\rm N}(z)
 =
 (1-z)\sum_{t=0}^{n-1}|\cB_t|z^t+q^Nz^n.
\label{eq:ball-to-weight}
\end{equation}
Indeed, \(A_t^{\rm N}(n,q)=|\cB_t|-|\cB_{t-1}|\) and
\(|\cB_n|=q^N\).  Thus every exact ball formula below immediately gives an
exact analogue of the classical coefficients \(A_w\).

Inclusion--exclusion for \eqref{eq:ball-union} requires the dimension of every
intersection \(\bigcap_i\cV_{S_i}\).  That dimension depends only on which of
the cover sets contain each vertex, so the entire intersection can be encoded
by a Venn profile rather than by the labeled sets themselves.  For an
ordered \(j\)-tuple of \(t\)-sets \(S_1,\ldots,S_j\), put
\[
 I_j\triangleq\{1,\ldots,j\},
\]
and give each vertex index \(r\in[n]\) the membership signature
\[
 \sigma(r)\triangleq\{i\in I_j:r\in S_i\}.
\]
For \(T\subseteq I_j\), let
\[
 m_T\triangleq|\{r\in[n]:\sigma(r)=T\}|.
\]
The vector \(\boldsymbol m=(m_T)_{T\subseteq I_j}\) is the Venn profile of
the ordered family.

\begin{lemma}[Venn-profile intersection dimension]
\label{lem:profile-rank}
For a family with profile \(\boldsymbol m\),
\begin{equation}
 \dim\left(\bigcap_{i=1}^j\cV_{S_i}\right)
 =
 r_j(\boldsymbol m)
 \triangleq
 \frac12\left(
 \sum_{\substack{A,B\subseteq I_j\\A\cup B=I_j}}m_Am_B
 +m_{I_j}
 \right).
\label{eq:profile-rank}
\end{equation}
\end{lemma}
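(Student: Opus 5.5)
The plan is to use the fact that each \(\cV_{S_i}\) is a coordinate subspace. The intersection \(\bigcap_i\cV_{S_i}\) is then the coordinate subspace supported on \(\bigcap_i\St(S_i)\). Its dimension is therefore the number of coordinates \(\langle v_r,v_{r'}\rangle\in E_n\) with \(r\le r'\) such that, for every \(i\in I_j\), at least one of \(r,r'\) lies in \(S_i\). In signature language, this condition reads \(\sigma(r)\cup\sigma(r')=I_j\). So the whole proof reduces to counting such coordinates and grouping vertices by signature.

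First, count ordered pairs \((r,r')\in[n]^2\), loops included, with \(\sigma(r)\cup\sigma(r')=I_j\). Grouping by the signatures \(A=\sigma(r)\) and \(B=\sigma(r')\), this number is
\[
 P\triangleq\sum_{\substack{A,B\subseteq I_j\\A\cup B=I_j}}m_Am_B ,
\]
since there are \(m_A\) choices for \(r\) and \(m_B\) for \(r'\).

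Second, pass from ordered pairs to the coordinate set \(E_n\). Each ordinary edge \(r\ne r'\) satisfying the condition is counted twice in \(P\), as \((r,r')\) and \((r',r)\), because the condition is symmetric in \(A,B\). Each loop \(r=r'\) is counted once, and it qualifies exactly when \(\sigma(r)\cup\sigma(r)=\sigma(r)=I_j\). The number of qualifying loops is therefore \(m_{I_j}\). Writing \(P=2e+m_{I_j}\), where \(e\) is the number of qualifying ordinary edges, the number of qualifying coordinates is
\[
 e+m_{I_j}=\tfrac12\bigl(P+m_{I_j}\bigr)=r_j(\boldsymbol m).
\]
This is exactly \eqref{eq:profile-rank}.

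The only real care point is this diagonal correction. One must check that every loop counted in \(P\) has signature \(I_j\), and conversely that every vertex of signature \(I_j\) contributes its loop. Both follow from the criterion \(\langle v_r,v_r\rangle\in\St(S_i)\iff r\in S_i\). Once that is confirmed, the identity is immediate, and it also shows that the dimension depends only on the profile \(\boldsymbol m\), as the text claims.
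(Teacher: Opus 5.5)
Your proof is correct and follows essentially the same argument as the paper. Both identify the intersection with the coordinate subspace on \(\bigcap_i\St(S_i)\), count ordered pairs with \(\sigma(r)\cup\sigma(r')=I_j\), and correct for the \(m_{I_j}\) diagonal loop terms before halving; your identity \(P=2e+m_{I_j}\) is a tidier form of the paper's ``subtract the diagonal, divide by two, add the loops back'' step.
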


\begin{proof}
A loop coordinate at \(v\) belongs to every \(\St(S_i)\) exactly when
\(\sigma(v)=I_j\).  A non-loop coordinate \(\langle u,v\rangle\) belongs
to every star set exactly when
\[
 \sigma(u)\cup\sigma(v)=I_j.
\]
The sum in \eqref{eq:profile-rank} counts ordered pairs satisfying this
condition.  Every off-diagonal pair occurs together with its reversal.  The
only diagonal pairs are the \(m_{I_j}\) vertices with full signature; each
corresponds to an allowed loop.  Thus the displayed numerator is even:
subtracting these diagonal terms, divide the remaining ordered pairs by two,
and then add the \(m_{I_j}\) loops back.  This gives exactly the formula in
\eqref{eq:profile-rank}.
\end{proof}

Let \(\mathfrak P_{n,t,j}\) be the set of all nonnegative integer vectors
\(\boldsymbol m=(m_T)_{T\subseteq I_j}\) satisfying
\begin{align}
 \sum_{T\subseteq I_j}m_T&=n,\label{eq:profile-total}\\
 \sum_{T\ni i}m_T&=t
 \qquad(i\in I_j),\label{eq:profile-rows}
\end{align}
and, for every distinct \(i,h\in I_j\),
\begin{equation}
 \sum_{\substack{T\subseteq I_j\\
 |\{i,h\}\cap T|=1}}m_T>0.
\label{eq:profile-distinct}
\end{equation}
Condition \eqref{eq:profile-distinct} says that the \(j\) cover sets are
pairwise distinct.

\begin{theorem}[Exact Venn-profile formula]
\label{thm:profile-IE}
For \(0\leq t\leq n\),
\begin{equation}
 |\cB_t|
 =
 \sum_{j=1}^{\binom nt}
 \frac{(-1)^{j+1}}{j!}
 \sum_{\boldsymbol m\in\mathfrak P_{n,t,j}}
 \frac{n!}{\prod_{T\subseteq I_j}m_T!}
 q^{\,r_j(\boldsymbol m)}.
\label{eq:profile-IE}
\end{equation}
\end{theorem}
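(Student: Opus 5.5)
We apply inclusion--exclusion to the union \eqref{eq:ball-union}, indexing the terms by Venn profiles. The first step writes the union over the \(\binom nt\) subspaces \(\cV_S\) in the standard form
\[
 |\cB_t|=\sum_{j\geq1}(-1)^{j+1}\sum_{\{S_1,\ldots,S_j\}}\Bigl|\bigcap_{i=1}^j\cV_{S_i}\Bigr|,
\]
where the inner sum runs over unordered \(j\)-subsets of \(\binom{[n]}t\), so \(j\) ranges from \(1\) to \(\binom nt\). Each intersection is an \(\F_q\)-subspace, since every \(\cV_S\) is a coordinate subspace and their intersection is the coordinate subspace on \(\bigcap_i\St(S_i)\). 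Hence its cardinality is \(q^{\dim}\), and Lemma~\ref{lem:profile-rank} gives the dimension \(r_j(\boldsymbol m)\), where \(\boldsymbol m\) is the Venn profile of any ordering of the family.

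The second step passes from unordered families to ordered ones. Each unordered family of \(j\) pairwise distinct \(t\)-sets corresponds to exactly \(j!\) ordered tuples of distinct \(t\)-sets. The intersection is symmetric in the order, so summing over ordered tuples of pairwise distinct sets and dividing by \(j!\) gives the same term. This accounts for the factor \(1/j!\). It also explains why only tuples of distinct sets appear, which is condition~\eqref{eq:profile-distinct}.

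The third step groups the ordered tuples by profile. An ordered tuple \((S_1,\ldots,S_j)\) of subsets of \([n]\) is the same data as the signature map \(\sigma:[n]\to2^{I_j}\), via \(S_i=\{r:i\in\sigma(r)\}\). The three profile conditions translate as follows.
\begin{itemize}
 \item Condition \eqref{eq:profile-total} holds automatically.
 \item Condition \eqref{eq:profile-rows} is equivalent to \(|S_i|=t\) for every \(i\).
 \item Condition \eqref{eq:profile-distinct} is equivalent to \(S_i\neq S_h\) for all distinct \(i,h\), since some vertex must lie in exactly one of the two sets.
\end{itemize}
So the admissible ordered tuples are exactly the maps \(\sigma\) whose fibre sizes form a profile in \(\mathfrak P_{n,t,j}\). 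The number of maps with a prescribed fibre-size vector is the multinomial coefficient \(n!/\prod_T m_T!\). Substituting this count and \(q^{r_j(\boldsymbol m)}\) into the ordered sum yields \eqref{eq:profile-IE}.

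The only point needing care is the bijection between ordered tuples and signature maps. In particular, one must check that the distinctness condition \eqref{eq:profile-distinct} depends only on the profile and exactly encodes pairwise distinctness; the rest is bookkeeping.
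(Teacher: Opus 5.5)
Your proposal is correct and follows essentially the same route as the paper. Both apply inclusion--exclusion to the star-subspace union, pass to ordered tuples with the factor \(1/j!\), group the tuples by Venn profile using the multinomial count, and use Lemma~\ref{lem:profile-rank} for the intersection dimensions; your explicit translation of the three profile conditions into vertex count, set sizes, and pairwise distinctness spells out what the paper states in a single line.
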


\begin{proof}
Apply inclusion--exclusion to \eqref{eq:ball-union}.  Replace every
unordered \(j\)-element subfamily by its \(j!\) orderings.  For a fixed Venn
profile, the number of assignments of the \(n\) labeled vertices to the
\(2^j\) signatures is the multinomial coefficient
\[
 \frac{n!}{\prod_Tm_T!}.
\]
Equations \eqref{eq:profile-total}--\eqref{eq:profile-rows} enforce the
vertex and row sizes, while \eqref{eq:profile-distinct} removes repeated
cover sets.  Hence each unordered subfamily appears under exactly its
\(j!\) orderings and is counted once after the factor \(1/j!\).
Lemma~\ref{lem:profile-rank} gives the size of each intersection.
\end{proof}

Formula \eqref{eq:profile-IE} is exact, but its full upper limit remains
large.  Its algorithmic value is that every fixed inclusion--exclusion
order \(j\) uses only \(2^j\) profile variables and no longer enumerates
families of labeled cover sets.  It is therefore a finite-dimensional profile
reduction for each fixed inclusion--exclusion order \(j\).

\subsection{Second-order formula and asymptotics}

The exact profile sum is finite but unwieldy at high inclusion--exclusion
order.  The packing and existence results below need tractable estimates, and
for fixed-radius asymptotics the first two orders already determine the
leading term.  We therefore compute pair intersections explicitly and apply
the second Bonferroni inequality.

If \(S,T\in\binom{[n]}t\) and \(|S\cap T|=u\), then
\begin{equation}
 \dim(\cV_S\cap\cV_T)
 =
 un-\binom u2+(t-u)^2.
\label{eq:pair-intersection}
\end{equation}
The last term counts the edges between \(S\setminus T\) and
\(T\setminus S\).  It is essential: these cross edges meet both covers even
though neither endpoint lies in their intersection.

Here \(M_1\) will be the sum of the sizes of the individual star subspaces,
whereas \(M_2\) will be the sum of all pair-intersection sizes.  Define
\begin{align}
 M_1(n,t,q)
 &\triangleq
 \binom nt q^{\,tn-\binom t2},
\label{eq:M1}\\
 M_2(n,t,q)
 &\triangleq
 \frac12\binom nt
 \sum_{u=\max\{0,2t-n\}}^{t-1}\!
 \binom tu\binom{n-t}{t-u}
 \notag\\[-2pt]
 &\hspace{16mm}\cdot
 q^{\,un-\binom u2+(t-u)^2}.
\label{eq:M2}
\end{align}

\begin{theorem}[Pair-intersection bounds and fixed-radius asymptotics]
\label{thm:Bonferroni}
For every \(n\), prime power \(q\), and \(0\le t\le n\),
\begin{equation}
 M_1(n,t,q)-M_2(n,t,q)
 \leq|\cB_t|
 \leq M_1(n,t,q).
\label{eq:Bonferroni}
\end{equation}
Moreover, for fixed \(q\) and \(t\), as \(n\to\infty\),
\begin{align}
 |\cB_t|
 &=
 \binom ntq^{\,tn-\binom t2}
 \left(1+O_{q,t}(nq^{-n})\right),
\label{eq:ball-asymp}\\
 \log_q|\cB_t|
 &=
 tn-\binom t2+\log_q\binom nt+o(1).
\label{eq:ball-log}
\end{align}
The same leading asymptotic holds for \(A_t^{\rm N}(n,q)\).
\end{theorem}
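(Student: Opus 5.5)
The plan is to apply the first two Bonferroni inequalities to the union \eqref{eq:ball-union} and then bound $M_2/M_1$ directly. The upper bound $|\cB_t|\le\sum_S|\cV_S|$ is the union bound. Each $\cV_S$ has $q^{\,tn-\binom t2}$ elements, and there are $\binom nt$ sets $S$, so the sum equals $M_1$.

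For the lower bound, second-order Bonferroni gives
\[
 |\cB_t|\ge\sum_S|\cV_S|-\sum_{\{S,T\}}|\cV_S\cap\cV_T|.
\]
I would group the unordered pairs of distinct $t$-sets by $u=|S\cap T|$, which ranges over $\max\{0,2t-n\}\le u\le t-1$. For a fixed $S$, the number of partners $T$ with a given $u$ is $\binom tu\binom{n-t}{t-u}$, and the factor $\tfrac12$ corrects for ordering. The intersection size comes from \eqref{eq:pair-intersection}, which is the two-set case of Lemma~\ref{lem:profile-rank}. It counts loops and edges inside $S\cap T$, edges from $S\cap T$ to the rest, and the $(t-u)^2$ cross edges between $S\setminus T$ and $T\setminus S$. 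This reproduces $M_2$ exactly. The cases $t=0$ and $t=n$ are trivial: the pair sum is empty and the ball is a single star space.

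For the asymptotics, fix $q,t$ and compare each term of $M_2$ with $M_1$. The exponent gap is
\[
 g(u)=tn-\tbinom t2-\Bigl(un-\tbinom u2+(t-u)^2\Bigr)=(t-u)n-\tbinom t2+\tbinom u2-(t-u)^2 .
\]
For $u\le t-1$ and $n$ large this is at least $n-O_t(1)$, and it grows by roughly $n$ with each unit decrease of $u$. The combinatorial weight is $\binom tu\binom{n-t}{t-u}=O_t(n^{t-u})$. Hence
\[
 \frac{M_2}{M_1}=\frac12\sum_{u}\binom tu\binom{n-t}{t-u}q^{-g(u)}=O_{q,t}\bigl(nq^{-n}\bigr),
\]
and the $u=t-1$ term dominates, since each further step down in $u$ costs a factor of order $nq^{-n}$. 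Sandwiching $|\cB_t|$ between $M_1(1-M_2/M_1)$ and $M_1$ gives \eqref{eq:ball-asymp}. Taking $\log_q$ gives \eqref{eq:ball-log}, because $\log_q(1+O(nq^{-n}))=o(1)$.

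For $A_t^{\rm N}=|\cB_t|-|\cB_{t-1}|$, I would bound $|\cB_{t-1}|\le M_1(n,t-1,q)$. Its ratio to $M_1(n,t,q)$ is
\[
 \frac{\binom n{t-1}}{\binom nt}\,q^{-(n-t+1)}=O_t(q^{-n}),
\]
which is absorbed into the error term, so $A_t^{\rm N}$ has the same leading asymptotic.

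I expect the only delicate step to be checking that $g(u)\ge(t-u)(n-c_t)$ uniformly over $u$. It is important not to drop the $(t-u)^2$ cross term, since that term is what keeps $g(u)$ from exceeding this bound. This is routine once $n\ge 2t$, and for fixed $t$ only large $n$ matters.
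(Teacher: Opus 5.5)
Your proposal is correct and takes essentially the paper's route: the union bound and the second Bonferroni inequality, with pairs grouped by $u=|S\cap T|$ and counted using \eqref{eq:pair-intersection}. You then bound $M_2/M_1$ by the term with the largest overlap, $u=t-1$, and you handle $A_t^{\rm N}(n,q)$ by noting that $|\cB_{t-1}|$ is exponentially smaller. The paper only reparametrizes by $s=t-u$, so its exponent $-s(n-t)+\binom s2$ is exactly your $-g(u)$.
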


\begin{proof}
The upper bound is the union bound in \eqref{eq:ball-union}; the lower bound
is the second Bonferroni inequality.  Equation
\eqref{eq:pair-intersection} and the number
\[
 \frac12\binom nt\binom tu\binom{n-t}{t-u}
\]
of unordered pairs with intersection \(u\) give \eqref{eq:M2}.
The asymptotic assertion is immediate when \(t=0\), so assume \(t\ge1\).

Writing \(s\triangleq t-u\), division by \(M_1\) gives
\begin{equation}
 \frac{M_2}{M_1}
 =
 \frac12
 \sum_{s=1}^{\min\{t,n-t\}}
 \binom ts\binom{n-t}s
 q^{-s(n-t)+\binom s2}.
\label{eq:M2-ratio}
\end{equation}
For fixed \(q,t\), the \(s=1\) term is
\(O_{q,t}(nq^{-n})\), and the remaining terms are smaller.  This proves
\eqref{eq:ball-asymp} and \eqref{eq:ball-log}: indeed,
\eqref{eq:Bonferroni} gives
\(0\le M_1-|\cB_t|\le M_2\), so the relative error is at most
\(M_2/M_1\).  Finally,
\[
 A_t^{\rm N}(n,q)=|\cB_t|-|\cB_{t-1}|,
\]
and the second term is exponentially smaller for fixed \(t\).
\end{proof}

\begin{corollary}[Radius one]
\label{cor:B1}
For every \(n\) and prime power \(q\),
\begin{align}
 |\cB_1|
 &=
 1+n(q^n-1)-\binom n2(q-1),\label{eq:B1}\\
 A_1^{\rm N}(n,q)
 &=
 n(q^n-1)-\binom n2(q-1).
\label{eq:A1}
\end{align}
\end{corollary}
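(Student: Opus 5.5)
The formula for \(|\cB_1|\) follows from exact inclusion--exclusion over the union \eqref{eq:ball-union} with \(t=1\), that is, \(\cB_1=\bigcup_{i\in[n]}\cV_{\{i\}}\). Equivalently, we can count \(\cB_1\) directly as a union of the star subspaces \(\cV_{\{i\}}\), each of size \(q^n\). We will use the direct count, since here all higher intersections collapse.

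\emph{Intersections.} For distinct \(i,h\), the coordinates common to \(N_i\) and \(N_h\) are only the ordinary edge \(\langle v_i,v_h\rangle\); this is \eqref{eq:pair-intersection} with \(t=1,u=0\). Hence \(|\cV_{\{i\}}\cap\cV_{\{h\}}|=q\). For three or more distinct vertices, no coordinate lies in all three stars, since an edge has at most two endpoints and a loop lies in only one star. So every triple or higher intersection is \(\{0\}\).

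\emph{Counting the union.} Rather than run a full inclusion--exclusion with alternating higher terms, we partition the nonzero words of \(\cB_1\). A nonzero word \(G\) with \(\wtN(G)=1\) has support inside some \(N_i\). Its support is covered by more than one vertex exactly when it consists of the single ordinary edge \(\langle v_i,v_h\rangle\) with a nonzero label; in that case it lies in precisely two stars, \(N_i\) and \(N_h\). Summing \(q^n-1\) nonzero words over the \(n\) stars therefore counts each such single-edge word twice and every other nonzero word once. There are \(\binom n2(q-1)\) single-edge words. Subtracting this overcount and adding back the zero word gives \eqref{eq:B1}.

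Since \(\cB_0=\{0\}\), we have \(A_1^{\rm N}=|\cB_1|-1\), which is \eqref{eq:A1}.

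As a consistency check, the same result comes from Theorem~\ref{thm:loop-transform}. With \(\ell=1\) and \(s_{n-1,0}=1\), it gives \(n(q-1)q^{n-1}\). With \(\ell=0\), it needs \(s_{n,1}\), the weighted count of nonempty stars in \(K_n\). Counting those stars is the one place where care is required, because single edges are stars centered at either endpoint and must not be double counted. Matching that term with the direct count is routine.
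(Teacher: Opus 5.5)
Your proposal is correct and is essentially the paper's own argument. You sum the $q^n-1$ nonzero words over the $n$ star subspaces, observe that only the $\binom n2(q-1)$ nonzero single-edge words lie in two stars (and none in three), correct that overcount, add back the zero word, and use $\cB_0=\{0\}$ to obtain $A_1^{\rm N}(n,q)=|\cB_1|-1$.
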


\begin{proof}
There are \(n\) one-star subspaces, each of size \(q^n\).  A graph covered
by two distinct singletons is supported on their common edge; a nonzero
single-edge graph is counted twice, while the zero graph is counted in every
star.  Separating these two cases gives \eqref{eq:B1}.
\end{proof}

Table~\ref{tab:binary-enumerators} lists the ambient binary distributions
and their cumulative ball volumes for \(1\leq n\leq6\).

\begin{table*}[t]
\centering
\caption{Binary ambient node-weight distributions and cumulative ball
volumes, verified by exhaustive enumeration. Dashes indicate indices greater
than \(n\).}
\label{tab:binary-enumerators}
\small
\setlength{\tabcolsep}{4pt}
\renewcommand{\arraystretch}{1.15}
\begin{tabular}{crrrrrrr}
\toprule
& \multicolumn{7}{c}{Node-weight counts \(A_w^{\rm N}(n,2)\)}\\
\cmidrule(lr){2-8}
\(n\) & \(w=0\) & \(1\) & \(2\) & \(3\) & \(4\) & \(5\) & \(6\)\\
\midrule
1 & 1 & 1 & -- & -- & -- & -- & --\\
2 & 1 & 5 & 2 & -- & -- & -- & --\\
3 & 1 & 18 & 37 & 8 & -- & -- & --\\
4 & 1 & 54 & 424 & 481 & 64 & -- & --\\
5 & 1 & 145 & 3,610 & 16,387 & 11,601 & 1,024 & --\\
6 & 1 & 363 & 25,130 & 382,157 & 1,124,572 & 532,161 & 32,768\\
\midrule
& \multicolumn{7}{c}{Ball volumes \(|\cB_t|\)}\\
\cmidrule(lr){2-8}
\(n\) & \(t=0\) & \(1\) & \(2\) & \(3\) & \(4\) & \(5\) & \(6\)\\
\midrule
1 & 1 & 2 & -- & -- & -- & -- & --\\
2 & 1 & 6 & 8 & -- & -- & -- & --\\
3 & 1 & 19 & 56 & 64 & -- & -- & --\\
4 & 1 & 55 & 479 & 960 & 1,024 & -- & --\\
5 & 1 & 146 & 3,756 & 20,143 & 31,744 & 32,768 & --\\
6 & 1 & 364 & 25,494 & 407,651 & 1,532,223 & 2,064,384 & 2,097,152\\
\bottomrule
\end{tabular}
\end{table*}

For the smallest overlap-sensitive case,
\[
 n=4,\qquad t=2,\qquad q=2,
\]
the first union sum is \(768\), the pair-intersection sum is \(432\), and
direct enumeration gives
\[
 |\cB_2|=479.
\]
Thus the valid second-order lower bound is \(336\).  Full
inclusion--exclusion gives
\[
768-432+192-60+12-1=479.
\]
The same archive \cite{ZabokritskiyYohananov2026Verification} reproduces the
complete table and this inclusion--exclusion calculation using exact
arithmetic.

\section{Bounds from Ball Volumes}
\label{sec:bounds}

Let \(A_q^{\rm N}(n,d)\) denote the largest cardinality of a not
necessarily linear code in the graph-word space \(\cG_q(n)\) having minimum node distance at least
\(d\), and let
\[
 t\triangleq\left\lfloor\frac{d-1}{2}\right\rfloor.
\]
Because the node metric is translation invariant, the standard sphere-packing
and maximal-code arguments apply verbatim.  Substituting the node-ball volumes
computed above gives the following specialization; distance regularity is not
required.

\begin{theorem}[Packing and Gilbert bounds]
\label{thm:packing-gilbert}
For every \(n\), prime power \(q\), and \(1\le d\le n+1\),
\begin{equation}
 \frac{q^N}{|\cB_{d-1}|}
 \le A_q^{\rm N}(n,d)
 \le \frac{q^N}{|\cB_t|}.
\label{eq:packing-gilbert}
\end{equation}
Consequently, if \(\rho=d-1\), then
\begin{equation}
 A_q^{\rm N}(n,\rho+1)
 \ge
 \frac{q^{N-\rho n+\binom\rho2}}{\binom n\rho}.
\label{eq:explicit-gv}
\end{equation}
\end{theorem}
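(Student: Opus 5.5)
The plan is to run the two classical arguments in the node metric. Both rely only on translation invariance of \(d_{\rm N}\), which makes every radius-\(r\) ball a translate of \(\cB_r\) and hence of size \(|\cB_r|\). Then I would substitute the union bound of Theorem~\ref{thm:Bonferroni}.

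\emph{Upper bound (packing).} Let \(\cC\) have minimum node distance at least \(d\), and set \(t=\lfloor (d-1)/2\rfloor\). The balls \(G+\cB_t\), \(G\in\cC\), are pairwise disjoint. If \(H\) lay in two of them, centred at \(G\) and \(G'\), the triangle inequality for the metric \(d_{\rm N}\) would give \(d_{\rm N}(G,G')\le 2t\le d-1\), a contradiction. Each ball has \(|\cB_t|\) elements, so \(|\cC|\,|\cB_t|\le q^N\).

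\emph{Lower bound (Gilbert).} Take \(\cC\) maximal, by inclusion, among codes of minimum distance at least \(d\). By maximality every \(H\in\cG_q(n)\) satisfies \(d_{\rm N}(H,G)\le d-1\) for some \(G\in\cC\); otherwise \(\cC\cup\{H\}\) would still have minimum distance at least \(d\). So the balls \(G+\cB_{d-1}\) cover \(\cG_q(n)\), which gives \(q^N\le |\cC|\,|\cB_{d-1}|\). Two edge cases need the conventions. If \(d=n+1\), then \(\cB_n=\cG_q(n)\) and the bound reads \(A\ge1\), which is consistent because a single-word code has distance \(+\infty\). If \(d=1\), then \(\cB_0=\{0\}\), giving \(A=q^N\).

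\emph{Explicit form.} With \(\rho=d-1\), the upper half of \eqref{eq:Bonferroni} gives \(|\cB_\rho|\le M_1(n,\rho,q)=\binom n\rho q^{\rho n-\binom\rho2}\). Inserting this into the Gilbert half yields \eqref{eq:explicit-gv}.

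There is no real obstacle. The only points needing care are checking the triangle inequality, which is cited from \cite{YohananovYaakobi2019}, and the boundary conventions above.
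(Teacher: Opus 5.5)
Your proposal is correct and follows the paper's argument. Disjoint translates of \(\cB_t\) give the packing bound, the radius-\((d-1)\) balls around a maximal distance-\(d\) code cover \(\cG_q(n)\) for the Gilbert bound, and the union bound \(|\cB_\rho|\le\binom n\rho q^{\rho n-\binom\rho2}\) over the \(\rho\)-star subspaces yields \eqref{eq:explicit-gv}; your treatment of the boundary cases \(d=1\) and \(d=n+1\) is a harmless extra check that the paper leaves implicit.
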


The upper bound comes from disjoint radius-\(t\) balls, and the lower bound
from the radius-\((d-1)\) balls around a maximal distance-\(d\) code.  The last
display also uses
\[
 |\cB_\rho|\leq\binom n\rho q^{\rho n-\binom\rho2},
\]
because every word of node weight at most \(\rho\) lies in a \(\rho\)-star
coordinate subspace.

Loeliger's averaging lemma applies to an arbitrary additive error set
\cite[Lem.~2 and Thm.~1]{Loeliger1994}.  Since every node ball about zero is
invariant under multiplication by \(\F_q^\times\), averaging projective error
lines gives a slightly sharper finite-length specialization.  Its binary
simple-graph counterpart is the argument of
\cite[Prop.~5]{KoppartyPotukuchiSha2025}.  Recording the exact ball size
isolates the fixed-distance loss: only the logarithm of the number of possible
failed-node sets beyond the graph Singleton bound.

For \(0\leq s\leq n\), put
\[
\Lambda_s\triangleq\frac{|\cB_s|-1}{q-1};
\]
this is the number of projective lines represented by the nonzero words in
the scalar-invariant ball \(\cB_s\).  Because the ball is closed under
nonzero scalar multiplication, a linear code contains a nonzero forbidden
word if and only if it contains the unique projective line represented by
that word.  Passing to lines therefore neither loses nor duplicates a
forbidden event.

\begin{theorem}[Projective ball-volume linear bound]
\label{thm:random-linear}
For every \(n\), prime power \(q\), and \(1\le d\le n+1\), there is a
linear code \(\cC\subseteq\cG_q(n)\) of node distance at least \(d\).
For \(d=1\), it may have redundancy zero.  For \(d\geq2\), its redundancy may
be chosen to satisfy
\begin{equation}
\begin{split}
 r(\cC)
 &\leq
 \min\left\{r\in\{1,\ldots,N\}:
 \Lambda_{d-1}(q^{N-r}-1)<q^N-1\right\}\\
 &\leq
 \min\left\{N,
 \max\left\{1,\left\lceil\log_q\Lambda_{d-1}\right\rceil\right\}
 \right\}.
\end{split}
\label{eq:random-linear}
\end{equation}
In particular, for fixed \(\rho=d-1\geq1\),
\begin{align}
 r(\cC)
 &\le
 \min\!\left\{N,
 \max\!\left\{1,
 \left\lceil\log_q\!\left[
 \binom n\rho
 \frac{q^{\rho n-\binom\rho2}-1}{q-1}
 \right]\right\rceil\right\}\right\},
 \label{eq:linear-explicit}\\
 r(\cC)
 &\le r_{\rm Sing}(n,\rho)+\rho\log_q n+O_{q,\rho}(1).
\label{eq:linear-asymptotic}
\end{align}
\end{theorem}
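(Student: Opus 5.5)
The plan is a projective averaging argument over uniformly random subspaces of fixed codimension, followed by substituting the ball-volume bounds from Section~\ref{sec:enumerators}.

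\textbf{Trivial case and setup.} For \(d=1\) the whole space \(\cG_q(n)\) works, since it has redundancy zero and every code has node distance at least \(1\). So fix \(d\geq2\) and a target redundancy \(r\in\{1,\ldots,N\}\). Choose \(\cC\) uniformly among all \((N-r)\)-dimensional subspaces of \(\F_q^N\). The forbidden set is \(\cB_{d-1}\setminus\{0\}\), and by the Distance and node erasures lemma avoiding it is exactly the condition \(d_{\rm N}(\cC)\geq d\).

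\textbf{Union bound over projective points.} This set is a union of \(\Lambda_{d-1}\) projective points, because the ball is closed under \(\F_q^\times\). By transitivity of \(\mathrm{GL}_N(\F_q)\) on nonzero vectors, a fixed projective point lies in a uniform \((N-r)\)-subspace with probability \((q^{N-r}-1)/(q^N-1)\). The union bound therefore shows that the expected number of forbidden points in \(\cC\) is
\[
 \Lambda_{d-1}\,\frac{q^{N-r}-1}{q^N-1}.
\]
When this quantity is \(<1\), some subspace avoids all forbidden points. That gives the first line of \eqref{eq:random-linear}.

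\textbf{The simpler second bound.} Put \(r_0=\lceil\log_q\Lambda_{d-1}\rceil\), and take \(r=\min\{N,\max\{1,r_0\}\}\).
\begin{itemize}
\item If \(r=N\), then \(q^{N-r}-1=0\) and the strict inequality holds trivially, provided \(q^N-1>0\). In this case the zero code works.
\item Otherwise \(q^r\geq\Lambda_{d-1}\), so \(\Lambda_{d-1}(q^{N-r}-1)\leq q^N-\Lambda_{d-1}<q^N-1\). This needs \(\Lambda_{d-1}>1\).
\end{itemize}
The edge case \(\Lambda_{d-1}\le1\) needs one more sentence. If \(\Lambda_{d-1}=1\), then \(r=1\) gives \(q^N-q<q^N-1\). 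The case \(\Lambda_{d-1}=0\) cannot occur for \(d\ge2\), since \(\cB_1\) contains nonzero words.

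\textbf{Explicit and asymptotic forms.} For \eqref{eq:linear-explicit}, set \(\rho=d-1\) and use the union bound behind \eqref{eq:ball-union}. Each \(\cV_S\) contributes \((q^{\rho n-\binom\rho2}-1)/(q-1)\) projective points, so
\[
 \Lambda_\rho\leq\binom n\rho\frac{q^{\rho n-\binom\rho2}-1}{q-1},
\]
and monotonicity of \(\lceil\log_q\cdot\rceil\) gives the bound. For \eqref{eq:linear-asymptotic}, bound \(\binom n\rho\le n^\rho\) and drop the \(-1\) and the \(1/(q-1)\). This gives
\[
 \log_q\Lambda_\rho\leq\rho n-\binom\rho2+\rho\log_q n,
\]
and the ceiling adds at most \(1\).

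The only delicate step is the boundary bookkeeping in the min/max clauses (the cases \(r=N\) and \(\Lambda\) small). The probabilistic core is the standard Loeliger-type argument and presents no real obstacle.
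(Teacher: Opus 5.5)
Your proposal is correct and follows the paper's own argument: the same uniform random \((N-r)\)-dimensional subspace, the same projective-line union bound with probability \((q^{N-r}-1)/(q^N-1)\), the zero code when the choice reaches \(N\), and the same star-subspace union bound for \(\Lambda_\rho\). The only difference is cosmetic. The paper bounds \(\Lambda_{d-1}(q^{N-r}-1)\le q^r(q^{N-r}-1)=q^N-q^r<q^N-1\) using only \(r\ge1\), which makes your separate treatment of \(\Lambda_{d-1}=1\) unnecessary; your one-line inequality in that case is in fact this same bound.
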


\begin{proof}
The assertion for \(d=1\) is immediate.  Let \(d\geq2\), fix
\(1\leq r\leq N\), and choose uniformly an \((N-r)\)-dimensional subspace
\(\cC\) of \(\cG_q(n)\).  The nonzero words in \(\cB_{d-1}\) split into
\(\Lambda_{d-1}\) projective lines.  Each fixed line is contained in \(\cC\)
with probability
\[
 \frac{q^{N-r}-1}{q^N-1}.
\]
The expected number of bad lines in \(\cC\) is therefore less than one under
the first condition in \eqref{eq:random-linear}, so some \(\cC\) contains none.
For
\[
 r=\max\left\{1,\left\lceil\log_q\Lambda_{d-1}\right\rceil\right\}<N,
\]
this expectation is at most
\[
 \frac{q^r(q^{N-r}-1)}{q^N-1}
 =\frac{q^N-q^r}{q^N-1}<1;
\]
if this choice is at least \(N\), take the zero code instead.  Finally, the
union of the \(\rho\)-star subspaces gives
\[
 \Lambda_\rho
 \leq
 \binom n\rho\frac{q^{\rho n-\binom\rho2}-1}{q-1}.
\]
This proves the explicit bound, and the fixed-\(\rho\) estimate of the binomial
coefficient gives the asymptotic form.
\end{proof}

The projective refinement can be strict.  For example, when \(q=3\), \(n=2\),
and \(d=2\), one has \(|\cB_1|=15\) and \(\Lambda_1=7\), so
\eqref{eq:random-linear} gives redundancy at most \(2\), whereas the
unprojectivized bound \(\lceil\log_3|\cB_1|\rceil\) gives \(3\).

Let \(K_q^{\rm N}(n,R)\) be the minimum number of centers whose node balls
of radius \(R\) cover \(\cG_q(n)\).  The classical fractional-cover and greedy
set-cover bounds \cite{Lovasz1975,Chvatal1979} give a companion pair of
covering bounds.

\begin{proposition}[Covering bounds]
\label{prop:covering}
For every \(n\), prime power \(q\), and \(0\le R\le n\), put
\(b\triangleq|\cB_R|\) and \(H_b\triangleq\sum_{j=1}^b1/j\).  Then
\begin{equation}
 \frac{q^N}{b}
 \le K_q^{\rm N}(n,R)
 \le
 H_b\frac{q^N}{b}
 \le
 \frac{q^N}{b}\bigl(1+\ln b\bigr).
\label{eq:covering}
\end{equation}
\end{proposition}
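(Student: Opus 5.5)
The plan is to treat the covering problem as a set-cover instance on the ground set \(\cG_q(n)\). The family consists of the \(q^N\) translates \(G+\cB_R\), \(G\in\cG_q(n)\). Translation invariance of \(d_{\rm N}\) (Section~\ref{sec:preliminaries}) shows that every radius-\(R\) ball is such a translate and has cardinality exactly \(b=|\cB_R|\). Moreover, each point \(X\) lies in exactly \(b\) of these balls, namely those centred at \(X-Y\) for \(Y\in\cB_R\). The incidence structure is therefore biregular, and the three inequalities in \eqref{eq:covering} follow from that regularity alone.

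\emph{Lower bound.} If \(K\) balls cover \(\cG_q(n)\), then \(Kb\ge q^N\), since each ball has exactly \(b\) points. This gives \(K_q^{\rm N}(n,R)\ge q^N/b\).

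\emph{Middle inequality.} Here I would invoke the Lov\'asz--Chv\'atal greedy bound \cite{Lovasz1975,Chvatal1979}. For a set system whose sets all have size at most \(b\), the greedy cover has size at most \(H_b\) times the optimal fractional cover value \(\tau^*\). In the present system, assigning weight \(1/b\) to every one of the \(q^N\) balls gives a feasible fractional cover, because every point is covered by exactly \(b\) balls and so receives total weight \(1\). Hence \(\tau^*\le q^N/b\), and the greedy integral cover has size at most \(H_b\,q^N/b\).

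\emph{Final inequality.} This is the elementary estimate \(H_b\le 1+\ln b\). It follows by comparing \(\sum_{j=2}^b 1/j\) with \(\int_1^b dx/x\).

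I expect no serious obstacle. The only point requiring care is stating the version of the Lov\'asz--Chv\'atal theorem used, namely greedy size \(\le H_{\max|S|}\cdot\tau^*\), and checking its hypotheses. These reduce to the two regularity facts above: all balls have size \(b\), and every point is contained in exactly \(b\) balls. Both follow from translation invariance. The degenerate case \(R=n\), where \(b=q^N\), is consistent: all three bounds equal \(1\), since \(H_b\,q^N/b\ge1\) trivially.
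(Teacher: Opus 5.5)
Your proposal is correct and follows essentially the same route as the paper: biregularity of the translated balls, the counting lower bound, Chv\'atal's greedy bound against a fractional cover of value \(q^N/b\) (you use only the feasibility direction, whereas the paper also exhibits the matching dual), and \(H_b\le1+\ln b\). One small slip in your closing remark on \(R=n\): only the lower bound and \(K_q^{\rm N}(n,R)\) equal \(1\), while the two upper bounds are \(H_{q^N}\) and \(1+N\ln q\), which exceed \(1\) once \(q^N\ge2\).
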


\begin{proof}
Put \(v\triangleq q^N\) and \(b\triangleq|\cB_R|\).  In the incidence
hypergraph of translated balls, every block contains \(b\) words and every
word lies in \(b\) blocks.  Uniform weights \(1/b\) on the blocks and on the
points show, by the primal and dual fractional-cover programs, that the
fractional cover number is exactly \(v/b\).  Chv\'atal's greedy set-cover bound
is at most \(H_b\) times the fractional optimum, proving the harmonic upper
bound; \(H_b\leq1+\ln b\) gives the last inequality.  The counting lower bound
is \(v/b\).
\end{proof}

Theorem~\ref{thm:Bonferroni} turns
\eqref{eq:packing-gilbert}--\eqref{eq:covering} into explicit asymptotic
bounds.  The logarithmic term
\[
 \log_q\binom nt
\]
has a clear geometric meaning: it is the cost of selecting which one among
the many maximal coordinate subspaces contains the error.  This term is
invisible in the dimension of a single star union and is exactly the
phenomenon that makes the metric ball substantially different from a
Hamming ball.

\section{The Complementary Clique Metric}
\label{sec:clique}

The coordinate complement of a union of failed-node stars is an induced
clique.  This elementary observation leads to a second metric whose balls
are substantially easier to enumerate and whose optimal codes are dual to
optimal node-erasure codes.

The full-rank condition on the complementary surviving clique was already
used in \cite{YohananovYaakobi2019} to derive bounds on optimal node-erasure
codes.  Here we treat the complementary coordinate set as an erasure metric
in its own right, compute its weight distribution and ball volumes, and state
the optimal-code equivalence explicitly as a duality theorem.

For \(T\subseteq[n]\), recall the induced coordinate set \(\Cl(T)\) from
Section~\ref{sec:preliminaries}.  For \(e\in E_n\), write \(x_e\) for the
coordinate of \(x\in\cG_q(n)\) indexed by \(e\).  Define the vertex support
and clique weight by
\begin{align}
 \operatorname{vsupp}(x)
 &\triangleq\{i\in[n]:x_{\langle v_i,v_j\rangle}\ne0
 \text{ for some }j\in[n]\},\label{eq:vsupp}\\
 \wtCl(x)
 &\triangleq\min\{|T|:\supp(x)\subseteq\Cl(T)\}
 =|\operatorname{vsupp}(x)|.
\label{eq:clique-weight}
\end{align}
A nonzero loop activates one vertex and a nonzero ordinary edge activates
both endpoints.  Thus \(\wtCl\) is not the vertex-cover number.  For
example, one ordinary edge has node weight one but clique weight two.

\begin{proposition}
Let \(0\leq\mu\leq n\) be an integer.
The function
\[
 d_{\rm cl}:\cG_q(n)\times\cG_q(n)\longrightarrow\{0,1,\ldots,n\},
 \qquad
 d_{\rm cl}(x,y)\triangleq\wtCl(x-y)
\]
is a translation-invariant metric.  A code corrects every erasure of an
induced clique on at most \(\mu\) vertices if and only if its minimum clique
distance is at least \(\mu+1\).
\end{proposition}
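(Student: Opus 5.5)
The plan has two parts: first verify the metric axioms for \(d_{\rm cl}\), then prove the erasure characterization in exact parallel with the node-metric lemma (Distance and node erasures) and the erased-column criterion, Lemma~\ref{lem:erased-column-criterion}.

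\emph{Metric axioms.}
\begin{itemize}
\item Translation invariance is built into the definition \(d_{\rm cl}(x,y)=\wtCl(x-y)\).
\item Definiteness: \(\wtCl(x)=|\operatorname{vsupp}(x)|=0\) exactly when every coordinate of \(x\) vanishes, because every coordinate is incident with at least one vertex.
\item Symmetry: \(\operatorname{vsupp}(-x)=\operatorname{vsupp}(x)\), since negation preserves nonzero labels.
\item Triangle inequality: it suffices to show \(\operatorname{vsupp}(x+y)\subseteq\operatorname{vsupp}(x)\cup\operatorname{vsupp}(y)\). If a coordinate \(\langle v_i,v_j\rangle\) is nonzero in \(x+y\), it is nonzero in \(x\) or in \(y\), so \(i\) lies in the corresponding vertex support. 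Subadditivity of cardinality then gives \(\wtCl(x+y)\le\wtCl(x)+\wtCl(y)\).
\item The equality \(\min\{|T|:\supp(x)\subseteq\Cl(T)\}=|\operatorname{vsupp}(x)|\) in \eqref{eq:clique-weight} should be recorded. Indeed, \(\supp(x)\subseteq\Cl(T)\) holds iff \(T\supseteq\operatorname{vsupp}(x)\), since every nonzero coordinate must have both endpoints in \(T\).
\end{itemize}

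\emph{Erasure characterization.} Here ``correcting the erasure of \(\Cl(T)\)'' means that any two codewords agreeing outside \(\Cl(T)\) coincide. Enlarging \(T\) enlarges \(\Cl(T)\), so it suffices to treat \(|T|=\mu\), provided \(\mu\le n\); this is where the hypothesis \(0\le\mu\le n\) is used. The two directions are then as follows.
\begin{itemize}
\item Suppose the minimum clique distance is at least \(\mu+1\), and let \(x\ne y\) be codewords agreeing outside \(\Cl(T)\). Then \(\supp(x-y)\subseteq\Cl(T)\), so \(d_{\rm cl}(x,y)\le|T|\le\mu\), a contradiction.
\item Conversely, suppose \(x\ne y\) are codewords with \(d_{\rm cl}(x,y)\le\mu\). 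Take \(T\supseteq\operatorname{vsupp}(x-y)\) with \(|T|\le\mu\). Then \(x\) and \(y\) agree outside \(\Cl(T)\), so that clique erasure is not uniquely correctable.
\end{itemize}
The argument does not use linearity, although for linear codes it reduces to the statement \(\cC\cap\{x:\supp(x)\subseteq\Cl(T)\}=\{0\}\), the direct analogue of the star-subspace form.

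\emph{Expected difficulty.} No step is a real obstacle. The only point needing care is the convention that a nonzero ordinary edge activates both endpoints, which is what makes \(\Cl(T)\) exactly the coordinates with both endpoints in \(T\). One should also note the degenerate convention that a code with at most one word has infinite minimum distance, so that the equivalence holds trivially in that case.
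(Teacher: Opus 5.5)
Your proposal is correct and follows the same route as the paper. The paper's proof is two lines: the triangle inequality via \(\operatorname{vsupp}(x+y)\subseteq\operatorname{vsupp}(x)\cup\operatorname{vsupp}(y)\), and the erasure claim as the standard support criterion. You carry out both in full, including the identification \(\supp(x)\subseteq\Cl(T)\iff T\supseteq\operatorname{vsupp}(x)\), which is the step that actually links the erasure pattern to the clique weight.
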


\begin{proof}
The only nontrivial metric axiom follows from
\[
 \operatorname{vsupp}(x+y)
 \subseteq
 \operatorname{vsupp}(x)\cup\operatorname{vsupp}(y).
\]
The erasure assertion is the standard support criterion for a known
coordinate-erasure pattern.
\end{proof}

For an integer \(0\leq\mu\leq n\), if a linear code
\(\mathcal D\leq\cG_q(n)\) corrects every erasure of an induced clique on
\(\mu\) vertices, then its redundancy satisfies the clique Singleton bound
\begin{equation}
 r(\mathcal D)\geq |\Cl(T)|=\binom{\mu+1}{2},
 \qquad |T|=\mu.
\label{eq:clique-singleton}
\end{equation}
Indeed, apply Lemma~\ref{lem:erased-column-criterion} to the coordinate-erasure
set \(X=\Cl(T)\): its parity-check columns must be linearly independent.  A
code attaining equality in \eqref{eq:clique-singleton} is
called Singleton-optimal for \(\mu\)-clique erasures.

\subsection{Exact enumerator and generating function}

For \(0\leq t\leq n\), define
\[
 W_{n,q}^{\rm cl}(z)
 \triangleq
 \sum_{x\in\cG_q(n)}z^{\wtCl(x)},
 \qquad
 \cB_t^{\rm cl}
 \triangleq
 \{x\in\cG_q(n):\wtCl(x)\leq t\}.
\]

For ordinary binary loopless graphs, counting labeled graphs with no isolated
vertices by inclusion--exclusion is classical; see
\cite{BenderCanfieldMcKay1997} for a refined enumeration by edge number.  The
expression below is the corresponding \(q\)-ary looped specialization.  For
every integer \(m\geq0\), let \(\nu_m(q)\) be the number of \(q\)-ary
looped graphs on a fixed set of
\(m\) vertices whose vertex support is the entire set; equivalently, no vertex
is isolated from all nonzero ordinary edges and loops.  Inclusion--exclusion
over the vertices absent from the support gives
\begin{equation}
\nu_m(q)
=
\sum_{j=0}^m(-1)^j\binom mj
q^{\binom{m-j+1}{2}}.
\label{eq:nu-m}
\end{equation}
Indeed, after \(j\) specified vertices are absent, the remaining looped
complete graph has \(m-j\) vertices and
\(\binom{m-j}{2}+(m-j)=\binom{m-j+1}{2}\) free coordinates.  This count
includes its loops; inclusion--exclusion enforces that no remaining vertex
is isolated.

\begin{theorem}[Clique-weight enumerator]
\label{thm:clique-enumerator}
For \(0\le t\le n\), the ambient clique-weight polynomial and ball volume
are
\begin{align}
 W_{n,q}^{\rm cl}(z)
 &=\sum_{m=0}^n\binom nm\nu_m(q)z^m,
 \label{eq:clique-enumerator}\\
 |\cB_t^{\rm cl}|
 &=\sum_{m=0}^t\binom nm\nu_m(q).
\label{eq:clique-ball}
\end{align}
For a formal indeterminate \(x\), their exponential generating function is
the formal identity
\begin{equation}
 \sum_{n\ge0}W_{n,q}^{\rm cl}(z)\frac{x^n}{n!}
 =
 e^{(1-z)x}
 \sum_{m\ge0}q^{\binom{m+1}{2}}\frac{(zx)^m}{m!}.
\label{eq:clique-egf}
\end{equation}
\end{theorem}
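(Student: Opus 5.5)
The proof will partition the ambient space by vertex support. Every word \(x\in\cG_q(n)\) determines a unique set \(M=\operatorname{vsupp}(x)\), and \(\wtCl(x)=|M|\) by \eqref{eq:clique-weight}. Conversely, the words with vertex support exactly \(M\) are determined as follows. Every coordinate outside \(\Cl(M)\) must vanish. Indeed, such a coordinate is either a loop at a vertex outside \(M\) or an ordinary edge with at least one endpoint outside \(M\), and a nonzero label there would activate that endpoint. The restriction of \(x\) to \(\Cl(M)\) is then a \(q\)-ary looped graph on the vertex set \(M\) whose vertex support is all of \(M\). Thus these words are in bijection with the words counted by \(\nu_{|M|}(q)\), and the formula \eqref{eq:nu-m} has already been justified. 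Summing over the \(\binom nm\) choices of \(M\) with \(|M|=m\) gives \eqref{eq:clique-enumerator}. Summing the coefficients for \(m\le t\) gives \eqref{eq:clique-ball}.

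For the exponential generating function, I will first invert the relation between \(\nu_m\) and the total counts. By the same support partition applied to a fixed \(m\)-set, every one of the \(q^{\binom{m+1}2}\) looped words on \(m\) vertices has a unique vertex support. This yields the binomial identity
\[
 q^{\binom{m+1}{2}}=\sum_{k=0}^m\binom mk\nu_k(q).
\]
Equivalently, in EGF form, \(\sum_m q^{\binom{m+1}2}x^m/m!=e^{x}\,N(x)\), where \(N(x)\triangleq\sum_k\nu_k(q)x^k/k!\) is a formal power series. Hence \(N(x)=e^{-x}\sum_m q^{\binom{m+1}2}x^m/m!\), which is also just \eqref{eq:nu-m} read as a product of EGFs.

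Next, \eqref{eq:clique-enumerator} is a binomial convolution: \(W^{\rm cl}_{n,q}(z)/n!=\sum_{m+k=n}\frac{\nu_m z^m}{m!}\frac{1}{k!}\). Therefore
\[
 \sum_{n\ge0}W^{\rm cl}_{n,q}(z)\frac{x^n}{n!}=e^{x}N(zx)=e^{x}e^{-zx}\sum_m q^{\binom{m+1}2}\frac{(zx)^m}{m!},
\]
which is \eqref{eq:clique-egf}. All manipulations take place in \(\mathbb Q[z][[x]]\), where products of exponential series are legitimate even though \(\sum_m q^{\binom{m+1}2}x^m/m!\) has zero radius of convergence. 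This is why the identity is stated formally.

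I expect the only delicate point to be the first step: checking that the zero pattern outside \(\Cl(M)\) is forced and that the full-support condition on \(M\) coincides exactly with the condition counted by \(\nu_m\). In particular, a loop counts as activating its vertex, so a vertex with only a nonzero loop still lies in the support. The rest of the argument is standard EGF bookkeeping.
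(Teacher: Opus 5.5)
Your proposal is correct and follows essentially the same route as the paper: partition by vertex support to get \eqref{eq:clique-enumerator} and \eqref{eq:clique-ball}, then treat \eqref{eq:nu-m} and the binomial convolution as products of exponential generating functions in the formal power-series ring. Your detour through $q^{\binom{m+1}{2}}=\sum_{k}\binom{m}{k}\nu_k(q)$ is just the inverted form of the paper's direct substitution of \eqref{eq:nu-m}, and it reaches the same expression $e^{x}e^{-zx}\sum_{m}q^{\binom{m+1}{2}}(zx)^m/m!$.
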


\begin{proof}
Choose the \(m\) active vertices and then count graphs with precisely that
vertex support using \eqref{eq:nu-m}.  Summing up to \(t\) proves
\eqref{eq:clique-ball}.  For \eqref{eq:clique-egf}, substitute
\eqref{eq:nu-m}, set \(r\triangleq m-j\), and apply the exponential formula:
\begin{align*}
 \sum_{n\ge0}W_{n,q}^{\rm cl}(z)\frac{x^n}{n!}
 &=e^x\sum_{m\ge0}\nu_m(q)\frac{(zx)^m}{m!}\\
 &=e^xe^{-zx}
 \sum_{r\ge0}q^{\binom{r+1}{2}}\frac{(zx)^r}{r!}.
\end{align*}
For \(q>1\), the last series has zero analytic radius; the asserted identity
is therefore an identity of formal power series.
\end{proof}

\subsection{Optimal node--clique duality}

For \(S\subseteq[n]\), the star and complementary clique coordinate sets
form the disjoint partition
\begin{equation}
 \St(S)\mathbin{\dot\cup}\Cl([n]\setminus S)=E_n.
\label{eq:star-clique-partition}
\end{equation}
The dual below is taken with respect to the edge-coordinate inner product
\[
 \langle\cdot,\cdot\rangle_E:
 \cG_q(n)\times\cG_q(n)\longrightarrow\F_q,
 \qquad
 \langle x,y\rangle_E\triangleq\sum_{e\in E_n}x_ey_e.
\]
For a linear code \(\cC\leq\cG_q(n)\), define
\[
 \cC^\perp\triangleq
 \{y\in\cG_q(n):\langle x,y\rangle_E=0
   \text{ for every }x\in\cC\}.
\]
This convention matters in characteristic two: the Frobenius inner product
of full symmetric matrices counts every off-diagonal coordinate twice and
is not the relevant pairing.

The information-set duality used below is standard; in an erasure-pattern
formulation it appears, for example, in
\cite[Lem.~9]{HolzbaurPuchingerYaakobiWachterZeh2021}.  The graph-specific
ingredient is the star--clique partition \eqref{eq:star-clique-partition}.

\begin{theorem}[Node--clique duality]
\label{thm:node-clique-duality}
Let \(\rho\) and \(\mu\) be nonnegative integers with \(\rho+\mu=n\), and
let \(\cC\le\cG_q(n)\) have dimension
\[
 \dim\cC=\binom{\mu+1}{2}.
\]
Then \(\cC\) is Singleton-optimal for correcting every \(\rho\)-node
erasure if and only if \(\cC^\perp\) is Singleton-optimal for correcting
every erasure of an induced \(\mu\)-vertex clique.
\end{theorem}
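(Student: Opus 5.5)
The plan is to reduce both optimality conditions to one rank statement about the same coordinate sets and then apply the standard information-set duality. Write \(k\triangleq\dim\cC=\binom{\mu+1}{2}\), and let \(G\) be a generator matrix of \(\cC\). Since \(\cC^\perp\) is the kernel of \(G\) with respect to \(\langle\cdot,\cdot\rangle_E\), the matrix \(G\) is a parity-check matrix for \(\cC^\perp\) in the sense of Section~\ref{sec:preliminaries}, with full row rank. Moreover
\[
 r(\cC^\perp)=N-\dim\cC^\perp=k=\binom{\mu+1}{2},
\]
which is exactly the clique Singleton value \eqref{eq:clique-singleton}. Similarly,
\[
 r(\cC)=N-k=\binom{n+1}{2}-\binom{\mu+1}{2}=\rho n-\binom{\rho}{2}=r_{\rm Sing}(n,\rho),
\]
by the star--clique partition \eqref{eq:star-clique-partition} and the count \(|\St(S)|=\rho n-\binom\rho2\). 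So both codes automatically have the Singleton redundancy. On both sides, optimality is therefore equivalent to the mere correction property.

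Next I would translate each correction property into a condition on a fixed set \(S\in\binom{[n]}{\rho}\) with complement \(T=[n]\setminus S\), \(|T|=\mu\).

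On the clique side, Lemma~\ref{lem:erased-column-criterion} applied to \(\cC^\perp\) with parity-check matrix \(G\) says that \(\cC^\perp\) corrects the erasure of \(\Cl(T)\) if and only if the \(k\) columns \(G|_{\Cl(T)}\) are linearly independent. Since there are exactly \(k\) of them, this is equivalent to \(G|_{\Cl(T)}\) being an invertible \(k\times k\) matrix.

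On the node side, the preliminary lemma gives that \(\cC\) corrects the \(\rho\)-node erasure \(S\) if and only if \(\cC\cap\cV_S=\{0\}\). In other words, no nonzero codeword vanishes on the complementary coordinates \(\Cl(T)\). Codewords are \(xG\) for \(x\in\F_q^k\), so this says the map \(x\mapsto xG|_{\Cl(T)}\) is injective, i.e.\ again \(G|_{\Cl(T)}\) is invertible.

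Thus for each complementary pair \((S,T)\) the two local conditions coincide. As \(S\) ranges over \(\binom{[n]}{\rho}\), its complement \(T\) ranges over \(\binom{[n]}{\mu}\). Note that correcting every erasure of at most \(\mu\) clique vertices follows from the \(\mu\)-vertex case, since \(\Cl(T')\subseteq\Cl(T)\) for \(T'\subseteq T\); likewise at most \(\rho\) node erasures reduce to exactly \(\rho\). Quantifying over all complementary pairs then gives the equivalence.

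The only delicate points are bookkeeping. First, the dual must be taken with the edge inner product, so that \(G\) really is a parity-check matrix of \(\cC^\perp\) with columns indexed by \(E_n\); this is why the paper warns against the Frobenius pairing. Second, the degenerate cases \(\rho=0\) or \(\mu=0\) have to be checked: there \(\cC\) is the full space or the zero code, and the statement holds trivially. I expect no genuine obstacle beyond verifying the identity \(\binom{n+1}{2}-\binom{\mu+1}{2}=\rho n-\binom\rho2\) and the support translation \(\cC\cap\cV_S=\{0\}\iff G|_{\Cl(T)}\) is injective.
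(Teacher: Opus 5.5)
Your proposal is correct and follows essentially the same route as the paper: both take a generator matrix of \(\cC\) as a parity-check matrix of \(\cC^\perp\) and show that, for each complementary pair \((S,T)\), both correction conditions reduce to invertibility of the square block indexed by \(\Cl(T)\). Your explicit check that \(N-\binom{\mu+1}{2}=r_{\rm Sing}(n,\rho)\) and your treatment of the degenerate cases are bookkeeping that the paper leaves implicit.
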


\begin{proof}
Fix \(S\subseteq[n]\) with \(|S|=\rho\), and put
\(T\triangleq[n]\setminus S\).
Choose a full-row-rank generator matrix
\(G_{\cC}\in\F_q^{\binom{\mu+1}{2}\times N}\) for \(\cC\), with columns
indexed by \(E_n\).
The code \(\cC\) corrects the coordinates \(\St(S)\) precisely when the
projection
\[
\pi_{\Cl(T)}:\cC\longrightarrow\F_q^{\Cl(T)}
\]
is injective: its kernel consists exactly of the codewords supported on
\(\St(S)\).  Its domain and codomain both have dimension
\(\binom{\mu+1}{2}\), so this is equivalent to the projection being an
isomorphism.  Equivalently, the columns of \(G_{\cC}\) indexed by
\(\Cl(T)\) are linearly independent.  The matrix \(G_{\cC}\) is a
parity-check matrix for \(\cC^\perp\); hence this is exactly the criterion
that \(\cC^\perp\) correct the erasure of \(\Cl(T)\).  Its redundancy is
\[
 N-\dim\cC^\perp=\dim\cC=|\Cl(T)|,
\]
so the clique-erasure code is Singleton-optimal.  The converse is
symmetric.
\end{proof}

\begin{corollary}
The duals of the optimal double-node-erasure codes of
\cite[Constr.~6, Thm.~12]{YohananovYaakobi2019} form an unconditional infinite family of
optimal binary \((n-2)\)-clique-erasure codes, of dimension \(2n-1\), at
every prime length \(n\geq5\).  The duals of the
optimal triple-node completions in Section~\ref{sec:optimal-triples} give
optimal \((n-3)\)-clique-erasure codes at \(n=6,8,10,12\).
\end{corollary}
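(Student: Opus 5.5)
The corollary is a direct application of Theorem~\ref{thm:node-clique-duality} with \(\rho=2\), \(\mu=n-2\) in the first assertion and \(\rho=3\), \(\mu=n-3\) in the second. Each time, the only thing to check is the dimension hypothesis: the node-erasure code must have dimension exactly \(\binom{\mu+1}{2}\). Once that holds, the theorem says the dual is Singleton-optimal for \(\mu\)-clique erasures. Its dimension is then \(N-\binom{\mu+1}{2}\), which also equals the Singleton redundancy \(r_{\rm Sing}(n,\rho)\) of the original code.

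For the first assertion I would recall that the double-node codes of \cite[Constr.~6, Thm.~12]{YohananovYaakobi2019} are binary, exist at every prime \(n\geq5\), and are Singleton-optimal, with redundancy \(2n-1=r_{\rm Sing}(n,2)\). This is the optimality quoted in the introduction and again after Corollary~\ref{cor:comparison}. Their dimension is then
\[
 \binom{n+1}{2}-(2n-1)=\frac{n^2-3n+2}{2}=\binom{n-1}{2}=\binom{\mu+1}{2}
\]
with \(\mu=n-2\), so Theorem~\ref{thm:node-clique-duality} applies. The dual is an optimal \((n-2)\)-clique-erasure code of dimension \(2n-1\). This family is unconditional because the cited construction only needs \(n\) to be prime; it does not need any primitive-root hypothesis.

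For the second assertion, Proposition~\ref{prop:finite-completions} gives binary triple-node codes of redundancy \(3n-3\) at \(n=6,8,10,12\). These codes have dimension
\[
 \binom{n+1}{2}-3n+3=\frac{n^2-5n+6}{2}=\binom{n-2}{2}=\binom{\mu+1}{2}
\]
with \(\mu=n-3\). Applying the theorem again gives optimal \((n-3)\)-clique-erasure codes, each of dimension \(3n-3\).

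The step I expect to need the most care is the citation in the first assertion. I have to confirm that the referenced construction really has exactly Singleton redundancy \(2n-1\) at every prime \(n\geq5\) over \(\F_2\), with no extra arithmetic condition. Everything after that is the two binomial identities above.
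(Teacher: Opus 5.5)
Your proposal is correct and follows the paper's intended route: the corollary is Theorem~\ref{thm:node-clique-duality} applied with $(\rho,\mu)=(2,n-2)$ and $(3,n-3)$, and the dimension hypothesis follows automatically from Singleton optimality, as your binomial identities confirm. One small correction: the optimality remarks in the introduction and after Corollary~\ref{cor:comparison} refer to the double-node codes of \cite{YohananovEfronYaakobi2020}, not to \cite[Constr.~6]{YohananovYaakobi2019}, so the binary optimality of the latter at every prime $n\geq5$ must be taken from \cite[Thm.~12]{YohananovYaakobi2019} itself, as the statement does.
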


The later double-node construction of
\cite[Constr.~1, Thm.~2]{YohananovEfronYaakobi2020} also works at every odd
prime length; Theorem~\ref{thm:multislope} includes that code at \(\rho=2\)
under the common primitive-\(2\) hypothesis of the multi-slope family.

\begin{remark}
The dimension hypothesis in Theorem~\ref{thm:node-clique-duality} is
essential.  The zero code corrects every coordinate erasure, whereas its
dual is the ambient space and corrects no nonempty erasure.  The theorem is a
direct dual reformulation of the full-rank clique-minor criterion in
\cite[Lem.~17]{YohananovYaakobi2019}.  The graph-specific metric formulation,
its ball enumerator, and the code-family applications above are the
contributions here.
\end{remark}

This duality concerns correctable erasure patterns at the Singleton
dimension.  For general linear codes, a node-weight enumerator alone does
not determine the node-weight enumerator of the dual.
Appendix~\ref{app:macwilliams} gives the precise obstruction and a
character-sum formula using finer information.

\section{Discussion and Open Problems}
\label{sec:discussion}

Allowing the cyclic slope to vary gives binary triple-node codes of
redundancy \(3n-2\) at unconditionally infinitely many prime lengths,
including lengths outside the published slope-two hypothesis.  The central
Singleton-optimal constructive problem remains to find an infinite family
with redundancy \(3n-3\).

Three further directions emerge from the results.  First, the multi-slope
family is uniform in the number of failed nodes and
improves the Schmidt benchmark throughout the range
\(2\leq\rho\leq(n+1)/2\) of Theorem~\ref{thm:multislope}.  Its redundancy gap
\[
 \binom{\rho-1}{2}
\]
is independent of \(n\), but grows quadratically with the number of
failures.  Within the multi-slope parity-check template, closing that gap would
require additional dependencies coupling different slope families, rather
than merely adding another independent family of
\(n\) cyclic checks.

Second, the Moore theorem shows uniformly for every \(2\leq\rho\leq n\) that
the Moore edge skeleton has maximum local rank on every failed set.  For
\(3\leq\rho\leq n\), completing this skeleton requires quotienting by a common
subspace of dimension \(\rho(\rho-3)/2\), followed by a \(\rho\)-loop completion.
At \(\rho=3\) this subspace is zero.  For each failed triple, the local edge space
then has codimension three in the check space, and the three corresponding
loop images must form a basis of that quotient.  The exact completions at
\(n=6,8,10,12\) demonstrate that the obstruction in the published cyclic
parity-check matrix is not an obstruction to optimal triple-node codes
themselves.  An algebraic completion valid for infinitely many \(m\) would
settle the central Singleton-optimal constructive question.

The case \(\rho=4\) does not by itself bypass that question:
Proposition~\ref{prop:vertex-descent} turns any optimal four-failure family
into an optimal triple-failure family.  Nevertheless, when \(m=n-1\) is
even, maximal alternating rank-metric spaces of Delsarte--Goethals type
suggest a possible route to an alternative optimal-dimensional four-star
edge quotient
\cite{DelsarteGoethals1975,Schmidt2010}.  Its unresolved step is again a
simultaneous loop completion, so bilinear forms become useful here as a structural
reduction rather than as a finished construction.

Third, the metric-ball calculation explains why dimension-only arguments
miss a polynomial factor.  A radius-\(t\) ball is not one coordinate
subspace but a highly overlapping arrangement of \(\binom nt\) maximal
subspaces.  The Venn-profile formula records all overlaps exactly, while the
pair formula already determines the fixed-radius asymptotics.  This opens
several concrete enumerative questions: a recurrence for the simple-graph
polynomials \(S_{m,q}(z)\), efficient computation of the profile sum for
growing \(t\), and a multivariate transform fine enough to recover a dual
weight distribution.

The complementary clique metric also gives a structural conclusion:
Singleton-optimal node codes and clique-erasure codes are dual under the
edge-coordinate pairing.  In particular, existing optimal binary
double-node codes yield an unconditional infinite family of optimal
clique-erasure codes.

The recent asymptotic work of Kopparty, Potukuchi, and Sha
\cite{KoppartyPotukuchiSha2025} confirms that the vertex-cover metric is no
longer an isolated storage model.  Their constant-relative-distance regime
and the present fixed-distance regime are complementary.  Finding explicit
families that interpolate between them is a natural longer-term problem.

\section*{Computational Reproducibility}

All computer-assisted verifications reported in this paper were carried out
using exact arithmetic.  The complete versioned verification package,
including source code, certificate data, execution instructions, recorded
outputs, and cryptographic checksums for every reported finite check, is
publicly archived in Zenodo \cite{ZabokritskiyYohananov2026Verification}.

\begin{appendices}
\renewcommand{\theHsection}{appendix.\Alph{section}}
\renewcommand{\theHequation}{appendix.\Alph{section}.\arabic{equation}}
\renewcommand{\theHtheorem}{appendix.\Alph{section}.\arabic{theorem}}

\section{Proof of the Local Determinant Criterion}
\label{app:primitive-slope}

We prove the equivalence in Lemma~\ref{lem:local-determinant-criterion} by
tracking the kernel of the parity-check matrix on one failed triple.  The
original coding target is exact: a word supported on
\(\St(\{0,a,b\})\) must be zero precisely when all the determinants
\(D_k(a,b)\) are nonzero.  The proof temporarily changes language in four
steps.  The reduced cyclic algebra \(R_n\) packages the syndrome equations;
the neighborhood checks leave an Eulerian ordinary-edge matrix, which will
be given a lossless cycle-space coordinate below; the diagonal checks reduce
that coordinate to one coset \(y+\F_2\); and the slope checks yield the
factorization \eqref{eq:lambda-factorization}, whose middle factor is tested
by \(D_k(a,b)\).  We then return to the graph: a unit middle factor forces the
edge pattern and all loops to vanish, whereas a zero determinant lets us
reverse the reductions and construct a nonzero locally supported codeword.

\begin{proof}[Proof of Lemma~\ref{lem:local-determinant-criterion}]
Let \(n\geq5\) be prime and let \(\lambda\) be primitive modulo \(n\).
For convenience, write
\[
 M_n(x)\triangleq1+x+\cdots+x^{n-1}.
\]
Recall from Section~\ref{sec:cyclic-slopes} that
\(R_n=\F_2[x]/(M_n(x))\), that \(z_j\) denotes the residue class of
\(x^j\), and that \(\sigma_\lambda(z_j)=z_{\lambda j}\).
This is the nontrivial Chinese-remainder component of the full cyclic group
algebra:
\[
\F_2[x]/(x^n-1)\cong\F_2\times R_n.
\]
In the notation of \cite{YohananovEfronYaakobi2020}, the full group-algebra
ring is \(\mathcal R_n\); here \(R_n\) denotes its nontrivial
Chinese-remainder component.
The omitted \(\F_2\) component is exactly evaluation at \(x=1\).  Thus, in
each use below, vanishing in \(R_n\) together with the separately checked
value at \(x=1\) is equivalent to vanishing of the original syndrome
polynomial in the full group algebra.
The algebra \(R_n\) is reduced because \(n\) is odd, although it need not
be a field.  The map
\begin{equation}
 \vartheta:
 \left\{c\in\F_2^n:\sum_jc_j=0\right\}
 \longrightarrow R_n,
 \qquad
 \vartheta(c)\triangleq\sum_jc_jz_j,
\label{eq:lambda-theta}
\end{equation}
is an isomorphism: the only binary relation among the \(z_j\)'s is
\(\sum_jz_j=0\), and that relation has odd weight.

Suppose that \(X=(e_{u,v})\in\cC_3(\lambda)\) is supported on the edges
meeting a failed triple.  As established in
Section~\ref{sec:cyclic-slopes}, affine relabeling preserves all three check
families, so we may take
\[
 I=\{0,a,b\},
 \qquad
 a,b\in\mathbb Z_n^\times,
 \quad a\ne b.
\]
Recall that
\[
 \delta_1=1+z_a,
 \qquad
 \delta_2=1+z_b.
\]
Both \(\delta_i\) are units of \(R_n\): a common root of
\(1+x^a\) and \(M_n\) would be a nontrivial \(n\)-th root whose
\(a\)-th power is one.

Let \(B\) be the ordinary-edge label matrix of \(X\) introduced in
Section~\ref{sec:cyclic-slopes}.  The neighborhood checks make
\(B\in\mathcal Z(K_n)\).  Recall also the cycle-space coordinate
\[
 \Omega:\mathcal Z(K_n)\longrightarrow\bigwedge_{\F_2}^2R_n.
\]
This map is an isomorphism: the standard triangle basis of
\(\mathcal Z(K_n)\) maps to the exterior basis formed from
\(1+z_i\), \(1\leq i<n\).  For a healthy vertex \(h\), the neighborhood
check gives \(B_{0h}+B_{ah}+B_{bh}=0\), so the three edges from \(h\) to
\(I\) contribute a combination of \(\delta_1\wedge z_h\) and
\(\delta_2\wedge z_h\); the edges internal to \(I\) have the same form.
Consequently the Eulerian patterns supported on the stars of \(I\) map exactly to
\(\Span\{\delta_1,\delta_2\}\wedge R_n\).  Hence there are
\(\beta_1,\beta_2\in R_n\) such that
\begin{equation}
 W\triangleq\Omega(B)
 =\delta_1\wedge\beta_1+\delta_2\wedge\beta_2.
\label{eq:lambda-W}
\end{equation}

We next impose the diagonal checks.  Their coding role is to use the three
loop variables to normalize the two edge parameters so that
\(\delta_1\beta_1+\delta_2\beta_2=0\); the following calculation establishes
that relation explicitly.  Let \(\mathbf e_c\) denote the
coordinate unit vector at \(c\), set
\(u_1\triangleq\mathbf e_a+\mathbf e_0\) and
\(u_2\triangleq\mathbf e_b+\mathbf e_0\), and let \(\mathbf w_i\) be the even coordinate
representative of \(\beta_i\) under \(\vartheta\).  For even vectors
\(u,v\in\F_2^n\), define the loopless symmetric labeling matrix
\[
 C(u,v)=(C(u,v)_{rs})_{r,s\in\mathbb Z_n}\in\F_2^{n\times n}
\]
by
\[
 C(u,v)_{rr}\triangleq0,
 \qquad
 C(u,v)_{rs}=C(u,v)_{sr}\triangleq u_rv_s+u_sv_r
 \quad(r\ne s).
\]
Its vertex set is \(\mathbb Z_n\), and its entries are ordinary-edge labels.
Then \(C(u,v)\) is Eulerian and
\(\Omega(C(u,v))=\vartheta(u)\wedge\vartheta(v)\).  The injectivity of
\(\Omega\) therefore gives
\[
 B=C(u_1,\mathbf w_1)+C(u_2,\mathbf w_2).
\]
Set
\[
 d_c\triangleq\sum_{i=1}^2u_i(c)\mathbf w_i(c)
 \quad(c\in I),
 \qquad
 \mu\triangleq\delta_1\beta_1+\delta_2\beta_2.
\]
A direct expansion, using only multiplication in the ring \(R_n\), gives
\begin{equation}
 \mu=\sum_{r<s}B_{rs}z_{r+s}+\sum_{c\in I}d_cz_c^2.
\label{eq:lambda-diagonal-expansion}
\end{equation}
Let \(\ell_c\triangleq e_{c,c}\) be the loop symbol at \(c\), and define
\[
 P_D(x)\triangleq
 \sum_{r<s}B_{rs}x^{r+s}+\sum_{c\in I}\ell_cx^{2c}.
\]
We regard \(P_D(x)\) as an element of
\(\F_2[x]/(x^n-1)\).  After reducing exponents modulo \(n\), the coefficient
of \(x^m\) is exactly the check indexed by \(D_m\).  Hence \(P_D(x)\)
vanishes in that group algebra, and its image vanishes in \(R_n\).  Thus
\[
 \mu=\sum_{c\in I}(d_c+\ell_c)z_c^2.
\]
Evaluating that original syndrome polynomial at \(x=1\) gives
\[
 \sum_{c\in I}(d_c+\ell_c)=0.
\]
Indeed, the total number of non-loop terms in each
\(C(u_i,\mathbf w_i)\), modulo two, is
\(\sum_cu_i(c)\mathbf w_i(c)\), because both coordinate vectors are even.
Writing \(\varepsilon_1\triangleq d_a+\ell_a\) and
\(\varepsilon_2\triangleq d_b+\ell_b\), we obtain
\[
 \mu=\varepsilon_1\delta_1^2+\varepsilon_2\delta_2^2.
\]
Replacing \(\beta_i\) by \(\beta_i+\varepsilon_i\delta_i\) leaves \(W\)
unchanged and therefore lets us assume
\begin{equation}
 \delta_1\beta_1+\delta_2\beta_2=0.
\label{eq:lambda-product}
\end{equation}

The slope checks are encoded by the polynomial
\[
 P_\lambda(x)\triangleq
 \sum_{u<v}B_{uv}
 \left(x^{u+\lambda v}+x^{v+\lambda u}\right)
 \pmod{x^n-1}.
\]
After reducing exponents modulo \(n\), its coefficient at \(x^s\) is exactly
the check indexed by
\(T_s(\lambda)\), while its image in \(R_n\) is obtained by applying the
 slope-syndrome map
\[
 \Psi_\lambda:\bigwedge_{\F_2}^2R_n\longrightarrow R_n,
 \qquad
 \Psi_\lambda(c\wedge d)
 \triangleq c\sigma_\lambda(d)+\sigma_\lambda(c)d.
\]
This is the unique \(\F_2\)-linear map induced by the displayed alternating
bilinear expression.
Consequently the vanishing of all checks in \eqref{eq:lambda-T} gives
\begin{equation}
 \delta_1\sigma_\lambda(\beta_1)
 +\sigma_\lambda(\delta_1)\beta_1
 +\delta_2\sigma_\lambda(\beta_2)
 +\sigma_\lambda(\delta_2)\beta_2=0.
\label{eq:lambda-syndrome}
\end{equation}
Since the \(\delta_i\)'s are units, \eqref{eq:lambda-product} has the form
\[
 \beta_1=\delta_2y,
 \qquad
 \beta_2=\delta_1y
\]
for some \(y\in R_n\).  With \(\gamma\triangleq\delta_1/\delta_2\), substitution in
\eqref{eq:lambda-syndrome} yields \eqref{eq:lambda-factorization}.

At this point the neighborhood and diagonal checks have left only the coset
\(y+\F_2\), and the slope checks have reduced local recoverability to one
question: must \(y\) lie in \(\F_2\), so that \(W=0\) and hence \(B=0\)?

It remains to interpret the middle factor.  For each
\(k\in\mathbb Z_n^\times\), evaluate at the nontrivial \(n\)-th root
\(x=\zeta^k\).  The numerator is
\begin{align}
 D_k(a,b)
 &=
 \det
 \begin{pmatrix}
  1&1&1\\
  1&\zeta^{ka}&\zeta^{kb}\\
  1&\zeta^{k\lambda a}&\zeta^{k\lambda b}
 \end{pmatrix}
 \notag\\
 &\qquad =
 (1+\zeta^{ka})(1+\zeta^{k\lambda b})
 +(1+\zeta^{kb})(1+\zeta^{k\lambda a}).
\label{eq:arc-determinant}
\end{align}
Equivalently,
\[
 \left.
 \bigl(\gamma+\sigma_\lambda(\gamma)\bigr)
 \right|_{x=\zeta^k}
 =
 \frac{D_k(a,b)}
 {(1+\zeta^{kb})(1+\zeta^{k\lambda b})}.
\]
The three columns are the points of \eqref{eq:cyclic-arc} indexed by
\(0,ka,kb\).  If \(D_k(a,b)\ne0\) for every \(k\ne0\), then
\(\gamma+\sigma_\lambda(\gamma)\) is nonzero in every simple component of the reduced
algebra \(R_n\), and hence is a unit.  Equation~\eqref{eq:lambda-factorization}
therefore gives
\[
 \sigma_\lambda(y)=y.
\]

If \(\lambda\) is primitive modulo \(n\), multiplication by \(\lambda\)
has the two exponent orbits \(\{0\}\) and
\(\mathbb Z_n\setminus\{0\}\).  Let \(y\in R_n\) satisfy
\(\sigma_\lambda(y)=y\), and let \(\mathbf q\in\F_2^n\) be the unique
even-weight coordinate representative of this \(y\) under \(\vartheta\).
The injectivity in
\eqref{eq:lambda-theta} forces \(\mathbf q\) itself to be fixed by the exponent
permutation.  Here an exponent orbit is the set reached by repeatedly
multiplying an exponent by \(\lambda\), so \(\mathbf q\) is constant on each of
these two orbits.  Evenness
forces its coordinate at \(0\) to vanish.  The two possibilities map to
\(0\) and to \(\sum_{j\ne0}z_j=z_0=1\), respectively.  Hence
\begin{equation}
 R_n^{\sigma_\lambda}=\F_2.
\label{eq:lambda-fixed}
\end{equation}
Thus \(y\in\F_2\), and \eqref{eq:lambda-W} gives \(W=0\).  The
injectivity of \(\Omega\) implies \(B=0\).  Finally, every loop occurs in
exactly one diagonal check, so all loop symbols vanish.  This proves the
forward implication.

For the converse, we reverse the preceding reductions.  A zero determinant
will supply a nonconstant \(y\) satisfying the slope factorization; from it
we reconstruct a nonzero exterior class, an Eulerian ordinary-edge pattern,
and loop labels that pass every check.  Suppose that \(D_{k_0}(a,b)=0\) for some
\(k_0\in\mathbb Z_n^\times\).  Put
\[
 h\triangleq \gamma+\sigma_\lambda(\gamma),
\]
Define the \(\F_2\)-linear map
\[
 L_\lambda:R_n\longrightarrow R_n,
 \qquad
 L_\lambda(y)\triangleq y+\sigma_\lambda(y).
\]
The factor \(\delta_2\sigma_\lambda(\delta_2)\) is a unit, so
\eqref{eq:arc-determinant} shows that \(h\) vanishes in at least one simple
component of \(R_n\).  Thus
\[
 J\triangleq\operatorname{Ann}_{R_n}(h)
 \triangleq\{r\in R_n:rh=0\}
\]
is a nonzero ideal.  Let \(d_2\triangleq\operatorname{ord}_n(2)\).  Every
irreducible factor of \(M_n(x)\) has degree \(d_2\), so \(R_n\) is a product
of fields of degree \(d_2\) over \(\F_2\).  Since \(n\geq5\), we have
\(d_2\geq3\), and
\[
 \dim_{\F_2}J\geq d_2.
\]
By \eqref{eq:lambda-fixed},
\(\ker L_\lambda=R_n^{\sigma_\lambda}=\F_2\).  Hence
\(\operatorname{im}L_\lambda\) has codimension one in \(R_n\), and
\[
\dim_{\F_2}\bigl(J\cap\operatorname{im}L_\lambda\bigr)
\geq d_2-1>0.
\]
Choose
\[
 0\ne t\in J\cap\operatorname{im}L_\lambda
\]
and \(y\in R_n\) such that \(L_\lambda(y)=t\).  Then \(y\notin\F_2\) and
\(hL_\lambda(y)=0\).

Set
\[
 \beta_1\triangleq\delta_2y,
 \qquad
 \beta_2\triangleq\delta_1y,
 \qquad
 W_y\triangleq
 \delta_1\wedge\beta_1+\delta_2\wedge\beta_2.
\]
We claim that \(W_y\ne0\).  More generally,
\begin{equation}
 \ker\bigl(y\longmapsto W_y\bigr)=\F_2.
\label{eq:local-kernel-W}
\end{equation}
The inclusion from right to left is immediate.  Conversely, suppose
\(W_y=0\).  The elements \(\delta_1,\delta_2\) are linearly independent
over \(\F_2\), by the injectivity of \(\vartheta\) on the corresponding even
coordinate vectors.  Hence there are \(p,q,s\in\F_2\) such that
\[
 \beta_1=p\delta_1+q\delta_2,
 \qquad
 \beta_2=q\delta_1+s\delta_2.
\]
Dividing the first equality by \(\delta_2\) gives \(y=p\gamma+q\), and the
second then gives \(p\gamma^2=s\).  If \(p=1\), then \(\gamma^2\in\F_2\).  Since
\(\gamma\) is a unit and \(R_n\) is reduced, this forces \(\gamma=1\), contrary to
\(\delta_1\ne\delta_2\).  Therefore \(p=0\) and \(y=q\in\F_2\), proving
\eqref{eq:local-kernel-W}.  Our choice of \(y\) consequently gives
\(W_y\ne0\).

Let \(\mathbf w_i\) be the even coordinate representative of \(\beta_i\),
and put
\begin{align*}
 B_y&\triangleq C(u_1,\mathbf w_1)+C(u_2,\mathbf w_2),\\
 d_c&\triangleq\sum_{i=1}^2u_i(c)\mathbf w_i(c)
 \qquad(c\in I).
\end{align*}
Then \(\Omega(B_y)=W_y\), so \(B_y\) is a nonzero Eulerian pattern
supported on the ordinary edges meeting \(I\).  Assign the loop
\(\ell_c\triangleq d_c\) for \(c\in I\).  Since
\[
 \delta_1\beta_1+\delta_2\beta_2=0,
\]
the expansion in \eqref{eq:lambda-diagonal-expansion} gives
\[
 \sum_{p<q}(B_y)_{pq}z_{p+q}
 +\sum_{c\in I}d_cz_c^2=0
 \quad\text{in }R_n.
\]
At \(x=1\), the same diagonal-syndrome polynomial vanishes because
\[
 \sum_{p<q}(B_y)_{pq}=\sum_{c\in I}d_c.
\]
The Chinese-remainder decomposition therefore shows that all diagonal
checks vanish.

The identity \(hL_\lambda(y)=0\) and
\eqref{eq:lambda-factorization} make the slope syndrome vanish in \(R_n\).
Its value at \(x=1\) is zero because every ordinary edge contributes in
both orientations.  Thus all slope checks vanish as well, while the
neighborhood checks vanish because \(B_y\) is Eulerian.  The graph word with
loopless part \(B_y\) and the chosen loops is therefore nonzero and supported
on \(\St(I)\), so the erased triple is not uniquely recoverable.

Finally, \(L_\lambda(y)\) determines the coset \(y+\F_2\), and
\eqref{eq:local-kernel-W} shows that distinct elements of
\(J\cap\operatorname{im}L_\lambda\) give distinct cosets and therefore
distinct nonzero exterior classes \(W_y\), hence distinct local codewords.
After fixing a linear section of \(L_\lambda\) on its image, this
codeword construction is linear and injective on
\(J\cap\operatorname{im}L_\lambda\).  Hence
the local nullity is at least \(d_2-1\), completing the proof.
\end{proof}

\section{Rank of the Primitive-Slope Check Matrix}
\label{app:primitive-slope-rank}

\begin{proof}[Proof of Proposition~\ref{prop:primitive-slope-rank}]
Use the splitting field \(\widetilde K\) and primitive \(n\)-th root
\(\zeta\in\widetilde K\) fixed before
Lemma~\ref{lem:local-determinant-criterion}.
Consider a binary dependence among the \(3n\)
check rows:
\[
 \sum_{u\in\mathbb Z_n}\alpha(u)\mathbf S_u
 +\sum_{m\in\mathbb Z_n}d(m)\mathbf D_m
 +\sum_{s\in\mathbb Z_n}\tau(s)\mathbf T_s(\lambda)=0,
\]
where \(\mathbf S_u,\mathbf D_m,\mathbf T_s(\lambda)\in\F_2^{E_n}\)
are the incidence row vectors of the corresponding checks and
\(\alpha,d,\tau:\mathbb Z_n\to\F_2\) are their coefficient functions.
The unique
occurrence of each loop forces \(d(m)=0\) for every \(m\).  On a non-loop
edge \(\{u,v\}\), the remaining dependence is
\begin{equation}
 \alpha(u)+\alpha(v)
 +\tau(u+\lambda v)+\tau(v+\lambda u)=0.
\label{eq:lambda-row-dependence}
\end{equation}
The same equation holds formally for \(u=v\), because each term then
occurs twice.  For a function \(f:\mathbb Z_n\to\F_2\), define
\[
\widehat f(c)\triangleq\sum_{j\in\mathbb Z_n}f(j)\zeta^{cj}.
\]
For a function \(F\) of two indices, write
\[
 \widehat F(c,d)\triangleq
 \sum_{u,v\in\mathbb Z_n}F(u,v)\zeta^{cu+dv}.
\]
Since \(n\) is odd, \(n\cdot1_{\widetilde K}=1_{\widetilde K}\ne0\), and
\(x^n-1\) is separable.
Thus the Fourier matrix is invertible; in particular, vanishing of all
nonzero Fourier coefficients is equivalent to \(f\) being constant.
Multiply \eqref{eq:lambda-row-dependence} by \(\zeta^{cu+dv}\) and sum
over \(u,v\).  Taking \((c,d)=(c,0)\) with \(c\ne0\) gives
\(\widehat\alpha(c)=0\).  Taking \((c,d)=(c,\lambda c)\) then gives
\(\widehat\tau(c)=0\): the neighborhood terms have neither frequency zero,
and a second slope term could survive only if \(\lambda^2=1\).  That cannot
happen for a primitive \(\lambda\) when \(n\geq5\).  Hence
\(\alpha\) and \(\tau\) are constant.  Conversely, the sum of all
neighborhood rows and the sum of all slope rows are two independent
dependencies.  These are the only dependencies, so the row rank is
\(3n-2\), as claimed.
\end{proof}

\section{Proof of the Multi-Slope Theorem}
\label{app:multislope}

We prove the two assertions of Theorem~\ref{thm:multislope} separately.
For correction, the setup following the theorem organizes a codeword
supported on \(\rho\) failed stars around an exterior element \(w\).  The
remaining algebraic task is to show that the product and Frobenius--Moore
equations displayed there force \(w=0\).  We prove this implication first,
then return to the codeword to verify its hypotheses and eliminate the
loops.  The rank assertion is handled last by a separate Fourier count of
the row dependencies.

The same classical alternating-form mechanism underlies the next lemma.  Its
additional product equation is the affine form needed by the multi-slope
graph-code argument, so we give the exact direct proof used below.

\begin{lemma}[Frobenius descent]
\label{lem:frobenius-descent}
Let \(F\) be a field of characteristic two, and let \(s\geq1\) be an integer.
Let
\(\xi_1,\ldots,\xi_s\in F\) be linearly independent over \(\F_2\), and
suppose that \(\eta_1,\ldots,\eta_s\in F\) satisfy
\begin{align}
 \sum_{i=1}^s \xi_i\eta_i&=0,
 \label{eq:descent-product}\\
 \sum_{i=1}^s
 \left(\xi_i\eta_i^{2^k}+\xi_i^{2^k}\eta_i\right)&=0,
 &&1\le k\le s-1.
\label{eq:descent-moore}
\end{align}
Then
\begin{equation}
 \sum_{i=1}^s \xi_i\wedge\eta_i=0
 \qquad\text{in }\bigwedge_{\F_2}^2F.
\label{eq:descent-wedge}
\end{equation}
\end{lemma}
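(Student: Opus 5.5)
We reduce Lemma~\ref{lem:frobenius-descent} to the shifted descent of Lemma~\ref{lem:shifted-descent}. The affine product equation \eqref{eq:descent-product} plays the role of a ``\(k=0\)'' layer, and a single Artin--Schreier substitution converts it into the missing top Frobenius equation.

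The first step normalizes. Multiplying every \(\xi_i,\eta_i\) by a common nonzero scalar \(c\) scales \eqref{eq:descent-product} by \(c^2\) and the \(k\)-th equation of \eqref{eq:descent-moore} by \(c^{2^k+1}\). It also induces an invertible binary map on \(\bigwedge^2_{\F_2}F\). Hence we may assume \(\xi_s=1\).

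The case \(s=1\) is immediate: \(\xi_1\eta_1=0\) with \(\xi_1\neq0\) gives \(\eta_1=0\).

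For \(s\ge2\), I would mimic the induction step of Lemma~\ref{lem:shifted-descent}. Put \(A_i=\xi_i^2+\xi_i\) and \(B_i=\eta_i^2+\eta_i\) for \(i<s\). The \(A_i\) are \(\F_2\)-independent, exactly as in that proof. Write \(P=\sum_i\xi_i\eta_i\) and \(E_k=\sum_i(\xi_i\eta_i^{2^k}+\xi_i^{2^k}\eta_i)\). The key expansion identity should read
\[
 \sum_{i<s}\bigl(A_iB_i^{2^k}+A_i^{2^k}B_i\bigr)=E_{k+1}+E_k+E_k^2+E_{k-1}^2 ,
\]
this time with \(E_0\) interpreted through \(P\). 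Since \(\xi\eta^{1}+\xi\eta=0\), the naive \(E_0\) is zero, so I must redo the expansion for \(k=0\). There the left side is \(\sum_{i<s}2A_iB_i=0\), which is useless. I would therefore instead use the plain product \(\sum_{i<s}A_iB_i\). Expanding \((\xi^2+\xi)(\eta^2+\eta)=(\xi\eta)^2+\xi^2\eta+\xi\eta^2+\xi\eta\) and summing over all \(i\), with the \(i=s\) term equal to \(0\) because \(A_s=0\), gives
\[
 \sum_{i<s}A_iB_i=P^2+E_1+P=E_1 .
\]
So the reduced family \((A_i,B_i)_{i<s}\) satisfies the product equation \(\sum A_iB_i=0\) as soon as \(E_1=0\), and it satisfies the Moore equations for \(1\le k\le s-2\) via the displayed identity, which uses \(E_1,\dots,E_{s-1}=0\). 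The reduced hypotheses are thus exactly those of Lemma~\ref{lem:frobenius-descent} with \(s-1\), so induction gives \(\sum_{i<s}A_i\wedge B_i=0\).

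From there the argument copies the end of the proof of Lemma~\ref{lem:shifted-descent}. First, \(B_i=\sum_jc_{ij}A_j\) with \(C\) symmetric off the diagonal, so \(\eta_i=\sum_jc_{ij}\xi_j+\varepsilon_i\). Then \(E_1=0\) determines \(\eta_s^2+\eta_s=\sum_{i<s}\varepsilon_i(\xi_i^2+\xi_i)\), hence \(\eta_s=\sum_{i<s}\varepsilon_i\xi_i+\varepsilon_s\). Substitution then gives \eqref{eq:descent-wedge}.

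The main obstacle is the case \(s=2\): there \eqref{eq:descent-moore} contains only \(k=1\), and the base case of the reduced family must be the product form, \(A_1B_1=0\), giving \(B_1=0\). That is consistent with the reduced lemma at \(s=1\). It must be checked that \(E_1\) is used both in the reduction and in recovering \(\eta_s\). I also need to verify carefully that the expansion identity requires only \(E_1,\dots,E_{s-1}\) when \(k\le s-2\). I expect the delicate point to be the diagonal terms in the symmetric matrix \(C\). In particular, \(c_{ii}\) contributes \(\xi_i\xi_i^{2^k}\cdot2=0\) to \(E_k\) and \(\xi_i^2\) to \(P\), so the product equation may additionally force \(\sum c_{ii}\xi_i^2\)-type constraints; these must be shown not to obstruct the final wedge cancellation, which they do not affect since \(\xi_i\wedge\xi_i=0\).
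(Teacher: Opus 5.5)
Your argument is correct and uses the same inductive Artin--Schreier descent as the paper: normalize \(\xi_s=1\), apply the induction hypothesis to \(A_i=\xi_i^2+\xi_i\) and \(B_i=\eta_i^2+\eta_i\), extract the off-diagonal-symmetric matrix \(C\), and lift back to the \(\eta_i\). The paper differs only in bookkeeping: it first eliminates \(\eta_s=\sum_{i<s}\xi_i\eta_i\) via \eqref{eq:descent-product}, obtains the reduced equations by telescoping \(\sum_{i<s}(\xi_i^{2^k}+\xi_i)(\eta_i^{2^k}+\eta_i)=0\), and must show \(c_{ii}=0\) to recover \(\eta_s\). You instead keep \(\eta_s\), use \(\sum_{i<s}A_iB_i=P^2+E_1+P\) with the four-term identity from Lemma~\ref{lem:shifted-descent} (where \(E_0=0\) suffices), and recover \(\eta_s\) modulo \(\F_2\) from \(E_1=0\), so the diagonal of \(C\) never matters and the concerns in your last paragraph are harmless.
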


\begin{proof}
We induct on \(s\).  The case \(s=1\) follows from
\(\xi_1\eta_1=0\).  For \(s\ge2\), put \(c\triangleq\xi_s^{-1}\) and
replace every pair \((\xi_i,\eta_i)\) by
\((c\xi_i,c\eta_i)\).  The product equation is multiplied by \(c^2\),
and its \(k\)-th Frobenius--Moore equation by \(c^{1+2^k}\).  On the
exterior square this replacement applies the invertible map induced by
multiplication by \(c\) on both factors.  Thus all relevant vanishing
conditions are preserved, and we may assume \(\xi_s=1\).
Equation~\eqref{eq:descent-product} gives
\[
 \eta_s=\sum_{i<s}\xi_i\eta_i.
\]
Put \(L_k(x)\triangleq x^{2^k}+x\).  Substitution in
\eqref{eq:descent-moore} gives
\begin{equation}
 \sum_{i<s}L_k(\xi_i)L_k(\eta_i)=0,
 \qquad 1\le k\le s-1.
\label{eq:Lk-product}
\end{equation}
Let
\[
 A_i\triangleq L_1(\xi_i),\qquad B_i\triangleq L_1(\eta_i).
\]
The \(A_i\)'s are independent.  Indeed, a binary relation among them says
that \(x^2+x=0\) for a binary combination \(x\) of the \(\xi_i\)'s.  Thus
\(x\in\F_2\), and either value contradicts the independence of
\(\xi_1,\ldots,\xi_{s-1},1\) unless the relation is trivial.

For \(d\ge1\), define
\begin{align*}
 R_0&\triangleq\sum_{i<s}A_iB_i,\\
 R_d&\triangleq\sum_{i<s}
 \left(A_iB_i^{2^d}+A_i^{2^d}B_i\right).
\end{align*}
The telescoping identity
\[
 L_k(x)=\sum_{r=0}^{k-1}L_1(x)^{2^r}
\]
turns \eqref{eq:Lk-product} into
\begin{equation}
 \sum_{r=0}^{k-1}R_0^{2^r}
 +\sum_{d=1}^{k-1}\sum_{r=0}^{k-1-d}R_d^{2^r}=0.
\label{eq:R-triangle}
\end{equation}
For \(k=1\), \eqref{eq:R-triangle} gives \(R_0=0\).  Inductively, once
\(R_0,\ldots,R_{k-2}\) vanish, the equation for \(k\) leaves only
\(R_{k-1}=0\).  Thus
\[
 R_0=R_1=\cdots=R_{s-2}=0.
\]
The induction hypothesis applied to the \(s-1\) pairs \((A_i,B_i)\) now
gives
\[
 \sum_{i<s}A_i\wedge B_i=0.
\]
Extend the independent list \(A_1,\ldots,A_{s-1}\) to a basis of \(F\).
Comparing the coefficients of \(A_i\wedge e\), as \(e\) ranges over the
complementary basis vectors, shows that every \(B_i\) lies in
\(\Span_{\F_2}\{A_1,\ldots,A_{s-1}\}\).  Hence there is a matrix
\[
 C=(c_{ij})_{1\leq i,j<s}\in\F_2^{(s-1)\times(s-1)}
\]
such that
\[
 B_i=\sum_{j<s}c_{ij}A_j.
\]
Comparing exterior coefficients shows that \(c_{ij}=c_{ji}\) for
\(i\ne j\).  Moreover,
\[
 0=R_0=\sum_{i,j<s}c_{ij}A_iA_j
       =\sum_{i<s}c_{ii}A_i^2.
\]
Squaring is an injective \(\F_2\)-linear map on \(F\), so the \(A_i^2\)'s
are independent and \(c_{ii}=0\).  Consequently
\[
 \eta_i=\sum_{j<s}c_{ij}\xi_j+\varepsilon_i
 \qquad(\varepsilon_i\in\F_2),
\]
and the formula for \(\eta_s\), together with the symmetry and zero diagonal
of \(C\), gives
\[
 \eta_s=\sum_{i<s}\varepsilon_i\xi_i.
\]
Therefore
\begin{align*}
 \sum_{i=1}^s \xi_i\wedge\eta_i
 &=\sum_{i,j<s}c_{ij}\xi_i\wedge\xi_j
 +\sum_{i<s}\varepsilon_i\xi_i\wedge1
 \\
 &\quad
 +1\wedge\sum_{i<s}\varepsilon_i\xi_i
 =0.
\end{align*}
The first sum cancels by symmetry and the last two cancel in
characteristic two.
\end{proof}

\begin{proof}[Proof of the correction assertion in
Theorem~\ref{thm:multislope}]
We now return from Frobenius descent to the erased graph word and verify that
its neighborhood, slope, and diagonal syndromes provide exactly the lemma's
hypotheses.
Fix a codeword \(X\) and a failed set \(A\) as in the proof setup following
Theorem~\ref{thm:multislope}, and retain the notation
\[
 s,\ K,\ z_j,\ B,\ \Omega,\ \delta_i,\ w
\]
introduced there.  A pattern supported on fewer than \(\rho\) failed stars
may be regarded as one supported on a set of size \(\rho\).

Under the present primitive-root hypothesis, the reduced algebra \(R_n\)
used in Appendix~\ref{app:primitive-slope} is the field \(K\).  Let
\(\alpha\) be the residue class of \(x\) in \(K\), so that
\(z_j=\alpha^j\).  The cycle-space map used here is therefore the
specialization of the isomorphism proved in that appendix.  Moreover,
\(\delta_1,\ldots,\delta_s\) are linearly independent: a relation among
them would give a binary relation among the \(z_j\)'s supported on the
proper subset \(A\), whereas the only relation on all \(n\) indices is the
all-one relation recorded after \eqref{eq:lambda-theta}.

The neighborhood check at every
healthy vertex \(h\) says
\[
 \sum_{i=0}^s B_{a_i h}=0,
\]
so its contribution to \(\Omega(B)\) is
\[
 \sum_{i=1}^s B_{a_i h}\delta_i\wedge z_h.
\]
The internal edges of \(A\) also contribute to
\(\Span_{\F_2}\{\delta_1,\ldots,\delta_s\}\wedge K\).  Hence there exist
\(\beta_1,\ldots,\beta_s\in K\)
such that
\begin{equation}
 w=\sum_{i=1}^s\delta_i\wedge\beta_i.
\label{eq:w-beta}
\end{equation}

Use the alternating map \(\Phi_{k,K}\) defined in the multi-slope
discussion, with its ambient field specialized to \(K\).
The syndrome polynomial of the \(2^k\)-slope family is
\[
 P_k(x)\triangleq
 \sum_{u<v}B_{uv}
 \left(x^{u+2^kv}+x^{v+2^ku}\right)
 \pmod{x^n-1}.
\]
For a non-loop edge the two exponents are distinct: equality would give
\((1-2^k)(u-v)=0\) modulo \(n\), whereas \(u\ne v\) and
\(1\le k\le\rho-2<n-1\), while \(2\) has order \(n-1\).  Thus, for every
\(t\in\mathbb Z_n\), the coefficient of \(x^t\) is exactly the check indexed
by \(T_t^{(k)}\) in \eqref{eq:MS-T}.
All its coefficients vanish by \eqref{eq:MS-T}; evaluating at \(\alpha\)
therefore gives
\[
 \Phi_{k,K}(w)=0,
\qquad 1\le k\le s-1.
\]
By \eqref{eq:w-beta},
\begin{equation}
 \sum_{i=1}^s
 \left(\delta_i\beta_i^{2^k}
 +\delta_i^{2^k}\beta_i\right)=0,
 \qquad 1\le k\le s-1.
\label{eq:beta-moore}
\end{equation}

It remains to obtain the product equation required by
Lemma~\ref{lem:frobenius-descent}.  Let
\(\mathbf e_c\) denote the standard coordinate vector at \(c\), and set
\[
 u_i\triangleq \mathbf e_{a_i}+\mathbf e_{a_0},
\]
and let \(\mathbf w_i\) be the unique even-weight binary vector satisfying
\(\vartheta(\mathbf w_i)=\beta_i\).  For even vectors \(u,v\), use the loopless
symmetric labeling matrix \(C(u,v)\) defined in
Appendix~\ref{app:primitive-slope}.  It is Eulerian and
\[
\Omega(C(u,v))=\vartheta(u)\wedge\vartheta(v).
\]
The injectivity of \(\Omega\) therefore yields
\begin{equation}
 B=\sum_{i=1}^sC(u_i,\mathbf w_i).
\label{eq:B-decomposition}
\end{equation}
For \(a\in A\), set
\[
 q_a\triangleq\sum_{i=1}^su_i(a)\mathbf w_i(a),
 \qquad
 \mu\triangleq\sum_{i=1}^s\delta_i\beta_i.
\]
Expanding the field products in \(\mu\) and using
\eqref{eq:B-decomposition} gives
\begin{equation}
 \mu=
 \sum_{p<q}B_{pq}z_pz_q
 +\sum_{a\in A}q_az_a^2.
\label{eq:mu-expand}
\end{equation}

For \(a\in A\), let \(\ell_a\triangleq e_{a,a}\) be the loop label of the
codeword at \(a\).  The diagonal checks say
\[
 \sum_{p<q}B_{pq}x^{p+q}
 +\sum_{a\in A}\ell_ax^{2a}=0
 \quad\text{in }\F_2[x]/(x^n-1).
\]
Evaluation at \(\alpha\), followed by \eqref{eq:mu-expand}, gives
\[
 \mu=\sum_{a\in A}(q_a+\ell_a)z_a^2.
\]
Evaluation at \(x=1\) supplies the additional relation
\[
 \sum_{a\in A}(q_a+\ell_a)=0.
\]
Indeed, for even \(u_i,\mathbf w_i\),
\[
 \sum_{p<q}
 \left(u_i(p)\mathbf w_i(q)+u_i(q)\mathbf w_i(p)\right)
 =\sum_pu_i(p)\mathbf w_i(p).
\]
Writing \(\varepsilon_a\triangleq q_a+\ell_a\), we have
\[
 \varepsilon_{a_0}=\sum_{i=1}^s\varepsilon_{a_i}
\]
and therefore
\[
 \mu=\sum_{i=1}^s\varepsilon_{a_i}\delta_i^2.
\]
Replace
\[
 \beta_i'\triangleq\beta_i+\varepsilon_{a_i}\delta_i.
\]
This does not change \(w\), because
\(\delta_i\wedge\delta_i=0\), or any equation in
\eqref{eq:beta-moore}, because the two additional Frobenius terms are equal
and cancel in characteristic two.  Moreover,
\(\sum_i\delta_i\beta_i'=\mu+\sum_i\varepsilon_{a_i}\delta_i^2=0\), so it ensures
\[
 \sum_{i=1}^s\delta_i\beta_i'=0.
\]
Lemma~\ref{lem:frobenius-descent}, with
\(\xi_i=\delta_i\) and \(\eta_i=\beta_i'\), now gives \(w=0\).  The
injectivity of \(\Omega\) implies \(B=0\).  Finally, each remaining loop
occurs in exactly one diagonal check \(D_{2a}\), and multiplication by two
permutes \(\mathbb Z_n\); hence all loops vanish.
\end{proof}

\begin{proof}[Proof of the rank assertion in
Theorem~\ref{thm:multislope}]
Correction is now proved.  It remains only to count the independent check
rows; this is a distinct problem, for which Fourier coordinates make every
possible row dependence explicit.
Put
\[
 L\triangleq\rho-2,\qquad h\triangleq\frac{n-1}{2}.
\]
Use the parity-check matrix \(H_\rho^{\rm MS}\) defined in the body; it has
\(\rho n\) rows.  Consider a binary dependence among them.  Define
coefficient functions
\[
 a:\mathbb Z_n\longrightarrow\F_2,
 \qquad
 d:\mathbb Z_n\longrightarrow\F_2,
 \qquad
 c_k:\mathbb Z_n\longrightarrow\F_2
 \quad(1\leq k\leq L),
\]
where \(a(u)\), \(d(m)\), and \(c_k(t)\) are respectively the coefficients
of the rows \(S_u\), \(D_m\), and \(T_t^{(k)}\).  Every loop occurs in one
diagonal row and in no other row,
so
\[
 d(m)=0\qquad(m\in\mathbb Z_n).
\]
On a non-loop edge \(\{u,v\}\), the dependence condition becomes
\begin{equation}
 a(u)+a(v)
 +\sum_{k=1}^L
 \bigl(c_k(u+2^kv)+c_k(v+2^ku)\bigr)=0.
\label{eq:row-dependence}
\end{equation}
The same identity also holds formally when \(u=v\), since every summand
then occurs twice.

Use the Fourier-transform convention of
Appendix~\ref{app:primitive-slope-rank}, with the primitive root there
replaced by the element \(\alpha\in K\) fixed in the correction proof.
For a statement \(P\), let \(\iota(P)\in\F_2\) equal \(1\) when \(P\) is true
and \(0\) otherwise.  For \(r,t\in\mathbb Z_n\), the two-dimensional Fourier
transform of \eqref{eq:row-dependence} is
\begin{align}
0={}&
 \iota(t=0)\widehat a(r)
 +\iota(r=0)\widehat a(t)\notag\\*
&+\sum_{k=1}^L
 \left(
 \iota(t=2^kr)\widehat c_k(r)
 +\iota(r=2^kt)\widehat c_k(t)
 \right).
\label{eq:fourier-dependence}
\end{align}
For example,
\[
 \sum_{u,v\in\mathbb Z_n}c_k(u+2^kv)\alpha^{ru+tv}
 =
 \iota(t=2^kr)\widehat c_k(r).
\]
Here the usual factor \(n\) equals one in the characteristic-two field
because \(n\) is odd.
The transform is invertible by the argument in
Appendix~\ref{app:primitive-slope-rank}.

Taking \(r\ne0,t=0\) in \eqref{eq:fourier-dependence} shows that
\(\widehat a(r)=0\) for every nonzero \(r\), so \(a\) is constant.  Next,
fix \(k\le L\) and take \(r\ne0,t=2^kr\).  The exponent inverse to \(2^k\)
is \(2^{n-1-k}\).  In the stated range,
\[
 1\le k\le L\le h-1,
 \qquad
 n-1-k\ge h+1>L.
\]
Thus no inverse slope occurs in our list; the self-inverse slope
\(2^h=-1\) is absent as well.  The sole surviving term in
\eqref{eq:fourier-dependence} is \(\widehat c_k(r)\): another slope could
contribute only through the equal or inverse exponent, which the preceding
range excludes.  It must vanish.
Hence every \(c_k\) is constant.

Every row dependence is therefore specified by one constant for the
neighborhood family and one for each of the \(L\) higher-slope families.
Conversely, each such constant family is a dependence, since every
non-loop edge occurs twice in the sum of all rows of that family.  The row
dependency space has dimension
\[
 1+L=\rho-1,
\]
and hence
\[
 \rank H_\rho^{\rm MS}=\rho n-(\rho-1).
\]
Subtracting the Singleton redundancy in \eqref{eq:singleton} gives
\[
 \rho n-(\rho-1)
 -\left(\rho n-\binom\rho2\right)
 =\binom{\rho-1}{2}.
\]
\end{proof}

\section{A One-Bit Extension Criterion and Finite Obstructions}
\label{app:one-bit}

We ask whether the published slope-two triple-node code can be enlarged by
one dimension while preserving triple-node correction.  The criterion and
finite obstructions below concern extensions of that fixed code.  The
optimal codes in Proposition~\ref{prop:finite-completions} use a different
parity-check construction.

Assume that \(n\geq5\) is prime and that \(2\) is primitive modulo \(n\), and
set
\[
 r_0\triangleq3n-2.
\]
Let
\[
 H_0\in\F_2^{r_0\times N}
\]
be a full-row-rank parity-check matrix for the published triple-node code
\(\cC_3=\cC_3(2)\) from \eqref{eq:literal-specialization},
with columns indexed by \(E_n\), and put
\(\cC_0\triangleq\ker H_0\).
Here a one-dimensional extension means a supercode
\[
 \cC_0\subseteq\cC',
 \qquad
 \dim\cC'=\dim\cC_0+1,
\]
that still corrects every triple of failed vertices.  Equivalently, its
parity-check space is a codimension-one subspace of the row space of
\(H_0\).  For every failed triple
\(A\in\binom{[n]}3\), define
\[
 H_{0,A}\triangleq H_0|_{\St(A)}.
\]
This submatrix has
\(r_0-1\) columns and rank \(r_0-1\).  Let
\[
 \nu_A\in\F_2^{r_0}\setminus\{0\}
\]
be its unique left-null vector; equivalently,
\[
 \nu_A^{\mathsf T}H_{0,A}=0.
\]
For \(z,x\in\F_2^{r_0}\), write
\[
 z\mathbin{\cdot}x\triangleq\sum_{i=1}^{r_0}z_ix_i
\]
for the standard binary coordinate inner product.

\begin{proposition}[One-dimensional extension criterion]
\label{prop:one-word}
There is a correcting one-dimensional extension of the published code if
and only if the affine system
\begin{equation}
 z\mathbin{\cdot}\nu_A=1
 \qquad
 \text{for every }A\in\binom{[n]}3,
 \qquad z\in\F_2^{r_0},
\label{eq:affine-extension}
\end{equation}
is consistent.  In particular, a set
\(\mathcal T\subseteq\binom{[n]}3\) of odd cardinality satisfying
\begin{equation}
 \bigoplus_{A\in\mathcal T}\nu_A=0
\label{eq:odd-certificate}
\end{equation}
certifies that no such extension exists.
\end{proposition}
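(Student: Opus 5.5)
I will translate the extension question into a statement about the row space of \(H_0\).  A one-dimensional extension \(\cC'\supseteq\cC_0\) has a parity-check space that is a hyperplane in the row space of \(H_0\).  Since \(H_0\) has full row rank \(r_0\), every such hyperplane can be written uniquely as
\[
 P_z\triangleq\{x^{\mathsf T}H_0:\ x\in\F_2^{r_0},\ z\mathbin{\cdot}x=0\}
\]
for some nonzero \(z\in\F_2^{r_0}\).  Choose a full-row-rank matrix \(M_z\in\F_2^{(r_0-1)\times r_0}\) whose rows span \(z^\perp\).  Then \(M_zH_0\) is a parity-check matrix for \(\cC'\), with exactly \(r_0-1\) rows.

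By Lemma~\ref{lem:erased-column-criterion}, \(\cC'\) corrects a failed triple \(A\) if and only if the \(r_0-1\) columns of \(M_zH_{0,A}\) are independent.  Because \(H_{0,A}\) already has full column rank \(r_0-1\), this happens exactly when \(\ker M_z\cap\operatorname{im}H_{0,A}=\{0\}\).

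Next I identify these two subspaces.  The column space \(\operatorname{im}H_{0,A}\) is a hyperplane of \(\F_2^{r_0}\), and since \(\nu_A^{\mathsf T}H_{0,A}=0\) with \(\nu_A\neq0\), it equals \(\nu_A^\perp\).  The kernel \(\ker M_z\) is the line \((z^\perp)^\perp=\Span\{z\}\).  So the condition says that the nonzero vector \(z\) does not lie in \(\nu_A^\perp\), that is, \(z\mathbin{\cdot}\nu_A=1\).

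Taking this over all triples \(A\), a correcting extension exists if and only if some \(z\) satisfies \eqref{eq:affine-extension}.  Any solution is automatically nonzero, because \(0\mathbin{\cdot}\nu_A=0\neq1\).  Conversely, every solution \(z\) defines the hyperplane \(P_z\) and hence an extension.  This proves the equivalence.

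For the certificate, suppose \(\mathcal T\) has odd cardinality and satisfies \eqref{eq:odd-certificate}.  If some \(z\) solved the system, summing the equations \(z\mathbin{\cdot}\nu_A=1\) over \(A\in\mathcal T\) would give
\[
 z\mathbin{\cdot}\bigoplus_{A\in\mathcal T}\nu_A=|\mathcal T|\bmod 2=1.
\]
The left side is \(z\mathbin{\cdot}0=0\), a contradiction.  Hence no extension exists.

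The only step needing care is the bookkeeping that \(\ker M_z=\Span\{z\}\) and \(\operatorname{im}H_{0,A}=\nu_A^\perp\).  The latter uses that \(H_{0,A}\) has rank exactly \(r_0-1\), which holds because \(\cC_3\) corrects triples with redundancy \(3n-2\) and \(|\St(A)|=3n-3\).  Everything else is linear algebra.
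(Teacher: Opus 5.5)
Your proof is correct and follows the paper's argument. Both parametrize the extensions by the normal vector \(z\) of a hyperplane in the row-coefficient space \(\F_2^{r_0}\), reduce triple recovery to \(z\mathbin{\cdot}\nu_A=1\), and use the same parity argument for the odd certificate; the only difference is that the paper checks injectivity of \(x\mapsto x^{\mathsf T}H_{0,A}\) on the hyperplane (\(\nu_A\notin W\)), whereas you check the dual column-side condition \(\Span\{z\}\cap\operatorname{im}H_{0,A}=\{0\}\), where \(\operatorname{im}H_{0,A}=\nu_A^\perp\).
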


\begin{proof}
A one-dimensional extension corresponds to a hyperplane \(W\) in the row
coefficient space \(\F_2^{r_0}\) of \(H_0\).  It produces the new
parity-check space
\[
 \mathcal R_W\triangleq\{x^{\mathsf T}H_0:x\in W\}.
\]
For a failed triple \(A\), define the restriction map
\[
 \phi_A:\F_2^{r_0}\longrightarrow\F_2^{r_0-1},
 \qquad
 \phi_A(x)\triangleq x^{\mathsf T}H_{0,A}.
\]
The full column rank of \(H_{0,A}\) gives
\(\ker\phi_A=\Span\{\nu_A\}\).  Since both \(W\) and the codomain have
dimension \(r_0-1\), the new checks recover the erased triple exactly when
\(\phi_A|_W\) is invertible, equivalently when \(\nu_A\notin W\).  Write the
hyperplane using a nonzero normal vector \(z\in\F_2^{r_0}\) as
\[
 W\triangleq\{x:z\mathbin{\cdot}x=0\}.
\]
Then \(\nu_A\notin W\) is exactly
\(z\mathbin{\cdot}\nu_A=1\), giving \eqref{eq:affine-extension}.  Taking the
inner product of \(z\) with
\eqref{eq:odd-certificate} for an odd family gives the contradiction
\(0=1\).
\end{proof}

Exact odd-dependency certificates were produced and independently verified
for the admissible sample lengths
\[
 n=5,11,13,19.
\]
The complete certificates and their independent exact verification are
archived in the verification package
\cite{ZabokritskiyYohananov2026Verification}.
For example, at \(n=13\), the eleven triples
\[
 \{0,1,k\},
 \qquad 2\le k\le12,
\]
already satisfy \eqref{eq:odd-certificate}.  Thus simply removing one
independent check from the published check space cannot solve the
missing-bit problem at these lengths.

\section{Dual Weight Data and the Failure of a Univariate MacWilliams Transform}
\label{app:macwilliams}

By \cite[Thm.~1]{PinheiroMachadoFirer2019}, a combinatorial metric admits a
MacWilliams-type identity if and only if its irredundant cover is a partition
into equal-sized blocks.  For \(n\geq2\), the node-star cover is irredundant,
because star \(i\) uniquely contains the loop \(e_{i,i}\), while distinct stars
overlap on ordinary-edge coordinates.  Hence the univariate node-weight
enumerator of a linear code cannot determine the corresponding enumerator of
its dual.  The following two-vertex witness fixes our ambient and duality
conventions.

\paragraph{A two-vertex witness.}
In the full three-coordinate looped graph space on two vertices over
\(\F_2\), for \(0\leq i\leq j\leq1\), let
\(\mathbf u_{i,j}\in\cG_2(2)\) denote the graph word whose only nonzero label
is \(e_{i,j}=1\), and let
\[
 \mathcal D_1\triangleq\langle\mathbf u_{0,0}\rangle,
 \qquad
 \mathcal D_2\triangleq\langle\mathbf u_{0,1}\rangle.
\]
Both codes have node-weight enumerator
\[
 W_{\mathcal D_1}^{\rm N}(z)=W_{\mathcal D_2}^{\rm N}(z)=1+z.
\]
Their edge-coordinate duals are respectively specified by
\[
 y_{00}=0
 \qquad\text{and}\qquad
 y_{01}=0.
\]
Their node-weight enumerators are
\[
 W_{\mathcal D_1^\perp}^{\rm N}(z)=1+3z,
 \qquad
 W_{\mathcal D_2^\perp}^{\rm N}(z)=1+2z+z^2.
\]
Thus equal primal enumerators can have unequal dual enumerators.

There is nevertheless an exact Fourier identity.  Fix a nontrivial
additive character \(\chi\) of \(\F_q\).  For every integer
\(0\leq w\leq n\), define the function
\[
 \mathcal K_w:\cG_q(n)\longrightarrow\mathbb C,
 \qquad
 \mathcal K_w(y)\triangleq
 \sum_{\substack{x\in\cG_q(n)\\\wtN(x)=w}}
 \chi\bigl(\langle x,y\rangle_E\bigr).
\]
Character orthogonality gives, for every linear code,
\begin{equation}
 A_w^{\rm N}(\cC)
 =
 \frac{1}{|\cC^\perp|}
 \sum_{y\in\cC^\perp}\mathcal K_w(y).
\label{eq:fourier-weight}
\end{equation}
The preceding witness says precisely that
\(\mathcal K_w(y)\) is not a function of \(\wtN(y)\) alone.  A useful dual theory
must therefore refine node weight by additional intersection data, for
example the orbit type of the support under vertex relabeling.

\end{appendices}

\section*{Acknowledgment}

The author used OpenAI's ChatGPT and Codex as assistive tools in developing
and revising portions of the abstract, introduction, definitions,
construction and metric sections, discussion, and appendices, and in
developing the accompanying verification code.  The systems were used to
generate candidate mathematical arguments, formulations, proof checks, and
code.  Their outputs were treated as suggestions and were critically
reviewed, corrected, and edited by the author.  The author assumes full
responsibility for all statements, proofs, computations, citations, and code
in this article.

\setlength{\bibsep}{0.7em}

\end{document}